\newtheorem{thm}{Theorem}
\newtheorem*{thm*}{Theorem}
\newtheorem{prop}[thm]{Proposition}
\newtheorem*{prop*}{Proposition}
\newtheorem{lemma}[thm]{Lemma}
\newtheorem*{lemma*}{Lemma}
\newtheorem{cor}[thm]{Corollary}
\newtheorem*{cor*}{Corollary}
\newtheorem*{ex*}{Example}
\newtheorem*{cj*}{Conjecture}
\newtheorem*{Def*}{Definition}
\theoremstyle{definition}
\newtheorem*{rem}{Remark}
\newcommand{\bq}{\begin{equation*}}
\newcommand{\be}{\begin{equation}}
\newcommand{\eq}{\end{equation*}}
\newcommand{\ee}{\end{equation}}
\newcommand{\lmatrix}{\left(\begin{smallmatrix}}
\newcommand{\rmatrix}{\end{smallmatrix}\right)}
\newcommand{\Tr}{\operatorname{Tr}}
\newcommand{\rk}{\operatorname{rank}}
\newcommand*{\coloneqq}{\mathrel{\vcenter{\baselineskip0.5ex \lineskiplimit0pt \hbox{\scriptsize.}\hbox{\scriptsize.}}} =}
\newcommand{\texteq}[1]{\stackrel{\mathclap{\scriptsize \mbox{#1}}}{=}}
\newcommand{\textleq}[1]{\stackrel{\mathclap{\scriptsize \mbox{#1}}}{\leq}}
\newcommand{\textgeq}[1]{\stackrel{\mathclap{\scriptsize \mbox{#1}}}{\geq}}
\begin{document}
%
\title{From log-determinant inequalities \\ to Gaussian entanglement via recoverability theory}
%
%
%

\author{Ludovico Lami, Christoph Hirche, Gerardo Adesso and Andreas Winter%
\thanks{Ludovico Lami, Christoph Hirche and Andreas Winter are with the
Departament de F\'{i}sica: Grup d'Informaci\'o Qu\`{a}ntica,
Universitat Aut\`{o}noma de Barcelona, ES-08193 Bellaterra (Barcelona), Spain.
Andreas Winter is furthermore with ICREA---Instituci\'o Catalana de
Recerca i Estudis Avan\c{c}ats, Pg. Lluis Companys, 23, ES-08010 Barcelona, Spain.}%
\thanks{Gerardo Adesso is with the Centre for the Mathematics and Theoretical Physics
of Quantum Non-Equilibrium Systems, School of Mathematical Sciences,
The University of Nottingham, University Park, Nottingham NG7 2RD, United Kingdom.}%
\thanks{LL, CH and AW acknowledge support from the Spanish MINECO, projects
FIS2013-40627-P and FIS2016-86681-P, with the support of FEDER funds;
from the Generalitat de Catalunya, CIRIT project 2014-SGR-966; and
from the European Research Council, Advanced Grant IRQUAT (2010-AdG-267386).
CH in addition is supported by FPI scholarship no.~BES-2014-068888.
GA acknowledges support from The European Research Council, Starting Grant
GQCOP (grant no.~637352) and the Foundational Questions Institute (FQXi)
``Physics of the Observer Programme'' (grant no.~FQXi-RFP-1601)}.%
\thanks{Copyright (c) 2017 IEEE}}

\maketitle

\begin{abstract}
Many determinantal inequalities for positive definite block matrices are consequences of general entropy inequalities, specialised to Gaussian distributed vectors with prescribed covariances.
In particular, strong subadditivity (SSA) yields
\begin{equation*}
\ln\det V_{AC} + \ln\det V_{BC} - \ln\det V_{ABC} - \ln\det V_C \geq 0
\end{equation*}
for all $3\times 3$-block matrices $V_{ABC}$, where subscripts identify principal submatrices. We shall refer to the above inequality as SSA of log-det entropy. In this paper we develop further insights on the properties of the above inequality and its applications to classical and quantum information theory.

In the first part of the paper, we show how to find known and new necessary and sufficient conditions under which saturation with equality occurs. Subsequently, we discuss the role of the classical transpose channel (also known as Petz recovery map) in this problem and find its action explicitly. We then prove some extensions of the saturation theorem, by finding faithful lower bounds on a log-det conditional mutual information.

In the second part, we focus on quantum Gaussian states, whose covariance matrices are not only positive but obey additional constraints due to the uncertainty relation. For Gaussian states, the log-det entropy is equivalent to the R\'enyi entropy of order $2$. We provide a strengthening of log-det SSA for quantum covariance matrices that involves the so-called Gaussian \mbox{R\'enyi-$2$} entanglement of formation, a well-behaved entanglement measure defined via a Gaussian convex roof construction. We then employ this result to define a log-det entropy equivalent of the squashed entanglement measure, which is remarkably shown to coincide with the Gaussian \mbox{R\'enyi-$2$} entanglement of formation. This allows us to establish useful properties of such measure(s), like monogamy, faithfulness, and additivity on Gaussian states.
\end{abstract}


%
\IEEEpeerreviewmaketitle

\section{Introduction}
%
%
%
%

\label{sec:intro}

The idea of using information theoretical reasoning to prove
determinantal inequalities for positive definite matrices has been the
subject of growing interest in the last decades (see e.g.~the reviews
given in~\cite{Cover,Dembo}). The key of the above correspondence is to associate, to each positive
matrix\footnote{In this paper we consider only real matrices since they are more relevant for the applications we are interested in, but all the results we find apply also to the Hermitian case with minor modifications. } $A\in\mathcal{M}_n(\mathds{R})$, an $n$-dimensional Gaussian random variable
$X\in\mathds{R}^n$ with mean $0$ and variance (aka covariance matrix)
$\operatorname{Var} X = \mathds{E}\  X X^\intercal = A$.
The density of $X$ is given by
\begin{equation}
  p_A(x) = \frac{e^{-\frac12 x^\intercal A^{-1} x}}{\sqrt{(2\pi)^n\det A}} .
  \label{gauss}
\end{equation}
This has the nice feature that for two independent Gaussian random
variables $X$ and $Y$ with mean $0$ and covariance matrices $A$ and $B$, respectively,
the sum $A+B$ is the covariance matrix of $X+Y$.

Under the density~\eqref{gauss}, the differential entropy
$h(X) \coloneqq -\int d^n x\, p_A(x) \ln p_A(x)$ of~\eqref{gauss} takes the form
\begin{equation}
  h(X) = \frac{1}{2}\ln\det A + \frac{n}{2}\left(\ln 2\pi + 1 \right) ,
  \label{ent}
\end{equation}
while the relative entropy
$D(p_A\|p_B) \coloneqq \int d^n x\, p_A(x) \ln \frac{p_A(x)}{p_B(x)}$ is given by
\begin{equation}
  D(p_A\|p_B) = \frac{1}{2} \ln\frac{\det B}{\det A} + \frac{1}{2} \Tr(B^{-1}\! A) - \frac{n}{2} .
  \label{rel ent}
\end{equation}
Here and in the remainder of the paper we denote by $\ln$ the natural logarithm. The positivity of~\eqref{rel ent} as a function of the matrices $A$ and $B$ can be
seen as an instance of Klein's inequality applied to the natural logarithm~\cite{Wehrl}.

In this picture, general inequalities involving entropies can be turned into
inequalities involving determinants thanks to~\eqref{ent} and~\eqref{rel ent}.
A prominent example of the usefulness of this approach is constituted by
\emph{strong subadditivity} (SSA), the basic ``Shannon-type'' entropy
inequality~\cite{Yeung}.
Consider a Gaussian distributed vector
$X_{ABC} = (X_A,X_B,X_C)^\intercal \in \mathds{R}^{n_A+n_B+n_C}$
with covariance matrix $V_{ABC}$:
\begin{equation}
  V_{ABC} = \begin{pmatrix} A & X & Y \\
                            X^{\intercal} & B & Z \\
                            Y^{\intercal} & Z^{\intercal} & C \end{pmatrix}
    \geq 0 .
  \label{global CM}
\end{equation}
The SSA inequality $I(X_{A}:X_{B}|X_{C}) \geq 0$ then reads
\begin{equation}
  \ln\det V_{AC} + \ln\det V_{BC} - \ln\det V_{ABC} - \ln\det V_C \geq 0 ,
  \label{SSA}
\end{equation}
where the local reductions $V_{AC}$, $V_{BC}$ and $V_{C}$ are the principal
submatrices of $V_{ABC}$ corresponding to the components $AC$, $BC$ and $C$,
respectively:
\begin{equation}
  V_{AC} = \begin{pmatrix} A & Y \\ Y^{\intercal} & C \end{pmatrix}, \quad
  V_{BC} = \begin{pmatrix} B & Z \\ Z^{\intercal} & C \end{pmatrix}, \quad
  V_C = C.
  \label{eq:reductions}
\end{equation}
Let us observe that since~\eqref{SSA} is balanced, the
contribution of the inhomogeneous second terms of~\eqref{ent} cancel out.

Inequality~\eqref{SSA} was considered in~\cite{Ando09} (see also~\cite[Sec.~4.5]{Petz book}), although it has been known long before under the name of Hadamard-Fisher inequality.
From the point of view of matrix analysis,
\eqref{SSA} lends itself to straightforward generalisations. In fact,
inequalities of the same form have recently been investigated. In
particular, the problem of determining all the continuous functions
$f:\mathds{R}_{+}\rightarrow\mathds{R}$ such that
for all block matrices $V_{ABC} \geq 0$,
\begin{equation}
  \Tr f(V_{AC}) + \Tr f(V_{BC}) - \Tr f(V_{ABC}) - \Tr f(C) \geq 0 ,
  \label{SSA gen}
\end{equation}
was considered in full generality in~\cite{Audenaert10}, where a sufficient
condition was found:~\eqref{SSA gen} holds as soon as $-f'$ is matrix monotone.
Later on, it was shown that this condition is also necessary~\cite{Lewin14}.
By virtue of L\"owner's theorem characterising matrix monotone
functions~\cite{Bhatia analysis}, this yields an explicit
characterisation of all the functions $f$ obeying~\eqref{SSA gen}. Here we
are mainly concerned with the particular choice $f(x)=\ln x$, that
turns~\eqref{SSA gen} into~\eqref{SSA}.
Incidentally, the differential R\'enyi-$\alpha$ entropy of a Gaussian random variable $X$ with density
$p_A(x)$, i.e.~$H_\alpha(X) \coloneqq \frac{1}{1-\alpha}\ln \int d^n x\, p_A(x)^\alpha$,
is given by
\begin{equation*}
  h_\alpha(X) = \frac{1}{2} \ln\det A + \frac{n}{2}\left( \ln 2\pi + \frac{1}{\alpha-1}\ln\alpha \right),
\end{equation*}
showing that all the differential R\'enyi entropies of Gaussian random vectors are
essentially equivalent to the differential Shannon entropy, up to a characteristic
universal additive offset.
In view of this and the above remarks, we are motivated,
given a vector valued random variable $X$ with covariance
matrix $V$, to refer from now on to the quantity
\begin{equation}
  \label{logdetent}
  M(X) \coloneqq M(V) \coloneqq \frac12 \ln \det V ,
\end{equation}
as the {\it log-det entropy} of $V$.
Likewise, for a bipartite covariance matrix $V_{AB} > 0$ we refer to
\begin{equation}\begin{split}
  I_{M}(A:B)_{V} &\coloneqq \frac12 \ln\frac{\det V_{A}\det V_{B}}{\det V_{AB}} \\
                   &=      M(V_{A})+M(V_{B})-M(V_{AB}) ,
  \label{logdetMI}
\end{split}\end{equation}
as the {\em log-det mutual information}, and for a tripartite covariance matrix $V_{ABC}>0$ we refer to
\begin{equation}\begin{split}
  I_{M}(A:B|C)_{V} &\coloneqq \frac12 \ln\frac{\det V_{AC}\det V_{BC}}{\det V_{C}\det V_{ABC}} \\
                   &=      M(V_{AC})+M(V_{BC}) \\
                   &\quad -M(V_{ABC})-M(V_{C}) ,
  \label{I_2-matrix}
\end{split}\end{equation}
as the {\em log-det conditional mutual information}.


Every (balanced) entropic inequality thus yields a corresponding log-determinant
inequality for positive block matrices~\cite{Chan-balanced}. Thanks
to the work of Zhang and Yeung~\cite{ZhangYeung}
and followers~\cite{Dougherty,Matus}, infinitely many
independent such inequalities, so-called ``non-Shannon-type inequalities'',
are known by now.
The question of what are the precise
constraints on the determinants of the $2^n$ principal
submatrices of a positive matrix of size $n\times n$ has been raised much earlier,
either directly in a matrix setting~\cite{JohnsonBarrett}
or more recently in the guise of the balanced entropy inequalities of Gaussian
random variables (both real valued or vector valued)~\cite{Hassibi,Shadbakht}.
Remarkably, the latter papers show that while the entropy region of three Gaussian
real random variables is convex but not a cone, the entropy region of three Gaussian
random vectors is a convex cone and that the \emph{linear} log-det inequalities
for three Gaussian random variables (and equivalently Gaussian random vectors)
are the same as the inequalities for the differential entropy of any three
variables -- which in turn coincide with the Shannon inequalities,
cf.~\cite{Yeung,Chan-balanced}.
It is conjectured that the same identity between Gaussian vector inequalities
and general differential inequalities holds for any number parties.

In the present paper, we will focus on a deeper investigation of the
SSA inequality~\eqref{SSA}. Our analysis rests crucially on the connection between Gaussian random variables and positive definite matrices we have outlined here, which allows us to use tools taken from matrix analysis~\cite{Bhatia positive} to explore properties of the log-det conditional mutual information~\eqref{I_2-matrix}.
This route has been already undertaken in our recent work~\cite{Lami16}, in which we have
shown that the inequality~\eqref{SSA} can be strengthened significantly to the following
matrix inequality (with respect to the semidefinite, or L\"owner,
ordering on symmetric matrices):
\begin{equation}
  V_{ABC} / V_{BC} \leq V_{AC} / V_C ,
  \label{INEQ 1}
\end{equation}
using the powerful concept of \emph{Schur complement} of a $2\times 2$-block matrix
$V = \lmatrix A & X \\ X^\intercal & B \rmatrix$ with respect to the
principal minor $A$, denoted as
\begin{equation}
\label{Schur definition}V/A\coloneqq B - X^{\intercal}A^{-1} X .
\end{equation}
We will go into more detail about the properties of the Schur complement in the next section.

Our concrete interest in~\eqref{SSA} is partly motivated by its applications
in quantum information theory with continuous
variables~\cite{Adesso14}, as first explored in~\cite{Adesso12,Gross}.
Every continuous variable quantum state $\rho$ of $n$ modes, subject to mild regularity
conditions, has a $2n\times 2n$-covariance matrix $V$ of the phase space variables.
By slight abuse of terminology, we shall call $M(V) = \frac12 \ln\det V$
the {\it log-det entropy} of $\rho$, and denote it equivalently as $M(\rho)$,
\begin{equation}
  M(\rho) \coloneqq M(V) \coloneqq \frac12 \ln \det V.
\end{equation}
Analogously, quantities like the log-det conditional mutual information can be defined for an arbitrary (sufficiently regular) state via its covariance matrix, i.e.
\begin{equation}
\begin{split}
  I_{M}(A:B|C)_{\rho} &\coloneqq I_{M}(A:B|C)_{V} \\
  &= \frac12 \ln\frac{\det V_{AC}\det V_{BC}}{\det V_{C}\det V_{ABC}}
  \label{I_2}
\end{split}
\end{equation}
where $\rho_{ABC}$ is a tripartite state and $V_{ABC}$ its covariance matrix. Thus, by construction the log-det conditional mutual information quantifies correlations encoded in the
second moments of the state. Observe how the above combination of log-det entropies
mimics that appearing in the celebrated SSA of the quantum von Neumann entropy
$S$~\cite{Robinson67,Lanford68,Lieb73},
\begin{equation}
  \label{SSSA}
  S(\rho_{AC})+S(\rho_{BC})-S(\rho_{C})-S(\rho_{ABC}) \geq 0 ,
\end{equation}
which is nowadays widely regarded as one of the cornerstones upon which quantum information theory is built~\cite{Nielsen}.

Remarkably, in the particular case of interest in which $\rho$ is a quantum Gaussian
state, that is, a state with Gaussian phase space Wigner function~\cite{Adesso14},
the log-det entropy reduces to the quantum \mbox{R\'enyi-$2$} entropy $S_2$ of $\rho$,
\begin{equation}
  S_{2}(\rho) \coloneqq -\ln \Tr \rho^{2} = \frac12 \ln \det V = M(V) .
\end{equation}
Therefore, in the relevant case of tripartite quantum Gaussian states, the
general inequality~\eqref{SSA} for log-det entropy takes the form of a
SSA inequality for the \mbox{R\'enyi-$2$} entropy~\cite{Adesso12,Gross,Adesso}, holding in addition to the standard
one for R\'enyi-1 entropy aka von Neumann entropy, which is valid for
arbitrary (Gaussian or not) tripartite quantum states.

The usefulness of inequalities like~\eqref{SSA} in quantum optics and
quantum information was acknowledged in a series of recent papers.
In~\cite{Adesso} (see also~\cite{Kor}) it was proven that an alternative (non-balanced) formulation
of~\eqref{SSA}, obtained via a conventional purification procedure, leads
to a remarkable limitation on the quantum steerability of tripartite
states via Gaussian measurements. Namely, it is not possible for a
single-mode system to be steered simultaneously by two multimode parties
via Gaussian measurements.
As one could expect, operator
inequalities like~\eqref{INEQ 1} have even stronger implications for
quantum correlations in tripartite systems, leading for instance to a
fundamental monogamy constraint on the \mbox{R\'enyi-$2$} Gaussian entanglement
of formation~\cite{Lami16}.

\medskip
The rest of the present paper is structured as follows. In Section~\ref{sec exact recov}
we derive various characterisations of the case of saturation of SSA~\eqref{SSA} with
equality.
Then, in Section~\ref{sec:Gaussian-recov} we turn to the case of near-saturation,
which leads to the theory of recovery maps; in Section~\ref{sec:I_M-lower-bound}
we exploit those results to derive simple and faithful lower bounds on
the log-det conditional mutual information. Up to that point, all results
hold for general covariance matrices $V>0$.
After that, in Section~\ref{sec:Renyi-2-Gaussian-squashed} we turn our
attention to quantum Gaussian states and their phase space covariance
matrices, which need to satisfy additional constraints stemming from the uncertainty principle and the canonical commutation relations. There, we introduce a measure of entanglement for quantum Gaussian states
based on the log-det conditional mutual information defined in~\eqref{I_2}
and prove its faithfulness and additivity. Quite remarkably, we show that the measure coincides with the \mbox{R\'enyi-$2$} Gaussian entanglement of formation introduced in~\cite{Adesso12}, equipping the latter with an interesting operational interpretation in the context of recoverability.
We conclude in Section~\ref{sec:conchi} with a number of open questions.

\section{Mathematical tools: Schur complement and geometric mean}

Two of the elementary tools we will use in the remainder of this paper are the Schur complement and the geometric mean between positive definite matrices. In this section we will state some useful properties and observations.

Let's start with the Schur complement~\cite{Schur}. First we recall its definition: given a $2\times 2$-block matrix
$V = \lmatrix A & X \\ X^\intercal & B \rmatrix$, the complement with respect to the
principal minor $A$ is given by $V/A$ as defined in~\eqref{Schur definition}.

Its significance relies on the (elementary) fact that $V$ as a quadratic form is congruent to
$S^\intercal V S = A \oplus V/A$, via the unideterminantal transformation
$S = \lmatrix \mathds{1} & -A^{-1}X \\ 0 & \mathds{1} \rmatrix$. From this
the factorisation formula
\begin{equation}
  \det V = (\det A)(\det V/A)
  \label{det factor}
\end{equation}
follows, which shows how~\eqref{INEQ 1} implies the SSA inequality~\eqref{SSA}.
From a point of view of linear algebra, Schur complements arise naturally when one wants to express the inverse of a block matrix in a compact form. Namely, for a matrix $V$ partitioned as above one can prove the useful formula~\cite{SCbook}
\begin{equation}
  V^{-1} = \lmatrix A^{-1} + A^{-1} X (V/A)^{-1} X^\intercal  A^{-1} & -A^{-1} X (V/A)^{-1} \\[0.7ex]
                                            -(V/A)^{-1} X^\intercal  A^{-1} & (V/A)^{-1} \rmatrix .
  \label{inv}
\end{equation}
Naturally, an analogous expression holds with $A$ and $B$ interchanged.
Incidentally, from this latter fact many useful matrix identities can be easily derived.

Schur complements of positive definite matrices enjoy numerous other
useful relations. First of all, the positivity condition itself can be expressed in terms of Schur
complements as
\begin{equation}
V = \begin{pmatrix} A & X \\ X^\intercal & B \end{pmatrix} > 0\quad \Longleftrightarrow\quad A>0\ \text{and}\ V/A>0 .
\label{Schur pos}
\end{equation}
From this the variational representation
\begin{equation}
V/A = \max \big\{ \tilde{B}: V \geq 0 \oplus \tilde{B} \big\} ,
\label{Schur var}
\end{equation}
follows easily. The meaning of~\eqref{Schur var} is that the matrix set on the right hand side has a unique maximal element with respect to the L\"owner partial order (a nontrivial fact in itself) and that this maximum coincides with the left hand side.
Another useful property is the additivity of ranks under Schur complements:
\begin{equation}
\rk V = \rk A + \rk (V/A) .
\label{rank add}
\end{equation}
We shall make use of these properties in the sequel. For more details on Schur complements and applications thereof in matrix analysis
and beyond we refer the reader to the book~\cite{SCbook}.

Another fundamental tool we shall take from matrix analysis is the concept of \emph{geometric mean} between two positive definite matrices $A,B>0$, usually denoted by $A\# B$~\cite{geom original, Ando79}. As done in~\eqref{Schur var} for the Schur complement, also the geometric mean is most conveniently defined using a variational approach. Namely, one has
\begin{equation}
A\# B \coloneqq \max\{X=X^{\intercal}:\, A\geq XB^{-1} X\}\, , \label{geom var}
\end{equation}
the maximum being taken with respect to the semidefinite order. From~\eqref{geom var} it is apparent, how $A\# B$ is covariant with respect to matrix congruence, i.e.
\begin{equation}
\left(SAS^\intercal \right)\# \left( SBS^\intercal \right) = S (A\# B) S^\intercal
\label{geom cov congr}
\end{equation}
for all invertible $S$. Moreover, through standard algebraic manipulations it is possible to write the explicit solution of~\eqref{geom var} as
\begin{equation}
A\# B = A^{1/2} \left( A^{-1/2} B A^{-1/2} \right)^{1/2} A^{1/2} . \label{geom expl}
\end{equation}
An excellent introduction to the theory of matrix means can be found in~\cite[Chapter 4]{Bhatia positive}. Here, we limit ourselves to briefly discuss an interesting interpretation of the geometric mean. We can turn the manifold of positive definite matrices into a Riemannian manifold by introducing on the tangent space the metric $ds^2\coloneqq \Tr[(A^{-1} dA)^2]$ (sometimes called ``\emph{trace metric}"). It turns out the geodesic connecting two positive matrices $A$ and $B$ in this metric, parametrised by $t\in [0,1]$, is given by
\begin{equation}
\gamma(t) = A^{1/2} \left( A^{-1/2} B A^{-1/2} \right)^{t} A^{1/2} \eqqcolon A\#_t B,
\label{geom geod}
\end{equation}
sometimes called the \emph{weighted geometric mean}. From this we see in particular that $A\# B$ is nothing but the geodesic midpoint between $A$ and $B$. An easy consequence of the above expression is the determinantal identity
\begin{equation}
\det (A\#_t B) = (\det A)^{1-t} (\det B)^t .
\label{det geom}
\end{equation}
For more on this connection between geometric mean and Riemannian metric, see~\cite[Chapter 6]{Bhatia positive}.



\section{SSA saturation and exact recovery}
\label{sec exact recov}
Now we turn to studying the conditions under which~\eqref{SSA}
is saturated with equality. A necessary and sufficient condition was already found
in~\cite{Ando09} (for a comprehensive discussion, see~\cite{Petz book}),
but here we present new proofs as well as alternative formulations,
which may provide new insights.

Let us start by fixing our notation concerning classical Gaussian channels,
whose action can be described as follows. Denote the input random variable
by $X$, and consider an independent Gaussian variable $Z \sim P_K$, where
$P_{K}$ is a normal distribution with covariance matrix $K$ and zero mean.
Then the output variable $Y$ of the Gaussian channel $N$ is given by
$N(X) \coloneqq Y \coloneqq H X + Z$ for some matrix $H$ of appropriate size. At the level of covariance matrices this
translates to
\begin{equation}
  N : V \longmapsto V' = H V H^\intercal + K ,
  \label{N}
\end{equation}
where the only constraint to be obeyed is $K\geq 0$.

The following theorem gathers some notable facts concerning log-det conditional mutual information, and provides a neat example of how useful the interplay between matrix analysis
and information theory with Gaussian random variables can be. We are going to employ these results extensively throughout the paper, and some of them play an important role already in the proof of the main theorem of this section.

\begin{thm}
  \label{thm I cond geom}
  For all positive, tripartite matrices $V=V_{ABC}>0$, the following identities hold true:
  \begin{align}
    I_{M}(A:B|C)_{V} &= I_{M}(A:B)_{V_{ABC}/V_C} ,     \label{I cond Schur}\\
    I_{M}(A:B|C)_{V} &= I_{M}(A:B)_{V^{-1}} .                \label{I cond inv}
  \end{align}
  Furthermore, for all pairs of positive definite matrices $V_{AB},W_{AB}>0$, the log-det mutual information is convex on the geodesic connecting
  them as in~\eqref{geom geod}, i.e.
  \begin{equation}
  I_M(A:B)_{V\#_t W} \leq (1-t) I_M(A:B)_V + t I_M(A:B)_W . \label{I conv geod}
  \end{equation}
\end{thm}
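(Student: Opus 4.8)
The plan is to handle the two identities \eqref{I cond Schur} and \eqref{I cond inv} by determinantal bookkeeping based on the factorisation \eqref{det factor}, and the convexity \eqref{I conv geod} by separating it into one affine and two log-convex pieces.

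For \eqref{I cond Schur} I would first record the ``quotient'' property that principal submatrices of a Schur complement are Schur complements of the corresponding principal submatrices: writing $V_{ABC}/V_C = V_{AB} - E\,V_C^{-1} E^{\intercal}$, where $E$ is the block of $V_{ABC}$ coupling $AB$ to $C$ (so $E^{\intercal} = (Y^{\intercal}\ Z^{\intercal})$ in the notation of \eqref{global CM}), one reads off directly that $(V_{ABC}/V_C)_A = V_{AC}/V_C$ and $(V_{ABC}/V_C)_B = V_{BC}/V_C$. Expanding $2\,I_M(A:B)_{V_{ABC}/V_C}$ via \eqref{logdetMI} and rewriting each $\det(V_{AC}/V_C)$ as $\det V_{AC}/\det V_C$ (and similarly for $BC$ and for the full matrix) using \eqref{det factor}, the terms in $\det V_C$ reassemble into exactly \eqref{I_2-matrix}. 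For \eqref{I cond inv} I would use the companion fact, Jacobi's identity $\det\!\big((U^{-1})_S\big) = \det U_{S^c}/\det U$, valid for any $U > 0$ and any index subset $S$ with complement $S^c$; it follows from the block-inversion formula \eqref{inv} (which gives $(U^{-1})_S = (U/U_{S^c})^{-1}$) together with \eqref{det factor}. Applying it with $U = V_{ABC}$ and $S = A,\,B,\,AB$ turns $2\,I_M(A:B)_{V^{-1}}$ into $\ln\!\big(\det V_{BC}\det V_{AC}\big) - \ln\!\big(\det V_{ABC}\det V_C\big)$, which is again \eqref{I_2-matrix}.

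For \eqref{I conv geod} I would split $2\,I_M(A:B)_{V\#_t W} = \ln\det(V\#_t W)_A + \ln\det(V\#_t W)_B - \ln\det(V\#_t W)$ and study each summand as a function of $t\in[0,1]$. The global term is affine by \eqref{det geom}: $\ln\det(V\#_t W) = (1-t)\ln\det V + t\ln\det W$. For the marginals, set $D \coloneqq V^{-1/2} W V^{-1/2}$, let $P_A$ be the coordinate projection onto the $A$-block and $K \coloneqq P_A V^{1/2}$, so that \eqref{geom geod} gives $(V\#_t W)_A = K D^{t} K^{\intercal}$; diagonalising $D = U\Lambda U^{\intercal}$ and applying Cauchy--Binet to $(V\#_t W)_A = \big(KU\Lambda^{t/2}\big)\big(KU\Lambda^{t/2}\big)^{\intercal}$ should give an expansion
\begin{equation*}
  \det\big((V\#_t W)_A\big) = \sum_{|S| = n_A} c_S\, e^{t\lambda_S},
\end{equation*}
with coefficients $c_S \ge 0$ (squares of $n_A\times n_A$ minors of $KU$) and real exponents $\lambda_S = \sum_{j\in S}\ln d_j$, the $d_j$ being the eigenvalues of $D$. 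A nonnegative combination of log-linear functions of $t$ is log-convex (H\"older), so $t\mapsto\ln\det(V\#_t W)_A$ is convex, and likewise for $B$. Thus $t\mapsto 2\,I_M(A:B)_{V\#_t W}$ is convex on $[0,1]$ as a sum of two convex and one affine function; since it equals $2\,I_M(A:B)_V$ at $t=0$ and $2\,I_M(A:B)_W$ at $t=1$ (because $V\#_0 W = V$ and $V\#_1 W = W$), it lies below the chord through these values, which after dividing by $2$ is \eqref{I conv geod}.

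I expect the main obstacle to be this convexity of the marginal log-determinants along the geodesic: one must not assume $(V\#_t W)_A = V_A\#_t W_A$ --- it is false in general --- so affineness of $\ln\det$ along geodesics is unavailable in the smaller space, and it is the Cauchy--Binet expansion that actually does the work. Should that route prove awkward, a fallback is the compression inequality $(V\#_t W)_A \preceq V_A\#_t W_A$ for $t\in[0,1]$, a consequence of joint operator concavity of the weighted geometric mean; combined with monotonicity of $\ln\det$ and \eqref{det geom}, it bounds each marginal term by its chord value and hence yields \eqref{I conv geod} term by term, the global term contributing exactly its chord value. The remaining manipulations are routine applications of \eqref{det factor}.
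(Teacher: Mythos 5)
Your proposal is correct. The two identities \eqref{I cond Schur} and \eqref{I cond inv} are handled exactly as in the paper: the quotient property $(V_{ABC}/V_C)_A=V_{AC}/V_C$ plus \eqref{det factor} for the first, and the block-inversion formula \eqref{inv} (equivalently Jacobi's identity) plus \eqref{det factor} for the second; your bookkeeping matches the paper's line for line. Where you genuinely diverge is \eqref{I conv geod}. The paper's proof applies Ando's monotonicity of the geometric mean under positive maps to the compression $\Phi(X)=\Pi_A X\Pi_A^\intercal$, extends it from $\#$ to $\#_t$ by the dyadic-rational-plus-continuity argument, and then takes determinants --- i.e.\ precisely the ``fallback'' route you sketch at the end. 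Your primary route instead writes $(V\#_t W)_A=KD^tK^\intercal$ and uses Cauchy--Binet to expand $\det\bigl((V\#_t W)_A\bigr)$ as a nonnegative combination of exponentials $c_S e^{t\lambda_S}$, whence $t\mapsto\ln\det\bigl((V\#_t W)_A\bigr)$ is convex, while the full-matrix term is affine by \eqref{det geom}; the chord bound at $t=0,1$ then gives \eqref{I conv geod}. This argument is sound (the $c_S$ are squared minors of $KU$, hence nonnegative, and the total determinant is strictly positive, so the logarithm is defined). What each approach buys: the paper's is shorter if one takes Ando's theorem as given and applies verbatim to any positive map in place of the compression; yours is more elementary and self-contained, avoids the dyadic/continuity extension of monotonicity to $\#_t$, and actually delivers the slightly stronger conclusion that $t\mapsto I_M(A:B)_{V\#_t W}$ is a convex real function of the geodesic parameter, not merely below the chord through the endpoints.
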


\begin{proof}
Let us start by showing~\eqref{I cond Schur}. Using repeatedly the determinant factorisation property~\eqref{det factor}, we find
\begin{align*}
    & I_{M}(A:B)_{V_{ABC}/V_C} \\[0.8ex]
    &\quad = \frac12 \ln \frac{\det (V_{AB}/V_C) \det (V_{BC}/V_C)}{\det (V_{ABC}/V_C)} \\[0.8ex]
    &\quad = \frac12 \ln \frac{(\det V_{AB})(\det V_C)^{-1} (\det V_{BC})(\det V_C)^{-1}}{(\det V_{ABC})(\det V_C)^{-1}} \\[0.8ex]
    &\quad= \frac12 \ln \frac{(\det V_{AB}) (\det V_{BC} )}{(\det V_{ABC})(\det V_C)} \\[0.8ex]
    &\quad= I_M(A:B|C)_V .
\end{align*}

We now move to~\eqref{I cond inv}. The block inverse formulae~\eqref{inv} give us
\begin{align*}
  (V^{-1})_{AB} &= (V_{ABC}/V_{C})^{-1} ,  \\
  (V^{-1})_{A}  &= (V_{ABC}/V_{BC})^{-1} , \\
  (V^{-1})_{B}  &= (V_{ABC}/V_{AC})^{-1} .
\end{align*}
Putting all together we find
\begin{align*}
    & I_{M}(A:B)_{V^{-1}} \\[0.8ex]
    &\quad = \frac12 \ln\frac{\det (V^{-1})_{A}\det (V^{-1})_{B}}{\det (V^{-1})_{AB}} \\[0.8ex]
    &\quad = \frac12 \ln\frac{\det (V_{ABC}/V_{BC})^{-1} \det (V_{ABC}/V_{AC})^{-1}}{\det (V_{ABC}/V_{C})^{-1}} \\[0.8ex]
    &\quad = \frac12 \ln\frac{\det (V_{ABC}/V_{C})}{\det (V_{ABC}/V_{BC}) \det (V_{ABC}/V_{AC})} \\[0.8ex]
    &\quad = \frac12 \ln\frac{(\det V_{ABC})(\det V_{C})^{-1}}
                        {(\det V_{ABC})(\det V_{BC})^{-1} (\det V_{ABC})(\det V_{AC})^{-1}} \\[0.8ex]
    &\quad = \frac12 \ln \frac{\det V_{AC}\det V_{BC}}{\det V_{ABC}\det V_{C}} \\[0.8ex]
    &\quad = I_{M}(A:B|C)_V ,
\end{align*}
which is what we wanted to show.

Finally, let us consider~\eqref{I conv geod}. A preliminary observation uses the monotonicity of the geometric mean under positive maps~\cite[Theorem~3]{Ando79}, written as $\Phi(V\# W)\leq \Phi(V)\# \Phi(W)$. Iterative applications of this inequality show that the same monotonicity property holds also for the weighted geometric mean~\eqref{geom geod} when $t$ is a dyadic rational, and hence (by continuity) for all $t\in [0,1]$. This standard reasoning is totally analogous to the one normally used to show that mid-point convexity and convexity are equivalent for continuous functions. Applying this to the positive map $\Phi(X)\coloneqq \Pi_A X \Pi_A^{\intercal}$, where $\Pi_A$ is the projector onto the $A$ components, yields $(V\#_t W)_A = \Pi_A (V\#_t W) \Pi_A^\intercal \leq V_A \#_t W_A$. Taking the determinant of both sides of this latter inequality and using for the right hand side the explicit formula~\eqref{det geom} we obtain $\det \left(V\#_t W \right)_A \leq \det\left(V_A \#_t W_A\right) = (\det V_A)^{1-t} (\det W_A)^t$. Together with the analogous inequality for the $B$ system, this gives
\begin{align*}
   & I_M(A:B)_{V\#_t W} \\[0.8ex]
   &\quad = \frac12 \ln \frac{\left(\det (V\#_t W)_A \right) \left( \det (V \#_t W)_B \right)}{\det (V\#_t W)_{AB}} \\[0.8ex]
   &\quad \leq \frac12 \ln \frac{(\det V_A)^{1-t} (\det W_A)^t (\det  V_B)^{1-t}  (\det W_B)^t}{(\det V_{AB})^{1-t} (\det W_{AB})^t} \\[0.8ex]
   &\quad = (1-t) I_M(A:B)_V + t I_M(A:B)_W ,
\end{align*}
concluding the proof.
\end{proof}

\begin{rem}
Inequality~\eqref{I conv geod} is especially notable because in general the log-det mutual information is not convex over the set of positive matrices. However, it is convex when restricted to geodesics in the trace metric, as we have just shown. Moreover, we note in passing that an analogous inequality to~\eqref{I conv geod} does not seem to hold for the log-det conditional mutual information.
\end{rem}

\begin{thm}
\label{thm satur}
  For an arbitrary $V_{ABC}>0$ written in block form as in~\eqref{global CM}, the following are equivalent:
  \begin{enumerate}
    \item $I_{M}(A:B|C)_{V}=0$, i.e.~\eqref{SSA} is saturated;
    \item $V_{ABC}/V_{BC}=V_{AC}/V_{C}$, i.e.~\eqref{INEQ 1} is saturated;
    \item $(V^{-1})_{AB}=(V^{-1})_{A}\oplus (V^{-1})_{B}$;
    \item $X=YC^{-1}Z^{\intercal}$ (see~\cite{Ando09} or~\cite[Thm.~4.49]{Petz book});
    \item there is a classical Gaussian channel $N_{C\rightarrow BC}$ such
          that $(I_A\oplus N_{C\rightarrow BC})(V_{AC})=V_{ABC}$.
  \end{enumerate}
\end{thm}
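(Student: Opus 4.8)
The plan is to prove all five statements equivalent through the implications $(1)\Leftrightarrow(2)$, $(1)\Leftrightarrow(3)\Leftrightarrow(4)$, $(4)\Rightarrow(5)$ and $(5)\Rightarrow(2)$, which together form a connected chain. Two elementary facts will do most of the work: the determinant factorisation~\eqref{det factor}, and the rigidity statement that if $0<P\leq Q$ with $\det P=\det Q$ then $P=Q$ (diagonalising $Q^{-1/2}PQ^{-1/2}\leq\mathds{1}$, all of its eigenvalues lie in $(0,1]$ and multiply to $1$, hence are all $1$).

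For $(1)\Leftrightarrow(2)$, I would apply~\eqref{det factor} to $V_{AC}$ relative to $V_C$ and to $V_{ABC}$ relative to $V_{BC}$; the four determinants in~\eqref{I_2-matrix} then collapse to $I_{M}(A:B|C)_{V}=\frac12\ln\frac{\det(V_{AC}/V_C)}{\det(V_{ABC}/V_{BC})}$. By the Schur-complement inequality~\eqref{INEQ 1}, $V_{ABC}/V_{BC}\leq V_{AC}/V_C$, so the argument of the logarithm is $\geq1$, and it equals $1$ — that is, $(1)$ holds — exactly when $V_{ABC}/V_{BC}=V_{AC}/V_C$ by the rigidity fact, which is $(2)$. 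For $(1)\Leftrightarrow(3)$, identity~\eqref{I cond inv} gives $I_{M}(A:B|C)_{V}=I_{M}(A:B)_{V^{-1}}$, and the same factorisation-plus-rigidity argument applied to the Schur complement $(V^{-1})_{AB}/(V^{-1})_A\leq(V^{-1})_B$ shows that this vanishes iff the off-diagonal $A$--$B$ block of $V^{-1}$ is zero, i.e.\ $(V^{-1})_{AB}=(V^{-1})_A\oplus(V^{-1})_B$. For $(3)\Leftrightarrow(4)$, the block-inverse formula~\eqref{inv}, used with the $(AB\mid C)$ partition of $V_{ABC}$, gives $(V^{-1})_{AB}=(V_{ABC}/V_C)^{-1}$, and a direct expansion yields $V_{ABC}/V_C=\lmatrix V_{AC}/V_C & X-YC^{-1}Z^{\intercal} \\ (X-YC^{-1}Z^{\intercal})^{\intercal} & V_{BC}/V_C\rmatrix$. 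Since a positive definite matrix is block diagonal iff its inverse is, $(3)$ is equivalent to $X-YC^{-1}Z^{\intercal}=0$, which is $(4)$.

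The recoverability half is the conceptual heart. For $(4)\Rightarrow(5)$ I would exhibit the channel explicitly: take $N_{C\rightarrow BC}$ defined at the level of covariance matrices via~\eqref{N} by $H=\lmatrix ZC^{-1} \\ \mathds{1}\rmatrix$ and $K=(B-ZC^{-1}Z^{\intercal})\oplus 0=(V_{BC}/V_C)\oplus 0\geq0$ (this is the classical analogue of the Petz recovery map, whose role is developed in Section~\ref{sec:Gaussian-recov}). A one-line computation of $(\mathds{1}_A\oplus H)V_{AC}(\mathds{1}_A\oplus H)^{\intercal}+(0\oplus K)$ gives a matrix with diagonal blocks $A$, $B$, $C$ and off-diagonal blocks $YC^{-1}Z^{\intercal}$ (for $A$--$B$), $Y$ (for $A$--$C$) and $Z$ (for $B$--$C$), so it equals $V_{ABC}$ precisely under the hypothesis $X=YC^{-1}Z^{\intercal}$. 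For $(5)\Rightarrow(2)$, which closes the chain, I would argue structurally rather than recompute: by the mirror form of~\eqref{Schur var} one has $V_{AC}\geq(V_{AC}/V_C)\oplus 0_{C}$; conjugating this by $\mathds{1}_A\oplus H$ and adding $0\oplus K\geq0$ yields $V_{ABC}=(I_A\oplus N_{C\rightarrow BC})(V_{AC})\geq(V_{AC}/V_C)\oplus 0_{BC}$, so the variational characterisation~\eqref{Schur var} forces $V_{ABC}/V_{BC}\geq V_{AC}/V_C$; combined with the opposite inequality~\eqref{INEQ 1} this gives $V_{ABC}/V_{BC}=V_{AC}/V_C$, which is $(2)$.

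I expect no deep obstacle, only bookkeeping: one must keep the $(AB\mid C)$ and $(A\mid BC)$ block partitions consistent in $(3)\Leftrightarrow(4)$, and make sure the congruence-and-positivity step in $(5)\Rightarrow(2)$ genuinely respects the direct-sum structure (the congruence by $\mathds{1}_A\oplus H$ annihilates the zero $C$-block, which is exactly what makes the argument go through). All external ingredients — \eqref{det factor}, \eqref{INEQ 1}, \eqref{inv}, \eqref{Schur var} and \eqref{I cond inv} — are already available, so the proof should be essentially a careful assembly.
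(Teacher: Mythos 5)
Your proof is correct, and its skeleton ($1\Leftrightarrow 2$, $1\Leftrightarrow 3$, $4\Rightarrow 5$, $5\Rightarrow 2$) largely matches the paper's, but you attach condition (4) to the chain differently and you close the loop differently. The paper proves $2\Rightarrow 4$ directly: it rewrites saturation of \eqref{INEQ 1} via \eqref{Schur var} as positivity of $V_{ABC}-(V_{AC}/V_C)\oplus 0_{BC}$, extracts the necessity of $X=YC^{-1}Z^{\intercal}$ by testing against vectors of the form $(v,w,-C^{-1}Y^{\intercal}v)$, and then verifies sufficiency by exhibiting the resulting matrix as a sum of two manifestly positive terms. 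You instead prove $3\Leftrightarrow 4$ by computing the off-diagonal block of $V_{ABC}/V_{C}$ explicitly as $X-YC^{-1}Z^{\intercal}$ and invoking $(V^{-1})_{AB}=(V_{ABC}/V_{C})^{-1}$ together with the fact that a positive block matrix is block diagonal iff its inverse is; this is shorter and avoids the two-sided positivity check, and it relies only on ingredients the paper has already established (the block-inverse formula \eqref{inv} used in Theorem~\ref{thm I cond geom}). The other divergence is $5\Rightarrow 2$: the paper cites \cite{Lami16} for the fact that Gaussian channels acting on $C$ cannot decrease $V_{AC}/V_{C}$, whereas you reprove it in a few lines from \eqref{Schur var} by conjugating $V_{AC}\geq (V_{AC}/V_{C})\oplus 0_{C}$ with $\mathds{1}_{A}\oplus H$ and adding $K\geq 0$ — a self-contained replacement that correctly works for an arbitrary Gaussian channel, not just the Petz map. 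The treatments of $1\Leftrightarrow 2$, $1\Leftrightarrow 3$ and $4\Rightarrow 5$ (including the explicit recovery channel) coincide with the paper's.
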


\begin{proof} $\\[-3ex]$
\begin{description}
\item[$1\!\Leftrightarrow\! 2$.] Saturation of~\eqref{SSA} and~\eqref{INEQ 1} are equivalent concepts, since it is very easy to verify that if $M\geq N>0$ then $M=N$ if and only if $\det M=\det N$.

\item[$1\!\Leftrightarrow\! 3$.] It is well-known that $W_{AB}>0$ satisfies $\det W_{AB} = \det W_{A} \det W_{B}$ iff its off-diagonal block is zero, i.e. iff $W_{AB}=W_{A}\oplus W_{B}$. For instance, this can be easily seen as a consequence of~\eqref{det factor}. Thanks to Theorem~\ref{thm I cond geom}, identity~\eqref{I cond inv}, applying this observation with $W=V^{-1}$ yields the claim.

\item[$2\!\Rightarrow\! 4$.] This is known in linear algebra~\cite{Ando09}, but for
the sake of completeness we provide a different proof that fits more with the
spirit of the present work. Namely, we see that the variational representation of Schur complements~\eqref{Schur var} guarantees that~\eqref{INEQ 1} is saturated if and only if
\begin{equation}
\begin{split}
  V_{ABC} - (V_{AC}/V_C) \oplus 0_{BC}\
    &= \lmatrix A-V_{AC}/V_C & X & Y \\
                      X^\intercal  & B & Z \\
                      Y^\intercal  & Z^\intercal  & C \rmatrix \\[0.8ex]
    &= \lmatrix YC^{-1}Y^\intercal  & X & Y \\
                      X^\intercal  & B & Z \\
                      Y^\intercal  & Z^\intercal  & C \rmatrix \\[0.8ex]
    &\geq 0\, .
\end{split}
  \label{eq cond eq1}
\end{equation}
A necessary condition for~\eqref{eq cond eq1} to hold is obtained by taking suitable matrix elements:
\begin{equation*}
\begin{split}
  0 &\leq \lmatrix v \\ w \\ -C^{-1}Y^\intercal  v \rmatrix^{\intercal}
\lmatrix YC^{-1}Y^\intercal  & X & Y \\
                           X^\intercal  & B & Z \\
                           Y^\intercal  & Z^\intercal  & C \rmatrix
\lmatrix v \\ w \\ -C^{-1}Y^\intercal  v \rmatrix \\[0.8ex]
    &=    2 v^\intercal  (X-YC^{-1}Z^\intercal ) w + w^\intercal  B w .
\end{split}
\end{equation*}
This can only be true for all $v$ and $w$ if $X=YC^{-1}Z^\intercal $.
Moreover, this latter condition (together with the positivity of $V_{ABC}$)
is enough to guarantee that~\eqref{eq cond eq1} is satisfied. Indeed, we can write
\begin{align*}
  \lmatrix YC^{-1}Y^\intercal  & YC^{-1}Z^\intercal  & Y \\
                  ZC^{-1}Y^\intercal  & B & Z \\
                  Y^\intercal  & Z^\intercal  & C \rmatrix &= \lmatrix 0 & & \\ & B-ZC^{-1}Z^\intercal  & \\ & & 0 \rmatrix  \\[0.8ex]
    &\quad + \lmatrix YC^{-\frac12} \\ ZC^{-\frac12} \\ C^{\frac12} \rmatrix \lmatrix YC^{-\frac12} \\ ZC^{-\frac12} \\ C^{\frac12} \rmatrix^\intercal \\[0.8ex]
    & \geq 0 ,
\end{align*}
where $B-ZC^{-1}Z^\intercal \geq 0$ follows from
$\left(\begin{smallmatrix} B & Z \\ Z^\intercal & C \end{smallmatrix}\right) \geq 0$.

\item[$4\!\Rightarrow\! 5$.] If in~\eqref{N} we define
\begin{equation}\begin{split}
  H &= H_R \coloneqq \begin{pmatrix} \mathds{1} & 0 \\ 0 & ZC^{-1} \\ 0 & \mathds{1} \end{pmatrix} \text{ and}\\[0.8ex]
  K &= K_R \coloneqq \begin{pmatrix} 0 & & \\ & B-ZC^{-1}Z^\intercal  & \\ & & 0 \end{pmatrix} ,
  \label{gaus Petz eq2}
\end{split}\end{equation}
we obtain straightforwardly
\begin{align*}
  (I_A\oplus N_{C\rightarrow BC}) (V_{AC}) &= H_R \lmatrix A & X \\ X^\intercal  & C \rmatrix H_R^\intercal + K_R  \\[0.8ex]
    &= \lmatrix A & X & Y \\ X^\intercal  & B & Z \\ Y^\intercal  & Z^\intercal  & C \rmatrix  \\[0.8ex]
    &= V_{ABC} ,
\end{align*}
provided that $X=YC^{-1}Z^{\intercal}$. We will see in the next section
that this map is nothing but a specialisation to the Gaussian case of a
general construction known as transpose channel, or Petz recovery map.

\item[$5\!\Rightarrow\! 2$.] Since it is known that classical Gaussian channels acting on
$C$ always increase $V_{AC}/V_C$~\cite{Lami16}, it is clear that the equality in~\eqref{INEQ 1}
is a necessary condition for the existence of a Gaussian recovery map $N_{C\rightarrow BC}$.
\end{description}
\end{proof}

\medskip

\section{Gaussian recoverability}
\label{sec:Gaussian-recov}
Here, we discuss the role of some well-known remainder terms for inequalities
of the form~\eqref{SSA}. These terms have been introduced recently in the
context of sufficient statistics~\cite{Petz-old} and its approximate
variants~\cite{VV}, or so-called ``recoverability''.
In~\cite{FawziRenner}, a form involving recovery maps was proposed for such
a term in the fully quantum case (i.e., considering the SSA for von Neumann entropy)
based on the fidelity of recovery, and subsequently strengthened to a bound involving the
measured relative entropy~\cite{BHOS}; in both cases the given bounds
turn out to be operationally meaningful quantities~\cite{CHMOSWW}.
The much simpler classical reasoning (with a better bound) was presented in~\cite{VV}.
We will translate these results into the Gaussian setting
in order to find an explicit expression for a remainder term to be added to~\eqref{SSA}.

For classical probability distributions $p$ and $q$ over a discrete alphabet,
in~\cite{VV} the following inequality was shown, which improves on the
monotonicity of the relative entropy under channels:
\begin{equation}
  D(p\|q) - D(Np\|Nq) \geq D\left( p \| R N p \right) ,
  \label{recov}
\end{equation}
where $N = (N_{ji})$ is any stochastic map (channel)
and the action of the transpose channel
(also known as \emph{Petz recovery map}~\cite{Petz book,BarnumKnill})
$R=R_{q,N}$ on an input distribution $r$ is uniquely
defined via the requirement that $N_{ji}q_i = R_{ij} (Nq)_j$ for all $i$ and $j$.
Explicitly,
\begin{equation}
  (R_{q,N}\, r)_i \coloneqq \sum_j \frac{q_i N_{ji}}{(Nq)_j} r_j .
  \label{Petz}
\end{equation}
Observe that $R_{q,N}$ is a bona fide channel, since
\begin{equation*}
  \sum_i (R_{q,N})_{ij} = \sum_i \frac{q_i N_{ji}}{(Nq)_j}
                        = \frac{(Nq)_j}{(Nq)_j}
                        = 1 .
\end{equation*}
For obvious reasons, we will call the right hand side of~\eqref{recov} the
{\it relative entropy of recovery}. The proof of~\eqref{recov} is a simple
application of the concavity of the logarithm, and we
reproduce it here for the benefit of the reader.
\begin{align}
  D\left(p\|R_{q,N}Np\right)
     &= \sum_i p_i \Big(\ln p_i - \ln (R_{q,N}Np)_i \Big) \nonumber\\[0.8ex]
     &= \sum_i p_i \Big(\ln p_i - \ln \sum_j \frac{q_i\,N_{ji}}{(Nq)_j}\,(Np)_j \Big) \label{petz eq1} \\
     &\leq \sum_i p_i \Big(\ln p_i - \sum_j N_{ji} \ln \frac{q_i}{(Nq)_j}\,(Np)_j \Big) \label{petz eq2} \\
     &= D(p\|q) - D\left(Np\|Nq\right) . \nonumber
  \label{petz eq2}
\end{align}

Although we wrote out the proof only for random variables taking values in a discrete
alphabet, all of the above expressions make perfect sense also in more general cases,
e.g. when $i$ and $j$ are multivariate real variables.
If $N$ is a classical Gaussian channel acting as in~\eqref{N}, it
can easily be verified that the `transition probabilities' $N(x,y)$ satisfying
\begin{equation}
  (Np)(x) = \int dy\, N(x,y) p(y)
\end{equation}
take the form
\begin{equation}
  N(x,y) = \frac{e^{-\frac{1}{2} (x-Hy)^\intercal K^{-1} (x-Hy)}}{\sqrt{(2\pi)^n\det K}} .
  \label{N Gauss}
\end{equation}

Following again~\cite{VV}, we observe that if the output of the random channel
$N$ is a deterministic function of the input, then~\eqref{recov} is always
saturated with equality.
This can be seen by noticing that in that case for all $i$ there is only one
index $j$ such that $N_{ji}\neq 0$ (and so $N_{ji}=1$). Therefore, the step
from~\eqref{petz eq1} to~\eqref{petz eq2} is an equality. There is a very special
case when this remark is useful. Consider a triple of random variables $XYZ$
distributed according to $p(xyz)$, a second probability distribution $q(xyz)=p(x)p(yz)$,
and the channel $N$ consisting of discarding $Y$. Obviously, in this case the
output is a deterministic function of the input. It is easily seen that the
reconstructed global probability distribution $R_{q,N}Np$ is
\begin{equation}
  \tilde{p}(xyz) = p(xz) p(y|z) .
\end{equation}
Then the saturation of~\eqref{recov} allows us to write
\begin{equation}
  I(X:Y|Z) = D(p\|q) - D(Np\|Nq) = D(p\|\tilde{p}) .
  \label{I rel ent}
\end{equation}

\subsection{Gaussian Petz recovery map}
\label{gaus Petz}
From now on, we will consider the case in which $N$ is a {classical Gaussian channel} transforming covariance matrices according to the rule~\eqref{N}. As can be easily verified, if also $q$ is a multivariate Gaussian distribution, then $R_{q,N}$ becomes a {classical Gaussian channel} as well. We compute its action in the case we are mainly interested in, that is, when the left--hand side of~\eqref{recov} corresponds to the difference of the two sides of~\eqref{SSA}, and verify that it coincides with the recovery map introduced in Section~\ref{sec exact recov} (via the general action~\eqref{N} with the substitutions~\eqref{gaus Petz eq2}).

\begin{prop}
  \label{prop Petz}
  Let $q$ be a tripartite Gaussian probability density with zero mean and covariance matrix
  \begin{equation*}
    V_A\oplus V_{BC} = \begin{pmatrix} A & 0 & 0 \\ 0 & B & Z \\ 0 & Z^\intercal & C \end{pmatrix} ,
  \end{equation*}
  and let the channel $N$ correspond to the action of discarding the $B$ components, i.e.
  $H = \Pi_{AC}
   = \left( \begin{smallmatrix} \mathds{1} & 0 & 0 \\ 0 & 0 & \mathds{1} \end{smallmatrix}\right)$
  and $K=0$ in~\eqref{N}. Then, the action $C\rightarrow BC$ of the Petz recovery
  map~\eqref{Petz} on Gaussian variables with zero mean can be written at the level
  of covariance matrices as~\eqref{N}, where $H_R$ and $K_R$ are given by~\eqref{gaus Petz eq2}.
\end{prop}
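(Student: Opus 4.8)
The plan is to compute the Gaussian Petz recovery map directly from its defining relation~\eqref{Petz}, specialised to the Gaussian densities at hand, and then match the result against the general Gaussian channel form~\eqref{N} to read off $H_R$ and $K_R$. First I would write down the three relevant Gaussian densities explicitly using~\eqref{gauss}: the reference density $q$ with covariance $V_A\oplus V_{BC}$; its pushforward $Nq$ under discarding $B$, which has covariance $V_A\oplus V_C$ (since $N=\Pi_{AC}$, $K=0$); and the conditional `transition kernel' $N(x_A x_C \,|\, x_A x_B x_C)$, which here is simply a Dirac delta $\delta(x_A'-x_A)\,\delta(x_C'-x_C)$ because discarding $B$ is a deterministic (in fact, coordinate-projection) operation. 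Plugging these into~\eqref{Petz} gives the recovery kernel $R(x_A x_B x_C \,|\, x_A' x_C') = \frac{q(x_A x_B x_C)}{(Nq)(x_A' x_C')}\,\delta(x_A-x_A')\,\delta(x_C-x_C')$, and the delta functions collapse the $A$ and $C$ arguments so that $R$ is effectively a channel acting only on the $C$-register and producing the $BC$-register, i.e.\ a $C\to BC$ Gaussian channel as claimed.

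The key step is then to identify the conditional density of $X_B$ given $X_C$ under $q$. Since under $q$ the pair $(X_B,X_C)$ is Gaussian with covariance $V_{BC}=\left(\begin{smallmatrix} B & Z \\ Z^\intercal & C\end{smallmatrix}\right)$, the standard Gaussian conditioning formula gives $X_B \,|\, X_C=x_C \;\sim\; \mathcal{N}\!\big(ZC^{-1}x_C,\; B-ZC^{-1}Z^\intercal\big)$. The ratio $q/(Nq)$ is exactly this conditional density (the $A$ and $C$ marginals cancel, using that $A$ decouples in $q$), so the recovery map sends a $C$-variable $X_C$ with covariance $\Gamma$ to the $BC$-variable $(ZC^{-1}X_C + W,\, X_C)$, where $W\sim\mathcal{N}(0, B-ZC^{-1}Z^\intercal)$ is independent noise. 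In the $C\to BC$ direction the relevant enlargement is $C\to AC$-to-$ABC$ via $I_A\oplus(\text{this map})$; writing the linear part as the matrix sending $x_C\mapsto (ZC^{-1}x_C, x_C)$ and padding with the identity on $A$ gives precisely $H_R=\left(\begin{smallmatrix}\mathds{1}&0\\0&ZC^{-1}\\0&\mathds{1}\end{smallmatrix}\right)$, while the additive noise contributes $K_R=0\oplus(B-ZC^{-1}Z^\intercal)\oplus 0$, matching~\eqref{gaus Petz eq2} exactly.

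Finally I would verify consistency by applying this channel to $V_{AC}=\left(\begin{smallmatrix}A&X\\X^\intercal&C\end{smallmatrix}\right)$ via~\eqref{N}: the computation $H_R V_{AC} H_R^\intercal + K_R$ reproduces $V_{ABC}$ with off-diagonal block $X=YC^{-1}Z^\intercal$ replaced by whatever $X$ actually is — so strictly speaking the recovery is exact only when the saturation condition of Theorem~\ref{thm satur} holds, and in general $H_R V_{AC} H_R^\intercal + K_R$ is the reconstructed covariance matrix $\widetilde V_{ABC}$, the Gaussian analogue of $\tilde p(xyz)=p(xz)p(y|z)$ appearing in~\eqref{I rel ent}. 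I expect the main (really the only) obstacle to be bookkeeping rather than conceptual: one must handle the Dirac-delta `transition kernels' for the deterministic channel carefully — this is the continuous-variable version of the remark following~\eqref{N Gauss} that deterministic outputs saturate~\eqref{recov} — and be careful that the block ordering $A,B,C$ in $H_R$ is consistent with the $C\to BC$ labelling, but once the conditioning formula for $X_B\,|\,X_C$ is in hand the identification of $H_R$ and $K_R$ is immediate.
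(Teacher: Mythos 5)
Your proof is correct, but it takes a genuinely different route from the paper's. The paper decomposes the Petz map~\eqref{Petz} into three Gaussian operations --- pointwise multiplication by a Gaussian (acting on covariance matrices as $V\mapsto (V^{-1}+A^{-1})^{-1}$), the transpose channel acting on inverse covariance matrices via~\eqref{N^T}, and a second pointwise multiplication --- arriving at the composite expression~\eqref{gaus Petz eq3}, and then grinds this into the canonical form $H_R\sigma_{AC}H_R^\intercal + K_R$ by a somewhat lengthy application of the Woodbury identity~\eqref{Wood}. You instead exploit the fact that discarding $B$ is deterministic, so the recovery kernel collapses to $q(x_B|x_C)\,\delta(x_A-x_A')\,\delta(x_C-x_C')$, and the standard Gaussian conditioning formula $X_B\,|\,X_C=x_C\sim\mathcal{N}(ZC^{-1}x_C,\,B-ZC^{-1}Z^\intercal)$ hands you $H_R$ and $K_R$ directly, with the bonus that their probabilistic meaning (regression coefficient and conditional covariance) is transparent; this is really the Gaussian specialisation of the reconstructed distribution $\tilde p(xyz)=p(xz)p(y|z)$ that the paper itself introduces around~\eqref{I rel ent}, and your closing observation that the output on a general $V_{AC}$ is $\tilde V_{ABC}$ of~\eqref{V tilde} rather than $V_{ABC}$ is exactly right. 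What your route gives up is generality: the paper's three-step decomposition and the formula~\eqref{N^T} apply to an arbitrary (non-deterministic) Gaussian channel $N$, and the computation stays entirely within the covariance-matrix calculus used throughout the rest of the paper, whereas your argument leans on the deterministic structure of the projection. For the proposition as stated, your argument is shorter and complete.
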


\begin{proof}
The Petz recovery map~\eqref{Petz} is a composition of three operations: first the pointwise division by a Gaussian distribution, then the transpose of a deterministic channel, and eventually another pointwise Gaussian multiplication. It should be obvious from~\eqref{gauss} that a pointwise multiplication by a Gaussian distribution with covariance matrix $A$ is a Gaussian (non--deterministic) channel that leaves the mean vector invariant and acts on covariance matrices as $V\mapsto V'=(V^{-1}+A^{-1})^{-1}$. Furthermore, it can be proven that the transpose $N^\intercal$ of the channel $N$ in~\eqref{N} sends Gaussian variables with zero mean to other Gaussian variables with zero mean, while on the inverses of the covariance matrices it acts as
\begin{equation}
  N^\intercal : V^{-1} \longmapsto\ (V')^{-1} = H^\intercal (V+K)^{-1} H .
  \label{N^T}
\end{equation}
{A way to prove the above equation is by using~\eqref{N Gauss} to compute directly
the action of $N^\intercal$ on a Gaussian input distribution.}

After the preceding discussion, it should be clear that under our hypotheses
the action of the Petz recovery map can be written as
\begin{equation}
\begin{split}
  \sigma_{AC} \longmapsto
    \sigma'_{ABC} &= \Big( V_A^{-1}\oplus V_{BC}^{-1} \\[0.8ex]
                          &\quad + (\sigma_{AC}^{-1}-V_A^{-1}\oplus V_C^{-1})\oplus 0_B \Big)^{-1} .
\end{split}
  \label{gaus Petz eq3}
\end{equation}
The Woodbury matrix identity (see~\cite{Woodbury}, or~\cite[Equation (6.0.10)]{SCbook}),
\begin{equation}
  (S+UTV)^{-1} = S^{-1} - S^{-1}U\left(VS^{-1}U+T^{-1}\right)^{-1} V S^{-1},
  \label{Wood}
\end{equation}
can be used to bring~\eqref{gaus Petz eq3} into the canonical form~\eqref{N}:
\begin{align*}
  \sigma'_{ABC} &= \left( V_A^{-1}\oplus V_{BC}^{-1}
                   + (\sigma_{AC}^{-1}-V_A^{-1}\oplus V_C^{-1})\oplus 0_B \right)^{-1} \\[0.8ex]
                &= \big( V_A^{-1}\oplus V_{BC}^{-1} \\
                   &\quad + \Pi_{AC}^\intercal  (\sigma_{AC}^{-1}-V_A^{-1}\oplus V_C^{-1}) \Pi_{AC} \big)^{-1} \\[0.8ex]
                &= V_A\oplus V_{BC} \\
                &\quad- (V_A\oplus V_{BC}) \Pi_{AC}^\intercal \\
                     &\quad\quad \cdot \Big( (\sigma_{AC}^{-1}\! - V_A^{-1}\oplus V_C^{-1})^{-1} \\
                     &\quad\quad\quad + \Pi_{AC} (V_A \oplus V_{BC}) \Pi_{AC}^\intercal\Big)^{-1} \\
                     &\quad\quad \cdot \Pi_{AC} (V_A \oplus V_{BC}) \\[0.8ex]
                &= V_A \oplus V_{BC} - (V_A \oplus V_{BC}) \Pi_{AC}^\intercal \\
                &\quad \cdot \Big( -V_A \oplus V_C \\
                &\quad\quad - (V_A \oplus V_C) (\sigma_{AC}-V_A \oplus V_C)^{-1} (V_A \oplus  V_C) \\
                &\quad\quad+ V_A \oplus V_C \Big)^{-1} \\
                &\quad \cdot \Pi_{AC} (V_A \oplus V_{BC}) \\[0.8ex]
                &= V_A\oplus V_{BC} \\
                &\quad+ (V_A\oplus V_{BC}) \Pi_{AC}^\intercal (V_A^{-1} \oplus V_C^{-1}) \\
                &\quad\quad \cdot(\sigma_{AC}-V_A\oplus V_C) \\
                &\quad\quad\cdot (V_A^{-1}\oplus V_C^{-1}) \Pi_{AC} (V_A\oplus V_{BC}) \\[0.8ex]
                &= H_R \sigma_{AC} H_R^\intercal + K_R\, ,
\end{align*}
where we have employed the definitions
\begin{align*}
  H_R &= (V_A\oplus V_{BC}) \Pi_{AC}^\intercal (V_A^{-1}\oplus V_C^{-1})
       = \lmatrix \mathds{1} & 0 \\ 0 & ZC^{-1} \\ 0 & \mathds{1} \rmatrix \text{ and} \\[0.8ex]
  K_R &= \lmatrix 0 & & \\ & B-ZC^{-1} Z^\intercal & \\ & & 0 \rmatrix .
\end{align*}
\end{proof}

\subsection{Gaussian relative entropy of recovery}
We are ready to employ the classical theory of recoverability in order
to find the expression of the relative entropy of recovery in the Gaussian case.

\begin{prop}
  For all tripartite covariance matrices $V_{ABC}>0$ written in block
  form as in~\eqref{global CM}, we have
  \begin{equation}
  \begin{split}
    I_{M}(A:B|C)_{V} &= \frac12 \ln\frac{\det V_{AC}\det V_{BC}}{\det V_{ABC}\det V_C} \\
                     &= D\!\left( V_{ABC} \| \tilde{V}_{ABC}\right) ,
  \end{split}
  \label{SSA+}
  \end{equation}  where
  \begin{equation}
    \tilde{V}_{ABC} \coloneqq \begin{pmatrix} A & YC^{-1} Z^\intercal  & Y \\
                                           ZC^{-1}Y^\intercal  & B & Z \\
                                           Y^\intercal  & Z^\intercal  & C \end{pmatrix}
    \label{V tilde}
  \end{equation}
  and the relative entropy function $D(\cdot\|\cdot)$ is given by~\eqref{rel ent}.
\end{prop}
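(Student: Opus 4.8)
The plan is to exploit the saturation machinery just developed, combining the classical recoverability inequality \eqref{recov} with the explicit form of the Gaussian Petz map computed in Proposition~\ref{prop Petz}. First I would set up the classical picture: let $q$ be the tripartite Gaussian density with covariance matrix $V_A \oplus V_{BC}$ and let $N$ be the deterministic channel that discards the $B$ components, so that $Nq$ has covariance $V_A \oplus V_C$ and $Np$ has covariance $V_{AC}$. Exactly as in the discussion around \eqref{I rel ent}, because $N$ is deterministic the inequality \eqref{recov} is saturated with equality, giving $I_M(A:B|C)_V = D(p\|q) - D(Np\|Nq) = D\!\left(V_{ABC}\,\|\,R_{q,N}Np\right)$, where $p$ is the Gaussian density with covariance $V_{ABC}$. (That the left-hand side equals $D(p\|q)-D(Np\|Nq)$ for \emph{this particular} $q$ is a short computation using \eqref{rel ent}: the $\Tr$ terms cancel because $q$ and $p$ share the $BC$-block and $Nq$, $Np$ share the $C$-block, leaving exactly the balanced log-det combination.)

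Second, I would identify the recovered covariance matrix $R_{q,N}Np$. By Proposition~\ref{prop Petz}, applied with $\sigma_{AC} = V_{AC}$, the Petz recovery map sends $V_{AC}$ to $H_R V_{AC} H_R^\intercal + K_R$ with $H_R$ and $K_R$ as in \eqref{gaus Petz eq2}. But this is precisely the computation already carried out in the proof of Theorem~\ref{thm satur} (the implication $4\Rightarrow 5$), where it was shown that $H_R \left(\begin{smallmatrix} A & X \\ X^\intercal & C\end{smallmatrix}\right) H_R^\intercal + K_R$ equals the matrix $\tilde V_{ABC}$ of \eqref{V tilde} — there one had $X = YC^{-1}Z^\intercal$ by hypothesis, whereas here the off-diagonal $AC$-block of $V_{AC}$ is simply $Y$, and one reads off that the $AB$-block of the output is $YC^{-1}Z^\intercal$. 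Hence $R_{q,N}Np$ has covariance matrix exactly $\tilde V_{ABC}$.

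Third, I would combine the two steps: $I_M(A:B|C)_V = D\!\left(V_{ABC}\,\|\,\tilde V_{ABC}\right)$, which is the claim, with $D(\cdot\|\cdot)$ the Gaussian relative entropy \eqref{rel ent}. One should also double-check that $\tilde V_{ABC} > 0$ so the relative entropy is well defined: this follows from the positivity argument in \eqref{eq cond eq1}–style manipulations, namely $\tilde V_{ABC} = (B - ZC^{-1}Z^\intercal) \text{ on the $B$ slot} \oplus 0 \,+\, \left(\begin{smallmatrix} YC^{-1/2} \\ ZC^{-1/2} \\ C^{1/2}\end{smallmatrix}\right)(\cdots)^\intercal \geq 0$, and it is in fact strictly positive because its $AC$-block is $V_{AC}>0$ and its Schur complement with respect to $V_{AC}$ is $B - ZC^{-1}Z^\intercal = V_{BC}/V_C > 0$; then invoke \eqref{Schur pos}.

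The main obstacle, if any, is bookkeeping rather than conceptual: one must verify carefully that for the chosen $q$ the quantity $D(p\|q) - D(Np\|Nq)$ really collapses to the balanced log-det expression $I_M(A:B|C)_V$ — in particular that the non-logarithmic $\Tr(\cdot)$ contributions in \eqref{rel ent} cancel exactly between the two relative entropies. This works precisely because $q$ is block-diagonal across $A{:}BC$ with the \emph{true} marginals $V_A$ and $V_{BC}$, so $q^{-1}$ restricted to the support of $p$ interacts with $p$ only through traces of the form $\Tr(V_A^{-1} A) + \Tr(V_{BC}^{-1} V_{BC})$, and similarly for the pushed-forward pair; all such traces equal dimensions and cancel in the difference. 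Once this cancellation is checked, everything else is a direct appeal to results already proved.
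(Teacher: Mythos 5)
Your proposal is correct and follows essentially the same route as the paper: the paper's proof is the one-line observation that this is an instance of~\eqref{I rel ent} applied to the Gaussian triple $(X_A,X_B,X_C)$, with the identification of the reconstructed covariance as $\tilde V_{ABC}$ supplied by Proposition~\ref{prop Petz}. You have simply unpacked that argument — verifying the cancellation of the $\Tr$ terms in~\eqref{rel ent}, computing $H_R V_{AC} H_R^\intercal + K_R = \tilde V_{ABC}$ explicitly, and checking $\tilde V_{ABC}>0$ — all of which is accurate.
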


\begin{proof}
This is just an instance of~\eqref{I rel ent} applied to the continuous
Gaussian variable $(X_{A},X_{B},X_{C})$.
\end{proof}

The identity~\eqref{SSA+} is useful in deducing new constraints that will
be much less obvious coming from a purely matrix analysis perspective.
For instance, it is well known that $D(p\|q)\geq -\ln \mathcal{F}^2(p,q)$ {(see e.g.~\cite{newRenyi,KMRA})},
where the fidelity is given by $\mathcal{F}(p,q)=\sum_i \sqrt{p_i q_i}$
in the discrete case.
In case of Gaussian variables with the same mean, it holds
\begin{equation}
  \mathcal{F}^2(p_A,p_B) = \frac{\det (A!B)}{\sqrt{\det A\det B}} ,
  \label{fid gaus}
\end{equation}
where $(A!B)\coloneqq 2\left(A^{-1}+B^{-1}\right)^{-1}$ is the \emph{harmonic mean}
of $A$ and $B$.
Inserting this standard lower bound into~\eqref{SSA+} we obtain
\begin{equation}
  \frac{\det V_{AC}\det V_{BC}}{\det V_{ABC}\det V_C}
    \geq \frac{\det V_{ABC}\det \tilde{V}_{ABC}}{\left( \det (V_{ABC}!\tilde{V}_{ABC}) \right)^2},
  \label{SSA+ fid}
\end{equation}
leading to
\begin{equation}
   I_{M}(A:B|C)_{V}
    \geq \frac12 \ln\frac{\det V_{ABC}\det \tilde{V}_{ABC}}{\left( \det (V_{ABC}!\tilde{V}_{ABC}) \right)^2}.
  \label{SSA+ fid2}
\end{equation}
Using furthermore
\begin{align*}
  \det \tilde{V}_{ABC} &= \det\tilde{V}_{BC} \det (\tilde{V}_{ABC}/\tilde{V}_{BC}) \\
                       &= \det V_{BC} \det (\tilde{V}_{AC}/\tilde{V}_{C}) \\
                       &= \det V_{BC} \det (V_{AC}/V_{C}) ,
\end{align*}
we also arrive at the inequality
\begin{equation}
  \det V_{ABC} \leq \det (V_{ABC}!\tilde{V}_{ABC}) .
\end{equation}
To illustrate the power of this relation,
we note that inserting the harmonic-geometric mean inequality
for matrices~\cite[Corollary 2.1]{Ando79}
\begin{equation*}
  A!B \leq A\# B
\end{equation*}
yields again SSA~\eqref{SSA} in the form
$\det \tilde{V}_{ABC} \geq \det V_{ABC}$.

\section{A lower bound on $I_{M}(A:B|C)_{V}$}
\label{sec:I_M-lower-bound}
Throughout this section, we explore some ways of strengthening Theorem~\ref{thm satur}
by finding a suitable lower bound on the log-det conditional mutual information $I_{M}(A:B|C)_{V}$. The expression we are
seeking should have two main features: (a) it should be easily computable in
terms of the blocks of $V_{ABC}$; and (b) the explicit saturation condition in Theorem~\ref{thm satur}(4) should be easily readable from it. This latter requirement can be accommodated, for example, if the lower bound involves some kind of distance between the off-diagonal block $X$ and its `saturation value' $YC^{-1}Z^{\intercal}$. We start with a preliminary result.

\begin{prop}
  \label{prop I(A:B)}
  For all matrices
  \begin{equation*}
    V_{AB} = \begin{pmatrix} A & X \\ X^{\intercal} & B \end{pmatrix} \geq 0,
  \end{equation*}
  we have
  \begin{equation}
  \begin{split}
    I_{M}(A:B)_{V} &\geq \frac12 \Tr [A^{-1} X B^{-1} X^{\intercal}] \\
                   &= \frac12 \big\| A^{-1/2} X B^{-1/2} \big\|^{2}_{2} .
  \end{split}
  \label{prop I(A:B) eq}
  \end{equation}
\end{prop}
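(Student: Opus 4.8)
The plan is to reduce the bipartite inequality
\[
I_M(A:B)_V \;=\; \frac12 \ln\frac{\det A \det B}{\det V_{AB}} \;\geq\; \frac12 \Tr[A^{-1} X B^{-1} X^\intercal]
\]
to a scalar logarithmic inequality via an eigenvalue decomposition. First I would exploit the congruence covariance of the determinants: writing $V_{AB}$ in block form and applying the congruence $S = A^{-1/2}\oplus B^{-1/2}$, one has $\det V_{AB} = \det A \, \det B \, \det\!\bigl(\mathds{1} - K K^\intercal\bigr)$ for the ``correlation matrix'' $K \coloneqq A^{-1/2} X B^{-1/2}$, using the factorisation formula~\eqref{det factor} applied to $\lmatrix \mathds{1} & K \\ K^\intercal & \mathds{1} \rmatrix$ (whose Schur complement with respect to the top-left block is $\mathds{1} - K^\intercal K$, and $\det(\mathds{1}-K^\intercal K) = \det(\mathds{1}-KK^\intercal)$). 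Consequently the left-hand side equals $-\tfrac12 \ln\det(\mathds{1} - KK^\intercal)$, while the right-hand side equals $\tfrac12 \Tr[KK^\intercal] = \tfrac12 \|K\|_2^2$. Positivity of $V_{AB}$ forces $KK^\intercal \leq \mathds{1}$, so all singular values $\sigma_i$ of $K$ lie in $[0,1]$.

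Next I would diagonalise: letting $\sigma_1^2,\dots,\sigma_r^2$ be the eigenvalues of $KK^\intercal$ (equivalently the squared singular values of $K$), the claimed inequality becomes
\[
-\frac12 \sum_i \ln\bigl(1-\sigma_i^2\bigr) \;\geq\; \frac12 \sum_i \sigma_i^2,
\]
which follows termwise from the elementary scalar bound $-\ln(1-u) \geq u$ valid for all $u \in [0,1)$ (and trivially for $u=1$, where the left side is $+\infty$). This reduces everything to a one-line calculus fact.

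The only genuinely delicate point is the degenerate case where $V_{AB}$ is merely positive \emph{semi}definite rather than positive definite, so that $A$ or $B$ may be singular and $A^{-1}$, $B^{-1}$ need not exist — yet the proposition is stated for $V_{AB}\geq 0$. Here I would either (a) restrict to the range: if $A$ is singular, positivity of $V_{AB}$ forces the corresponding rows of $X$ to lie in the range of $A$, so $A^{-1}$ can be read as the pseudoinverse on that subspace and the argument goes through verbatim on the supports; or (b) note that if $\det A = 0$ or $\det B = 0$ then $\det V_{AB} = 0$ as well (by~\eqref{det factor}), the left-hand side is $+\infty$, and there is nothing to prove, while if both $A,B>0$ we are in the nondegenerate case above. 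I expect this boundary bookkeeping to be the main (very minor) obstacle; the core of the proof is the reduction to singular values and the scalar inequality $-\ln(1-u)\geq u$.

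Finally, I would record that equality holds if and only if every $\sigma_i = 0$, i.e.\ $K=0$, i.e.\ $X = 0$ (when $A,B>0$), which is consistent with the known condition that $I_M(A:B)_V = 0$ exactly when $V_{AB} = V_A \oplus V_B$; this remark is useful for the subsequent application to the saturation analysis of Theorem~\ref{thm satur}.
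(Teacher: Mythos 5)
Your proposal is correct and follows essentially the same route as the paper: both reduce $\det V_A\det V_B/\det V_{AB}$ to $1/\det\!\big(\mathds{1}-A^{-1/2}XB^{-1}X^\intercal A^{-1/2}\big)$ via the Schur complement, and then apply the logarithmic bound $-\ln(1-u)\geq u$ — which the paper phrases as the operator inequality $\ln(\mathds{1}+\Delta)\leq\Delta$ combined with $\ln\det=\Tr\ln$, and you phrase as the same scalar inequality applied termwise to the singular values of $K=A^{-1/2}XB^{-1/2}$. Your remarks on the semidefinite boundary case and on equality are sensible additions but do not change the substance of the argument.
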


\begin{proof}
Using, in this order, the standard factorisation of the determinant in terms
of the Schur complement, the identity $\ln \det V = \Tr \ln V$ (where $V>0$),
and the inequality $\ln(\mathds{1}+\Delta)\leq \Delta$ (for Hermitian
$\Delta > -\mathds{1}$), we find
\begin{align*}
  I_{M}(A:B)_{V} &=    \frac12 \ln \frac{\det V_{A}\det V_{B}}{\det V_{AB}} \\[0.8ex]
                 &=   -\frac12 \ln \det V_{A}^{-1/2}(V_{AB}/V_{B})V_{A}^{-1/2} \\[0.8ex]
                 &=   -\frac12 \ln \det (\mathds{1} - A^{-1/2}XB^{-1}X^\intercal A^{-1/2}) \\[0.8ex]
                 &=   -\frac12 \Tr \ln (\mathds{1} - A^{-1/2}XB^{-1}X^\intercal A^{-1/2})  \\[0.8ex]
                 &\geq \frac12 \Tr A^{-1/2}XB^{-1}X^\intercal A^{-1/2} \\[0.8ex]
                 &=    \frac12 \Tr A^{-1}XB^{-1}X^\intercal \\[0.8ex]
                 &=    \frac12  \big\| A^{-1/2} X B^{-1/2} \big\|^{2}_{2}\, ,
\end{align*}
and we are done.
\end{proof}

\begin{thm}
  \label{thm lower b}
  For all $V_{ABC}>0$ written in block form as in~\eqref{global CM}, we have
  the following chain of inequalities:
  \begin{align}
    I_{M}(A:B|C)_{V}
      &\geq \frac12 \Tr\Big[ (V_{AC}/V_{C})^{-1} (X-YC^{-1}Z^{\intercal}) \nonumber\\
      &\quad \cdot (V_{BC}/V_{C})^{-1} (X-YC^{-1}Z^{\intercal})^{\intercal} \Big] \\[0.8ex]
      &\geq \frac12 \Tr\Big[ A^{-1} (X-YC^{-1}Z^{\intercal}) \nonumber \\
      &\quad \cdot B^{-1} (X-YC^{-1}Z^{\intercal})^{\intercal} \Big] \\[0.8ex]
      &= \frac12 \left\| A^{-1/2}(X-YC^{-1}Z^{\intercal})B^{-1/2} \right\|^{2}_{2}\, .
    \label{lower b}
  \end{align}
\end{thm}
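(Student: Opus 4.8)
The plan is to reduce Theorem~\ref{thm lower b} to Proposition~\ref{prop I(A:B)} by moving to the Schur-complemented covariance matrix, and then to control the resulting trace expression by a monotonicity argument. First I would invoke identity~\eqref{I cond Schur} from Theorem~\ref{thm I cond geom}, which says $I_M(A:B|C)_V = I_M(A:B)_{V_{ABC}/V_C}$. The $2\times2$-block structure of $V_{ABC}/V_C$ on the $AB$ system is exactly what Proposition~\ref{prop I(A:B)} applies to: its diagonal blocks are $V_{AC}/V_C$ and $V_{BC}/V_C$ (these follow from the quotient formula for Schur complements, or equivalently from~\eqref{Schur definition} applied to the principal submatrices $V_{AC}$ and $V_{BC}$ inside $V_{ABC}$), and its off-diagonal block is $X - YC^{-1}Z^\intercal$ — precisely the `distance from the saturation value' we are after. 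Applying~\eqref{prop I(A:B) eq} to $V_{ABC}/V_C$ therefore yields directly the first inequality
\[
  I_M(A:B|C)_V \geq \frac12 \Tr\!\Big[(V_{AC}/V_C)^{-1}(X-YC^{-1}Z^\intercal)(V_{BC}/V_C)^{-1}(X-YC^{-1}Z^\intercal)^\intercal\Big].
\]

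For the second inequality, the key observation is that Schur complements only decrease a positive matrix in the L\"owner order: since $V_{AC} = \lmatrix A & Y \\ Y^\intercal & C \rmatrix > 0$, we have $V_{AC}/V_C = A - YC^{-1}Y^\intercal \leq A$, and likewise $V_{BC}/V_C \leq B$. Taking inverses reverses the order, so $(V_{AC}/V_C)^{-1} \geq A^{-1}$ and $(V_{BC}/V_C)^{-1} \geq B^{-1}$. The trace expression $\Tr[P\, \Delta\, Q\, \Delta^\intercal]$ with $\Delta = X - YC^{-1}Z^\intercal$ is monotone nondecreasing in each of the positive arguments $P$ and $Q$ separately: writing it as $\Tr[(Q^{1/2}\Delta^\intercal P^{1/2})(Q^{1/2}\Delta^\intercal P^{1/2})^\intercal] = \|P^{1/2}\Delta Q^{1/2}\|_2^2 \geq 0$ makes positivity clear, and replacing $P$ by $P - P' \geq 0$ with fixed $Q$ only adds a nonnegative term $\Tr[(P-P')\Delta Q \Delta^\intercal] \geq 0$ since $\Delta Q \Delta^\intercal \geq 0$. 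Applying this monotonicity twice — first lowering $(V_{AC}/V_C)^{-1}$ to $A^{-1}$, then $(V_{BC}/V_C)^{-1}$ to $B^{-1}$ — gives the second displayed inequality. The final equality is just the rewriting $\Tr[A^{-1}\Delta B^{-1}\Delta^\intercal] = \|A^{-1/2}\Delta B^{-1/2}\|_2^2$ already used in Proposition~\ref{prop I(A:B)}.

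The one point needing a little care is that Proposition~\ref{prop I(A:B)} is stated for $V_{AB}\geq0$, so I should check its hypotheses are met by $V_{ABC}/V_C$: since $V_{ABC}>0$ and $V_C>0$, property~\eqref{Schur pos} gives $V_{ABC}/V_C>0$, and its $AB$-principal submatrix is then also positive, so the proposition applies cleanly. I expect no genuine obstacle here — the argument is a clean composition of~\eqref{I cond Schur}, Proposition~\ref{prop I(A:B)}, and the elementary monotonicity of $\Tr[P\Delta Q\Delta^\intercal]$; the only thing to get right is correctly identifying the diagonal blocks of $V_{ABC}/V_C$ via the Schur-complement quotient identity, which is a standard fact (and already implicit in the computation proving~\eqref{I cond Schur} in Theorem~\ref{thm I cond geom}).
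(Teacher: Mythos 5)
Your proof is correct, but it takes a genuinely different route from the paper's. The paper proves the first inequality via the identity $I_M(A:B|C)_V = I_M(A:B)_{V^{-1}}$ (equation~\eqref{I cond inv}): it applies Proposition~\ref{prop I(A:B)} to $V^{-1}$, which forces a somewhat laborious computation of the off-diagonal block $\Pi_A V^{-1}\Pi_B^\intercal$ using the block-inversion formula~\eqref{inv} and the quotient identity $(V_{ABC}/V_C)\big/(V_{AC}/V_C)=V_{ABC}/V_{AC}$ — indeed the block has to be computed in two different ways (once with $A$ and $B$ exchanged) so that the factors $(V_{ABC}/V_{BC})$ and $(V_{ABC}/V_{AC})$ cancel against the diagonal blocks $(V^{-1})_A^{-1}$ and $(V^{-1})_B^{-1}$. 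You instead use the companion identity $I_M(A:B|C)_V = I_M(A:B)_{V_{ABC}/V_C}$ (equation~\eqref{I cond Schur}) and apply Proposition~\ref{prop I(A:B)} to $W=V_{ABC}/V_C$, whose blocks
\begin{equation*}
V_{ABC}/V_C = \lmatrix A-YC^{-1}Y^\intercal & X-YC^{-1}Z^\intercal \\ X^\intercal-ZC^{-1}Y^\intercal & B-ZC^{-1}Z^\intercal \rmatrix = \lmatrix V_{AC}/V_C & X-YC^{-1}Z^\intercal \\ (X-YC^{-1}Z^\intercal)^\intercal & V_{BC}/V_C \rmatrix
\end{equation*}
are read off directly from the definition~\eqref{Schur definition}, with no inversion formula needed; positivity of $W$ follows from~\eqref{Schur pos} as you note. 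The two routes land on the identical first bound, and your second step (the monotonicity of $\Tr[P\Delta Q\Delta^\intercal]$ in $P,Q\geq 0$ combined with $(V_{AC}/V_C)^{-1}\geq A^{-1}$ and $(V_{BC}/V_C)^{-1}\geq B^{-1}$) is exactly the paper's closing argument. What your version buys is brevity and transparency — the `distance from saturation' block $X-YC^{-1}Z^\intercal$ appears immediately rather than emerging from a cancellation; what the paper's version buys is an explicit formula for the off-diagonal block of $V^{-1}$ in terms of Schur complements, which is of some independent interest but not needed for the stated result.
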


\begin{proof}
We want to use the identity~\eqref{I cond inv} to lower bound $I_{M}(A:B|C)_{V}$.
In order to do so, we need to write out the $A$-$B$ off-diagonal block of the
inverse $(V_{ABC})^{-1}$. With the help of the projectors onto the $A$ and $B$
components, denoted by $\Pi_{A}$ and $\Pi_{B}$ respectively, we are seeking an
explicit expression for $\Pi_{A} (V_{ABC})^{-1} \Pi_{B}^{\intercal}$. Remember
that the block-inversion formula~\eqref{inv} gives
\begin{align}
  \Pi_{1} (W_{12})^{-1} \Pi_{1}^{\intercal} &= (W_{12}/W_{2})^{-1} , \label{inv1} \\[0.8ex]
  \Pi_{1} (W_{12})^{-1} \Pi_{2}^{\intercal} &= - W_{1}^{-1} (\Pi_{1}W_{12}\Pi_{2}^{\intercal})
                                                            (W_{12}/W_{1})^{-1} , \label{inv2}
\end{align}
for an arbitrary bipartite block matrix $W_{12}$. This allows us to write
\begin{align*}
  \Pi_{A} (V_{ABC})^{-1} \Pi_{B}^{\intercal}
     &= \Pi_{A} \Pi_{AB} (V_{ABC})^{-1} \Pi_{AB}^{\intercal} \Pi_{B}^{\intercal} \\[0.8ex]
     &= \Pi_{A} (V_{ABC}/V_{C})^{-1} \Pi_{B}^{\intercal} \\[0.8ex]
     &= -(V_{AC}/V_{C})^{-1} \bigl( \Pi_{A} V_{ABC}/V_{C} \Pi_{B}^{\intercal} \bigr) \\
     &\quad \cdot \bigl( (V_{ABC}/V_{C}) \big/ (V_{AC}/V_{C}) \bigr)^{-1} \\[0.8ex]
     &= -(V_{AC}/V_{C})^{-1} \bigl( X - YC^{-1} Z^{\intercal} \bigr) \\
     &\quad \cdot (V_{ABC}/V_{AC})^{-1} .
\end{align*}
Exchanging $A$ and $B$ in this latter expression and taking
subsequently the transpose we arrive also at
\begin{equation*}
\begin{split}
  \Pi_{A} (V_{ABC})^{-1} \Pi_{B}^{\intercal}
     &= -(V_{ABC}/V_{BC})^{-1} \big( X - YC^{-1} Z^{\intercal} \big) \\
     &\quad \cdot (V_{BC}/V_{C})^{-1} .
\end{split}
\end{equation*}
Now we are ready to invoke Proposition~\ref{prop I(A:B)} to write
\begin{align*}
  & I_{M}(A:B|C)_{V} \\[0.8ex]
  &\quad = I_{M}(A:B)_{V^{-1}} \\[0.8ex]
           &\quad\geq \frac12 \Tr \Big[(V^{-1})_{A}^{-1} (\Pi_{A} V^{-1} \Pi_{B}^{\intercal}) \\
           &\quad\quad \cdot (V^{-1})_{B}^{-1} (\Pi_{B}^{\intercal} V^{-1} \Pi_{A}) \Big] \\[0.8ex]
           &\quad=    \frac12 \Tr \Bigl[ (V_{ABC}/V_{BC}) \\
           &\quad\quad \cdot \bigl( (V_{ABC}/V_{BC})^{-1} (X-YC^{-1} Z^{\intercal}) (V_{BC}/V_{C})^{-1} \bigr) \\
           &\quad\quad \cdot (V_{ABC}/V_{AC}) \\
           &\quad\quad \cdot \bigl( (V_{AC}/V_{C})^{-1} (X-YC^{-1} Z^{\intercal}) (V_{ABC}/V_{AC})^{-1} \big)^{\intercal} \Bigr] \\[0.8ex]
           &\quad= \frac12 \Tr \Big[ (V_{AC}/V_{C})^{-1} ( X - YC^{-1} Z^{\intercal}) \\
           &\quad\quad \cdot (V_{BC}/V_{C})^{-1} ( X - YC^{-1} Z^{\intercal})^{\intercal} \Big]\, .
\end{align*}
Since on the one hand $V_{AC}/V_{C}\leq V_{A}=A$, and on the other hand the
expression $\Tr R K S K^{\intercal}$ is clearly monotonic in $R,S\geq 0$,
we finally obtain
\begin{align*}
  I_{M}(A:B|C)_{V} &\geq \frac12 \Tr \bigl[ A^{-1} ( X - YC^{-1} Z^{\intercal}) \\
  &\quad \cdot B^{-1} ( X - YC^{-1} Z^{\intercal})^{\intercal} \bigr] \\[0.8ex]
                   &= \frac12 \bigl\| A^{-1/2}(X-YC^{-1}Z^{\intercal})B^{-1/2} \bigr\|^{2}_{2}\, .
\end{align*}
\end{proof}

It can easily be seen that the above result satisfies the requirements stated in the beginning of the section, i.e. it is easily computable in terms of the blocks of $V_{ABC}$ and it is faithful.

We are now ready to start the investigation of {\em quantum} covariance matrices in the next section.

\section{Strengthenings of SSA for quantum covariance matrices \protect\\ and \mbox{R\'enyi-$2$} Gaussian squashed entanglement}
\label{sec:Renyi-2-Gaussian-squashed}

\subsection{Gaussian states in quantum optics}
In this final section we show how to apply results on log-det conditional
mutual information to infer properties of Gaussian states in quantum optics.
Before doing so, let us provide a very brief introduction to quantum optics,
a framework of great importance for practical applications and
implementations of quantum communication protocols.
The set of $n$ electromagnetic modes that are available for transmission
of information translates to a set of $n$ pairs of canonical operators
$x_i, p_j$ ($i=1,\ldots, n$) acting on an infinite-dimensional Hilbert space
and obeying the canonical commutation relations
$[x_i, p_j] = i\delta_{ij}$ (in natural units with $\hbar=1$). These operators are the non-commutative
analogues of the classical electric and magnetic fields.
By introducing the vector notation $r\coloneqq (x_1,p_1,\ldots,x_n,p_n)^{\intercal}$
we can rewrite the canonical commutation relations in the more convenient form
\begin{equation}
[r,r^\intercal ] = i \Omega \coloneqq i {\begin{pmatrix} 0 & 1 \\ -1 & 0 \end{pmatrix}\!}^{\oplus n} = i\begin{pmatrix} 0 & \mathds{1} \\ -\mathds{1} & 0 \end{pmatrix} ,
\label{CCR}
\end{equation}
where $\Omega$ is called the \emph{standard symplectic form}. The antisymmetric, non-degenerate quadratic form identified by $\Omega$ is called \emph{standard symplectic product}, and the linear space $\mathds{R}^{2n}$ endowed with this product is a \emph{symplectic space}. In what follows, the symplectic space associated with a quantum optical system $A$ will be denoted with $\Sigma_A$. For an introduction to symplectic geometry, we refer the reader to the excellent monograph~\cite{deGosson}.

Following the formalism of quantum mechanics, we represent states
as \emph{density matrices}, i.e. positive semidefinite, trace class
operators acting on the background Hilbert space. For the probabilistic
interpretation of measurements to be consistent, we assume any density
matrix $\rho$ to have unit trace, i.e. $\Tr \rho=1$. Exactly as in the classical
case, also for quantum electromagnetic fields the Hamiltonian is quadratic
in the canonical operators. Thus, not surprisingly, the states that are most
frequently produced in the laboratories are thermal states of quadratic
Hamiltonians of the form $\mathcal{H}=\frac12 r^\intercal H r$, where $H>0$
is a $2n\times 2n$ real, positive definite matrix. These states are so special
that they deserve a name on their own, being called {\em Gaussian states}
\cite{biblioparis,weedbrook12,Adesso14}. The reason is intuitively clear:
since a thermal state of a system with Hamiltonian $\mathcal{H}$ is well-known to
be representable as $\rho = \frac{e^{-\beta \mathcal{H}}}{\mathcal{Z}}$,
where $\mathcal{Z}$ is a normalisation constant and $\beta=1/kT$ is the
inverse temperature, it is clear how a quadratic Hamiltonian produces an
expression resembling a Gaussian function.\footnote{This intuitive reason is in
fact supported by more substantial arguments. Namely, Gaussian states
are also identified by a Gaussian Wigner function, as written in~\eqref{Wigner}.}

For a quantum state described by a density matrix $\rho$ the
first moments are given by the expected value of the field operators,
in turn expressible as $s=\Tr [\rho r]$.
However, as expected, the information-theoretical properties of
Gaussian states can be fully understood in terms of the second-moment
correlations they display, encoded in the $2n\times 2n$ covariance matrix $V$
whose entries are
\begin{equation}
   V_{ij} \coloneqq \Tr \left[ \rho \left\{(r-s)_i, (r-s)_j \right\} \right] ,
   \label{QCM}
\end{equation}
where the anticommutator $\{H,K\}\coloneqq HK+KH$ is needed
in the quantum case in order to make the above expression real, and $s\coloneqq s\cdot \text{id}$ as operators on the Hilbert space.\footnote{It is customary
not to divide by $2$ when defining the covariance matrix in the quantum case.
The reason will become apparent in a moment.}
Any quantum state $\rho$ of an $n$-mode electromagnetic field can be equivalently described in terms of phase space quasi-probability distributions, such as the Wigner distribution~\cite{Wigner}. Hence Gaussian states can be defined, in general, as the continuous variable states with a Gaussian Wigner distribution, given by
\begin{equation}\label{Wigner}
W_\rho(\xi) \coloneqq \frac{1}{\pi^n \sqrt{\det V}} e^{-(\xi-s)^\intercal V^{-1} (\xi-s)},
\end{equation}
in terms of the vector of first moments $s$ and the QCM $V$, with $\xi \in \mathds{R}^{2n}$ a phase space coordinate vector.

Let us have a closer look at the set of matrices arising from~\eqref{QCM}. Differently from
what happens in the classical case, not every positive definite matrix $V>0$
can be the covariance matrix of a Gaussian state. In fact, Heisenberg's uncertainty
principle imposes further constraints, quantum mechanical in nature. It turns out
\cite{simon94} that covariance matrices of quantum states (not necessarily Gaussian)
must obey the inequality
\begin{equation}
V\geq i\Omega . \label{Heisenberg}
\end{equation}
Furthermore, all $2n\times 2n$ real matrices satisfying~\eqref{Heisenberg},
collectively called \emph{quantum covariance matrices} (QCMs) can
be covariance matrices of suitably chosen Gaussian states.
Therefore, according to our convenience, we can think of Gaussian states as operators on the background Hilbert space, or we can adopt the complementary picture at the symplectic space level, and parametrise Gaussian states with their covariance matrices.

Clearly, linear transformations $r \rightarrow S r$ that preserve the commutation relations~\eqref{CCR} play a special role within this framework. Any such transformation is described by a {\it symplectic} matrix, i.e.~a matrix $S$ with the property that $S\Omega S^\intercal=\Omega$. Symplectic matrices form a non-compact, connected Lie group that is additionally closed under transposition, and is typically denoted by $\mathrm{Sp}(2n,\mathds{R})$~\cite{pramana}. The importance of these operations arises from the fact that for any symplectic $S$ there is a unitary evolution $U_S$ on the Hilbert space such that $U_S^\dag r U_S = Sr$. When a unitary conjugation $\rho\mapsto U_S \rho U_S^\dag$ is applied to a state $\rho$, its covariance matrix transforms as $V\mapsto SVS^\intercal$. Accordingly, observe that~\eqref{Heisenberg} is preserved under congruences by symplectic matrices. It turns out that under such congruences positive matrices can be brought into a remarkably simple form.

\begin{lemma}[Williamson's decomposition~\cite{willy,willysim}] \label{lemma Williamson}
Let $K>0$ be a positive, $2n\times 2n$ matrix. Then there is a symplectic transformation $S$ such that $K= S \Delta S^\intercal$, where according to the block decomposition~\eqref{CCR} one has $\Delta= \lmatrix D & 0 \\ 0 & D \rmatrix$, and $D$ is a positive diagonal matrix whose nonzero entries depend (up to their order) only on $K$, and are called symplectic eigenvalues.
\end{lemma}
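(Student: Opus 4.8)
The plan is to diagonalize $K$ by \emph{congruence}, reducing the statement to the classical real normal form of antisymmetric matrices. First I would pass to the auxiliary matrix $M \coloneqq K^{1/2}\Omega K^{1/2}$, where $K^{1/2}>0$ is the positive square root of $K$. Since $\Omega^\intercal=-\Omega$ one checks immediately that $M^\intercal = K^{1/2}\Omega^\intercal K^{1/2} = -M$, so $M$ is real and antisymmetric, and it is invertible because both $K^{1/2}$ and $\Omega$ are. Now I would invoke the canonical form of real antisymmetric matrices: there is an orthogonal matrix bringing $M$ to block-diagonal form with $2\times 2$ blocks $d_j\lmatrix 0 & 1 \\ -1 & 0 \rmatrix$; invertibility of $M$ forces every $d_j\neq 0$, and by swapping the two coordinates inside a block where necessary we may take $d_j>0$. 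After composing with a fixed permutation matrix that rearranges the $2n$ coordinates into the ``all-$x$ then all-$p$'' order of~\eqref{CCR}, we obtain an orthogonal $O$ with $O^\intercal M O = \lmatrix 0 & D \\ -D & 0 \rmatrix$, where $D=\mathrm{diag}(d_1,\dots,d_n)>0$.

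Next I would exhibit the symplectic matrix explicitly. Set $\tilde D \coloneqq D^{1/2}\oplus D^{1/2}$, so that $\tilde D^{-1}\Delta\,\tilde D^{-1}=\mathds{1}$ for $\Delta=D\oplus D$, and define $S\coloneqq K^{1/2}O\tilde D^{-1}$, which is invertible. A one-line computation then gives $S\Delta S^\intercal = K^{1/2}O\big(\tilde D^{-1}\Delta\,\tilde D^{-1}\big)O^\intercal K^{1/2} = K^{1/2}OO^\intercal K^{1/2}=K$, which is the claimed decomposition with $\Delta$ of the required form. The only remaining point is that $S$ is symplectic, i.e.~$S\Omega S^\intercal=\Omega$. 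For this I would compute $\tilde D^{-1}\Omega\,\tilde D^{-1} = \lmatrix 0 & D^{-1} \\ -D^{-1} & 0 \rmatrix = -\big(O^\intercal M O\big)^{-1}$, hence $O\big(\tilde D^{-1}\Omega\,\tilde D^{-1}\big)O^\intercal = -M^{-1} = -\big(K^{1/2}\Omega K^{1/2}\big)^{-1} = K^{-1/2}\Omega K^{-1/2}$, where the last equality uses $\Omega^{-1}=-\Omega$ (because $\Omega^2=-\mathds{1}$). Substituting into $S\Omega S^\intercal = K^{1/2}O\big(\tilde D^{-1}\Omega\,\tilde D^{-1}\big)O^\intercal K^{1/2}$ then yields $K^{1/2}\big(K^{-1/2}\Omega K^{-1/2}\big)K^{1/2}=\Omega$, as desired.

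For the uniqueness statement I would note that conjugating $\Omega K$ by $K^{1/2}$ gives exactly $M=K^{1/2}\Omega K^{1/2}$, so the Hermitian matrix $iM$ is similar to $i\Omega K$. By the canonical form above, the eigenvalues of $iM$ are precisely $\pm d_1,\dots,\pm d_n$; being eigenvalues, they are similarity invariants of $i\Omega K$ and hence depend only on $K$, not on the choices of $O$, of the permutation, or of the various square roots. Thus the multiset $\{d_1,\dots,d_n\}$ is intrinsic to $K$, and its elements are by definition the symplectic eigenvalues.

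The substantive work here is essentially bookkeeping rather than analysis: the one delicate step is to phrase the real antisymmetric normal form in the precise block shape $\lmatrix 0 & D \\ -D & 0 \rmatrix$ dictated by the convention for $\Omega$ in~\eqref{CCR} — as opposed to the interleaved form $\bigoplus_j d_j\lmatrix 0 & 1 \\ -1 & 0 \rmatrix$ — which costs one permutation, and to keep track of how $D^{1/2}$, the block structure of $\Omega$, and the congruence by $K^{1/2}$ interact. Once the normal form is set up correctly, the verifications $S\Delta S^\intercal=K$ and $S\Omega S^\intercal=\Omega$ are short computations with no real obstacle.
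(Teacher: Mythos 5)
Your proof is correct and complete. The paper does not prove this lemma at all --- it is quoted from the literature (the references \cite{willy,willysim}) --- and your argument via the antisymmetric matrix $M=K^{1/2}\Omega K^{1/2}$, its real orthogonal normal form, and the explicit symplectic matrix $S=K^{1/2}O\tilde D^{-1}$ is precisely the standard proof found in those references, including the identification of the symplectic eigenvalues as the moduli of the eigenvalues of $i\Omega K$, which correctly establishes the uniqueness claim.
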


Thanks to Williamson's decomposition, we see that~\eqref{Heisenberg} can be cast into the simple form $D\geq \mathds{1}$, and that the minimal elements in the set of QCMs are exactly those matrices $V$ for which one of the following equivalent conditions is met: (a) $D=\mathds{1}$; (b) $\det V=1$; (c) $\rk (V \pm i\Omega) = n$ (i.e. half the maximum). These special QCMs are called ``pure'', since the corresponding Gaussian state is a rank-one projector.

When the system under examination is made of several parties (each comprising a certain number
of modes), the global QCM will have a block structure as in~\eqref{global CM}.
The symplectic form in this case is simply given by the direct sum of
the local symplectic forms, e.g. for a composite system $AB$ one has $\Omega_{AB}=\Omega_{A}\oplus \Omega_{B}$. This can be rephrased by saying that the symplectic space associated with the system $AB$ is the direct sum of the symplectic spaces associated with $A$ and $B$, in formula $\Sigma_{AB}=\Sigma_{A}\oplus \Sigma_{B}$~\cite[Equation (1.4)]{deGosson}.
Conversely, discarding a subsystem corresponds to performing an orthogonal projection of the QCM onto the corresponding symplectic subspace~\cite[Section 1.2.1]{deGosson}, in formula $V_{A}=\Pi_{A} V_{AB} \Pi_{A}^\intercal$.

Pure Gaussian states enjoy many useful properties that we will exploit multiple times throughout this section. To explore them, a clever use of the complementarity between the two pictures at the Hilbert space level and at the QCM level is of prime importance. Let us illustrate this point by presenting three lemmas we will make use of in deriving the main results of this section.

\begin{lemma} \label{lemma pure reduction}
Let $V_{AB}$ be a QCM of bipartite system $AB$. Denote by $V_{A}=\Pi_{A} V_{AB} \Pi_{A}^\intercal$ the reduced QCM corresponding to the subsystem $A$, and analogously for $V_{B}$. If $V_{A}$ is pure, then $V_{AB} = V_{A} \oplus V_{B}$.
\end{lemma}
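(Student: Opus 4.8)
The plan is to play the density-operator picture of a Gaussian state against its covariance-matrix picture, and to read the claim off the elementary fact that a bipartite state with a pure marginal factorises. Concretely, I would fix zero first moments and let $\rho_{AB}$ be the Gaussian state with covariance matrix $V_{AB}$; then its marginal $\rho_A$ is the Gaussian state with covariance matrix $V_A=\Pi_A V_{AB}\Pi_A^\intercal$, which by the purity criterion ($\det V_A=1$) is a rank-one projector $\ket\psi\bra\psi$. For an arbitrary density operator, purity of the $A$-marginal forces $\rho_{AB}=(\ket\psi\bra\psi\otimes\mathds{1}_B)\,\rho_{AB}\,(\ket\psi\bra\psi\otimes\mathds{1}_B)=\ket\psi\bra\psi\otimes\rho_B$: indeed $\bra{\psi^\perp}\rho_A\ket{\psi^\perp}=0$ together with $\rho_{AB}\ge 0$ forces $(\ket{\psi^\perp}\bra{\psi^\perp}\otimes\mathds{1}_B)\rho_{AB}=0$ for every $\ket{\psi^\perp}\perp\ket\psi$. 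Taking second moments of the product state $\rho_A\otimes\rho_B$ then yields $V_{AB}=V_A\oplus V_B$.

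Since the lemma is really a statement about matrices, I would also record a self-contained proof at the level of QCMs. Because $V_{AB}$ is real and $V_{AB}\ge i\Omega_{AB}$, conjugating the inequality entrywise gives also $V_{AB}\ge -i\Omega_{AB}$. By Williamson's Lemma~\ref{lemma Williamson} applied to the pure block $V_A$ (all of whose symplectic eigenvalues equal $1$) there is a symplectic $S_A$ with $S_A V_A S_A^\intercal=\mathds{1}_{2n_A}$; conjugating $V_{AB}$ by $S_A\oplus\mathds{1}_B$, which is symplectic on $\Sigma_A\oplus\Sigma_B$, preserves both the QCM condition and the form ``$V_{AB}=V_A\oplus V_B$'', so we may assume $V_A=\mathds{1}$ and write $V_{AB}=\lmatrix \mathds{1} & Y\\ Y^\intercal & V_B\rmatrix$. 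The two QCM inequalities then read
\begin{equation*}
  \lmatrix \mathds{1}\mp i\Omega_A & Y\\ Y^\intercal & V_B\mp i\Omega_B\rmatrix\ge 0 .
\end{equation*}
Now $i\Omega_A$ is Hermitian with spectrum $\{+1,-1\}$, each of multiplicity $n_A$; let $E_\pm$ be its eigenspaces. Then $\ker(\mathds{1}-i\Omega_A)=E_+$, so for $v\in E_+$ and arbitrary $w$ the quadratic form $\lmatrix\lambda v\\ w\rmatrix^\dagger\lmatrix \mathds{1}-i\Omega_A & Y\\ Y^\intercal & V_B-i\Omega_B\rmatrix\lmatrix\lambda v\\ w\rmatrix=2\operatorname{Re}(\bar\lambda\, v^\dagger Y w)+w^\dagger(V_B-i\Omega_B)w$; letting $\lambda$ range over $\mathds{C}$ forces $v^\dagger Y w=0$, hence $E_+\subseteq\ker Y^\intercal$. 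The sign $+i\Omega_A$ gives $E_-\subseteq\ker Y^\intercal$ in the same way, and since $E_+\oplus E_-=\mathds{C}^{2n_A}$ we conclude $Y^\intercal=0$, i.e. $Y=0$ and $V_{AB}=V_A\oplus V_B$.

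No step is a genuine obstacle; the only points needing care are the hypotheses used under the hood. In the operator proof one must recall that a covariance matrix determines a Gaussian state only up to first moments (irrelevant here) and that marginals of Gaussian states are Gaussian with the expected principal submatrix. In the matrix proof the essential quantitative input is that ``$V_A$ pure'' means precisely that $\rk(V_A\mp i\Omega_A)$ is half-maximal, equivalently $\det V_A=1$; combined with the reality of $V_A$ this makes $\ker(V_A-i\Omega_A)$ and $\ker(V_A+i\Omega_A)$ complementary (their difference is $2i\Omega_A$, which is invertible, so the two kernels intersect trivially, and a dimension count finishes it), and it is exactly this that lets Williamson's normalisation reduce everything to the transparent computation above — indeed one can bypass Williamson altogether by running the kernel argument directly on $V_A\mp i\Omega_A$.
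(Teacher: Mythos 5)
Your first argument is essentially the paper's own proof: the paper also passes to the density-operator picture and observes that a bipartite state with a pure marginal must factorise (it phrases this as a rank count on $\rho_A=\Tr_B\rho_{AB}$, whereas you spell out the positivity argument $\bra{\psi^\perp}\rho_A\ket{\psi^\perp}=0\Rightarrow(\ket{\psi^\perp}\bra{\psi^\perp}\otimes\mathds{1}_B)\rho_{AB}=0$; both are fine). Your second argument is genuinely different and does not appear in the paper: it stays entirely at the level of covariance matrices, normalising $V_A\mapsto\mathds{1}$ by Williamson and then forcing $Y=0$ from the two Hermitian constraints $V_{AB}\geq\pm i\Omega_{AB}$ by testing the quadratic form on vectors supported on $\ker(\mathds{1}\mp i\Omega_A)$. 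This is correct — up to a harmless relabelling of $E_+$ and $E_-$ caused by the complex conjugation of the test vector, which is immaterial since the two signs together exhaust $\mathds{C}^{2n_A}$ — and it is in the same spirit as the implication $2\Rightarrow 4$ in Theorem~\ref{thm satur}, where a vanishing block is likewise extracted by evaluating a positive form on well-chosen vectors. What the matrix route buys is self-containedness: the lemma becomes a consequence of $V_{AB}\geq i\Omega_{AB}$ and purity ($\rk(V_A\pm i\Omega_A)=n_A$, equivalently $\det V_A=1$) alone, with no appeal to the existence and regularity of the underlying Gaussian density operator and its marginals, which the paper's one-line proof takes for granted; the price is a somewhat longer computation where the Hilbert-space argument is immediate.
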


\begin{proof}
The statement becomes obvious at the Hilbert space level. In fact, the reduced state on $A$ of a bipartite state $\rho_{A B}$ is given by $\rho_{A} = \Tr_{B} \rho_{A B}$, where $\Tr_{B}$ denotes partial trace~\cite{Nielsen}. Evaluating the ranks of both sides of this equation shows that if $\rho_{A}$ is pure then the global state must be factorised.
\end{proof}

Extending the system as to include auxiliary degrees of freedom is a standard technique in quantum information, popularly referred to as going to the ``Church of the larger Hilbert space''~\cite{Church}. Such a technique can be most notably employed in order to \emph{purify} the system under examination, as detailed in the following lemma~\cite{G purif}.

\begin{lemma} \label{lemma pur}
For all QCMs $V_A$ pertaining to a system $A$ there exists an extension $AE$ of $A$ and a pure QCM $\gamma_{AE}$ such that $\Pi_A \gamma_{AE} \Pi_A^\intercal = V_A$, where $\Pi_A$ is the projector onto the symplectic subspace $\Sigma_A\subset \Sigma_{AE}$.
\end{lemma}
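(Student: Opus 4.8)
The plan is to prove Lemma~\ref{lemma pur} by exploiting the complementarity between the Hilbert space picture and the covariance matrix picture, exactly as advertised in the paragraph preceding the lemma. At the Hilbert space level, purification is a standard fact: given any density matrix $\rho_A$ acting on a Hilbert space $\mathcal{H}_A$, there exists an auxiliary Hilbert space $\mathcal{H}_E$ and a pure state (rank-one projector) $|\psi\rangle\!\langle\psi|_{AE}$ on $\mathcal{H}_A\otimes\mathcal{H}_E$ with $\Tr_E |\psi\rangle\!\langle\psi|_{AE} = \rho_A$. So the essential content of the lemma is that when $\rho_A$ is chosen to be a \emph{Gaussian} state with covariance matrix $V_A$, the purifying state $|\psi\rangle_{AE}$ can be taken to be Gaussian as well, with some pure QCM $\gamma_{AE}$ whose $A$-reduction (an orthogonal projection onto $\Sigma_A$, by the discussion above) equals $V_A$.

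First I would reduce everything to a normal form via Williamson's decomposition (Lemma~\ref{lemma Williamson}): write $V_A = S\Delta S^\intercal$ with $S\in\mathrm{Sp}(2n_A,\mathds{R})$ and $\Delta = \lmatrix D & 0 \\ 0 & D \rmatrix$, $D=\mathrm{diag}(d_1,\dots,d_{n_A})$ with each $d_k\geq 1$ by~\eqref{Heisenberg}. Since symplectic congruences correspond to unitary (hence purity-preserving) conjugations on the Hilbert space, and since the projection $\Pi_A$ commutes appropriately with the block structure, it suffices to purify $\Delta$ itself; one then conjugates the resulting purification by $S\oplus(\text{something symplectic on }E)$. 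Now $\Delta$ is a direct sum over the $n_A$ modes of single-mode thermal QCMs $\lmatrix d_k & 0 \\ 0 & d_k \rmatrix$, so by taking direct sums it is enough to purify a single-mode thermal state. For that, I would introduce one auxiliary mode per system mode, so $E$ has $n_A$ modes, and write down the explicit two-mode squeezed vacuum QCM
\begin{equation*}
\gamma_k = \begin{pmatrix} d_k \mathds{1}_2 & \sqrt{d_k^2-1}\, Z \\ \sqrt{d_k^2-1}\, Z & d_k \mathds{1}_2 \end{pmatrix}, \qquad Z = \begin{pmatrix} 1 & 0 \\ 0 & -1 \end{pmatrix},
\end{equation*}
and check directly that $\det\gamma_k = 1$ (so $\gamma_k$ is pure by criterion (b) in the Williamson discussion) and that its $A$-reduction is $\lmatrix d_k & 0 \\ 0 & d_k \rmatrix$. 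Setting $\gamma_{AE} = S_{\mathrm{ext}}\big(\bigoplus_k \gamma_k\big) S_{\mathrm{ext}}^\intercal$ for an appropriate symplectic $S_{\mathrm{ext}}$ built from $S$ on the $A$-block and the identity on $E$ (after suitably reordering modes so that the $A$-modes and $E$-modes are grouped), and noting that a symplectic congruence preserves both purity and the commutation of reduction with the $\Sigma_A$-projection, gives the claim.

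Alternatively — and this is probably the cleanest writeup — I would simply invoke the Hilbert space purification of the Gaussian state $\rho_A$, observe that the purifying pure state is the ground state of a quadratic Hamiltonian (hence Gaussian) because $\rho_A$ is thermal for such a Hamiltonian, and read off that its QCM $\gamma_{AE}$ is pure and restricts to $V_A$. The one genuine point to be careful about is the translation between ``partial trace'' on the Hilbert space and ``orthogonal projection onto the symplectic subspace'' on the QCM side — but this correspondence is exactly the one already recorded in the paragraph before the lemma ($V_A = \Pi_A V_{AB}\Pi_A^\intercal$), so it can be cited rather than reproved. The main (minor) obstacle is therefore purely bookkeeping: making sure the dimension count of $E$ is consistent and that the criterion $\det\gamma_{AE}=1$ for purity is verified for the explicit model; there is no conceptual difficulty, since the existence of Gaussian purifications is classical (see~\cite{G purif}), and one could even cite it outright, spending only a line or two to recall the two-mode-squeezed-vacuum construction for completeness.
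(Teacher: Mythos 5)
Your proposal is correct, but it does substantially more than the paper, whose entire proof of Lemma~\ref{lemma pur} is the single line ``See~\cite[Section III.D]{G purif}.'' --- i.e.\ exactly your ``alternative'' second route of citing the existence of Gaussian purifications outright. Your first route, the explicit construction, is the standard one and checks out: Williamson reduces the problem to purifying a single thermal mode $d_k\mathds{1}_2$, and your two-mode squeezed vacuum ansatz $\gamma_k = \lmatrix d_k\mathds{1}_2 & \sqrt{d_k^2-1}\,Z \\ \sqrt{d_k^2-1}\,Z & d_k\mathds{1}_2 \rmatrix$ has the right marginal and satisfies $\det\gamma_k = d_k^2\det\bigl((d_k-(d_k^2-1)/d_k)\mathds{1}_2\bigr)=1$ by the Schur-complement factorisation~\eqref{det factor}. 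Two small bookkeeping points if you write this up in full: (i) purity requires both $\det\gamma_k=1$ \emph{and} $\gamma_k\geq i\Omega$; the cleanest way to get both at once is to observe that $\gamma_k$ is a symplectic (two-mode squeezing) congruence of the vacuum $\mathds{1}_4$, so it is automatically a pure QCM; (ii) the paper's Williamson normal form $\Delta=\lmatrix D&0\\0&D\rmatrix$ is in $xx\ldots pp\ldots$ ordering while your per-mode direct sum is in $xp\,xp\ldots$ ordering, so the ``suitable reordering of modes'' you mention is genuinely needed, though harmless. What the explicit construction buys over the paper's citation is self-containedness plus the sharp dimension count $n_E=n_A$ (in fact $n_E$ can be taken equal to the number of symplectic eigenvalues of $V_A$ strictly greater than $1$, which is the content of the paper's subsequent Lemma~\ref{lemma fact out}); what the citation buys is brevity.
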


\begin{proof}
See~\cite[Section III.D]{G purif}.
\end{proof}

Let us present here another useful observation.

\begin{lemma} \label{lemma fact out}
For all QCMs $V_A\geq i\Omega_A$ of a system $A$, there is a decomposition $\Sigma_A=\Sigma_{A_1} \oplus \Sigma_{A_2}$ of the global symplectic space into a direct sum of two symplectic subspaces such that
\begin{equation}
V_A = V_{A_1} \oplus \eta_{A_2} ,
\end{equation}
where $V_{A_1} > i\Omega_{A_1}$ and $\eta_{A_2}$ is a pure QCM. Furthermore, for every purification $\gamma_{AE}$ of $V_A$ (see Lemma~\ref{lemma pur}) there is a symplectic decomposition of $E$ as $\Sigma_{E}=\Sigma_{E_1} \oplus \Sigma_{E_2}$ such that: (a) $\gamma_{AE}=\gamma_{A_1E_1} \oplus \eta_{A_2} \oplus \tau_{E_2}$, with $\eta_{A_2}, \tau_{E_2}$ pure QCMs; (b) $n_{A_1} = n_{E_1}$; and (c) $\gamma_{E_1}>i\Omega_{E_1}$.
\end{lemma}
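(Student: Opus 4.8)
The plan is to prove the first, "intrinsic" part of the statement by reducing it to Williamson's decomposition (Lemma~\ref{lemma Williamson}), and then to obtain the statement about purifications by combining this splitting with Lemma~\ref{lemma pure reduction} and Lemma~\ref{lemma pur}. For the first part, I would apply Williamson's decomposition to $V_A$, writing $V_A = S\Delta S^\intercal$ with $\Delta = D\oplus D$ in the block decomposition associated with $\Omega_A$ and $D = \operatorname{diag}(d_1,\ldots,d_{n_A})$, $d_i\geq 1$ (the inequality $d_i\geq 1$ being exactly the rephrasing of $V_A\geq i\Omega_A$ recalled just after Lemma~\ref{lemma Williamson}). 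Reorder the symplectic coordinates so that the symplectic eigenvalues equal to $1$ come last: say $d_i > 1$ for $i\leq k$ and $d_i = 1$ for $i > k$. This induces a symplectic splitting $\Sigma_A = \Sigma_{A_1}\oplus\Sigma_{A_2}$, where $A_1$ carries the first $k$ modes and $A_2$ the remaining $n_A-k$, and in these coordinates $\Delta = \Delta_{A_1}\oplus\Delta_{A_2}$ with $\Delta_{A_1} > \mathds{1}$ and $\Delta_{A_2} = \mathds{1}_{2(n_A-k)}$. The only subtlety is that conjugating by $S$ does not obviously respect this direct sum; but since a permutation of the modes is itself symplectic, and $S$ maps each chosen symplectic subspace of the Williamson frame to a symplectic subspace of $\Sigma_A$, the matrix $V_A = S(\Delta_{A_1}\oplus\Delta_{A_2})S^\intercal$ does decompose as $V_{A_1}\oplus\eta_{A_2}$ once we use $S$ to define the subspaces $\Sigma_{A_1},\Sigma_{A_2}\subset\Sigma_A$; here $V_{A_1}$ has all symplectic eigenvalues strictly above $1$, hence $V_{A_1} > i\Omega_{A_1}$, and $\eta_{A_2}$ has all symplectic eigenvalues equal to $1$, hence is a pure QCM (condition (a) in the list after Lemma~\ref{lemma Williamson}).

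For the statement about purifications, let $\gamma_{AE}$ be any pure QCM with $\Pi_A\gamma_{AE}\Pi_A^\intercal = V_A$ (Lemma~\ref{lemma pur}). Using the splitting $\Sigma_A = \Sigma_{A_1}\oplus\Sigma_{A_2}$ just constructed, observe that the reduced QCM on $A_2$, namely $\Pi_{A_2}\gamma_{AE}\Pi_{A_2}^\intercal = \eta_{A_2}$, is pure. Apply Lemma~\ref{lemma pure reduction} with the bipartition $A_2\,|\,(A_1E)$: since $\eta_{A_2}$ is pure, the global QCM factorises as $\gamma_{AE} = \eta_{A_2}\oplus\gamma_{A_1E}$. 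Now $\gamma_{A_1E}$ is again pure (a direct sum of QCMs is pure iff each summand is, by condition (b): $\det(\eta_{A_2}\oplus\gamma_{A_1E}) = \det\eta_{A_2}\det\gamma_{A_1E} = 1$ forces $\det\gamma_{A_1E} = 1$), and it purifies $V_{A_1}$, which satisfies $V_{A_1} > i\Omega_{A_1}$. So it suffices to treat the case in which $V_A$ itself satisfies the strict inequality $V_A > i\Omega_A$, renaming $A_1$ back to $A$, and then produce the further splitting of $E$.

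For that remaining case, I would argue symmetrically on the $E$ side. Apply the first part of the lemma to the QCM $\gamma_E := \Pi_E\gamma_{AE}\Pi_E^\intercal$, obtaining a symplectic decomposition $\Sigma_E = \Sigma_{E_1}\oplus\Sigma_{E_2}$ with $\gamma_E = \gamma_{E_1}\oplus\tau_{E_2}$, where $\gamma_{E_1} > i\Omega_{E_1}$ and $\tau_{E_2}$ is a pure QCM. Since $\tau_{E_2}$ is pure, Lemma~\ref{lemma pure reduction} applied to the bipartition $E_2\,|\,(AE_1)$ gives $\gamma_{AE} = \tau_{E_2}\oplus\gamma_{AE_1}$, and $\gamma_{AE_1}$ is pure by the same determinant argument. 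This establishes (a), with the $\eta_{A_2}$ summand absent in the reduced case (or restored by the first paragraph's reduction), and (c). Finally, (b) $n_{A} = n_{E_1}$ follows because $\gamma_{AE_1}$ is a pure QCM and purity of a bipartite pure QCM with \emph{both} reductions strictly above the uncertainty bound forces the two local dimensions to coincide: indeed, for a pure bipartite state the two reduced states have equal (symplectic) spectra up to symplectic eigenvalues equal to $1$ — this is the Gaussian Schmidt decomposition — and since neither $\gamma_A = V_A$ nor $\gamma_{E_1}$ has any symplectic eigenvalue equal to $1$, their symplectic spectra coincide outright, whence $n_A = n_{E_1}$. The main obstacle I anticipate is precisely this last point: one must either invoke the Gaussian Schmidt/standard-form decomposition of pure bipartite Gaussian states, or reprove the matching of local symplectic spectra directly (e.g. via the fact that for pure $\gamma_{AE_1}$ one has $\gamma_{A} = \Omega_A S_A^\intercal S_A\Omega_A^\intercal$-type relations linking the two reductions), and care is needed to extract the dimension equality cleanly rather than just the spectral equality.
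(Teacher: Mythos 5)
Your proposal is correct and follows the same skeleton as the paper's proof: Williamson's decomposition for the splitting $V_A = V_{A_1}\oplus\eta_{A_2}$, then two applications of Lemma~\ref{lemma pure reduction} (once to peel off $\eta_{A_2}$, once to peel off $\tau_{E_2}$ after applying the first part to $\gamma_E$). The only genuine divergence is in step (b), which you correctly identified as the crux. You invoke the Gaussian Schmidt decomposition (equality of the local symplectic spectra of a pure bipartite Gaussian state up to eigenvalues equal to $1$), which does work here since neither marginal has a unit symplectic eigenvalue, but it imports a fairly heavy external result. The paper instead gives a lighter, self-contained rank argument: writing $\gamma_{A_1E_1} = \left(\begin{smallmatrix} V_{A_1} & L \\ L^\intercal & \gamma_{E_1}\end{smallmatrix}\right)$ and using the pure-state Schur complement identity $\gamma_{A_1E_1}/\gamma_{E_1} = \Omega V_{A_1}^{-1}\Omega^\intercal$ from~\cite[Eq.~(8)]{Lami16}, one gets $L\gamma_{E_1}^{-1}L^\intercal = V_{A_1}-\Omega V_{A_1}^{-1}\Omega^\intercal$, whose right-hand side has full rank $2n_{A_1}$ precisely because $V_{A_1}>i\Omega_{A_1}$ (Lemma~\ref{lemma gamma sharp}); hence $2n_{E_1}\geq 2n_{A_1}$, and symmetry gives equality. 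This is essentially the "relations linking the two reductions" route you sketched as an alternative, and it only requires a rank count rather than full spectral matching — worth preferring if you want the lemma to rest on tools already established in the paper.
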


\begin{proof}
The first claim is a direct consequence of Williamson's decomposition, Lemma~\ref{lemma Williamson}. The subspace $\Sigma_{A_2}$ corresponds to those symplectic eigenvalues of $V_A$ that are equal to $1$. 

Now, let us prove the second claim. Consider an arbitrary pure QCM $\gamma_{AE}$ that satisfies $\gamma_A=V_A=V_{A_1}\oplus \eta_{A_2}$. Since in particular $\gamma_{A_2}=\eta_{A_2}$, we can apply Lemma~\ref{lemma pure reduction} and conclude that $\gamma_{AE}=\gamma_{A_1 E} \oplus \eta_{A_2}$. The first claim of the present lemma tells us that $\gamma_{E}=\gamma_{E_1}\oplus \tau_{E_2}$, with $\gamma_{E_1}> i\Omega_{E_1}$ and $\tau_{E_2}$ pure. Again, Lemma~\ref{lemma pure reduction} yields $\gamma_{AE}=\gamma_{A_1E_1}\oplus \eta_{A_2}\oplus \tau_{E_2}$, corresponding to statement (b). Hence, we have only to show that $n_{A_1}=n_{E_1}$. In order to show this, let us write
\begin{equation*}
\gamma_{A_1 E_1} = \begin{pmatrix} V_{A_1} & L \\ L^\intercal & \gamma_{E_1} \end{pmatrix} .
\end{equation*}
We can invoke~\cite[Equation (8)]{Lami16} to deduce the identity $V_{A_1} - L\gamma_{E_1}^{-1} L^\intercal = \Omega V_{A_1}^{-1} \Omega^\intercal$, that is, $L \gamma_{E_1}^{-1} L^\intercal = V_{A_1} - \Omega V_{A_1}^{-1} \Omega^\intercal$. Since the right hand side has maximum rank $2n_{A_1}$ thanks to the strict inequality $V_{A_1}>i\Omega$ (see the forthcoming Lemma~\ref{lemma gamma sharp}), we conclude that $2n_{E_1}\geq \rk \left( L \gamma_{E_1}^{-1} L^\intercal \right) = 2 n_{A_1}$, and hence $n_{E_1}\geq n_{A_1}$. But the same reasoning can be applied with $A_1$ and $E_1$ exchanged, thus giving $n_{A_1}\geq n_{E_1}$, which concludes the proof.
\end{proof}


If one wants to use Gaussian states to transmit and manipulate quantum
information, the role of measurements is of course central. We remind the reader
that a measurement in quantum theory is represented by a \emph{positive operator-valued measure} (POVM)  $E(dx)$ over a measurable
space $X$. Performing this measurement on a quantum state with
density matrix $\rho$ yields an outcome in $X$ according to the probability
distribution $p(dx)=\Tr[\rho E(dx)]$~\cite{Holevo book}.
Therefore, it is of prime importance for us to understand how
Gaussian states behave under measurements. Of course, the most natural
and easily implementable measurements are Gaussian as well, meaning that
the $X=\mathds{R}^{2n}$ and the positive operators $E(d^{2n}x)=E(x) d^{2n}x$ are positive
multiples of Gaussian states with a fixed covariance matrix $\sigma$ and
varying first moments $\Tr[E(x) r]\propto x$.
Implementing such a Gaussian measurement on a Gaussian state $\rho$
with a vector of first moments $s$ and a QCM $V$ yields
an outcome $x$ distributed according to a
Gaussian probability distribution
\begin{equation}
p(x) = \frac{2^n e^{-(x-s)^\intercal (V+\gamma)^{-1} (x-s)}}{\sqrt{\det(V+\sigma)}} .
\label{G meas}
\end{equation}
Furthermore, it can be shown that if a bipartite system $AB$ is in a Gaussian
state $\rho_{AB}$ described by a QCM $V_{AB}$ and only the second subsystem
$B$ is subjected to a Gaussian measurement described by a seed QCM $\sigma_B$, the
state of subsystem $A$ after the measurement, given by
$\rho'_A \propto \Tr_B [ \rho_{AB} \left( \text{id}_A\otimes E_B(x)\right) ]$, is again
Gaussian, and described by first moments depending on the measurement outcome,
but by a fixed QCM, given by the Schur complement~\cite{nogo1,nogo2,nogo3}
\begin{equation}
V'_B = (V_{AB} + 0_A\oplus\sigma_B) / (V_B +\sigma_B) .
\label{QCM after meas}
\end{equation}

Equation~\eqref{G meas} shows how quantum Gaussian states reproduce classical Gaussian probability
distributions when measured with Gaussian measurements. Thus, thanks
to the connection outlined in Section~\ref{sec:intro}, log-det entropies
become relevant in the quantum case as well, since they reproduce Shannon
entropies of the experimentally accessible measurement outcomes.
One could also wonder, whether the log-det entropy given in~\eqref{logdetent}
can be interpreted directly at the density operator level. To understand how this can be done, let us recall the notion of \emph{quantum R\'enyi-$\alpha$ entropy}
of a state $\rho$, given by
\begin{equation}
S_\alpha(\rho) \coloneqq \frac{1}{1-\alpha} \ln \Tr[ \rho^{\alpha}] .
\label{Renyi ent}
\end{equation}
Interestingly, it can be shown that for an arbitrary Gaussian state with QCM $V$
it holds
\begin{equation}
S_2(\rho) = \frac12 \ln \det V = M(V) = h(\xi) - n (\ln \pi  + 1),
\label{Renyi-2 G}
\end{equation}
i.e.~the \mbox{R\'enyi-$2$} entropy {\em coincides} with the log-det entropy
defined in~\eqref{logdetent}~\cite{Adesso12}, and  these quantities in turn coincide, up to an additive constant, with the differential entropy $h(\xi)$ of the classical Gaussian variable $\xi \in \mathds{R}^{2n}$ whose probability distribution is precisely the Wigner function $W_\rho (\xi)$ of the quantum Gaussian state $\rho$. In fact, \mbox{R\'enyi-$2$} quantifiers have repeatedly been shown to be useful in quantum optics, the underlying reason being that Gaussian states are particularly well-behaved when measures respecting their quadratic nature are employed~\cite{weedbrook12}.

Note that in general it is not advisable to form entropy expressions from R\'enyi entropies, since they do not obey any nontrivial constraints in a general multi-partite system~\cite{LMW}. In information theory, this is addressed by defining directly well-behaved notions of conditional R\'enyi entropy and R\'enyi mutual information~\cite{Tomamichel-book}. Here, we evade those issues as we are restricting to Gaussian states. In fact, as discussed in Section~\ref{sec:intro}, thanks to their special structure Gaussian states satisfy also \mbox{R\'enyi-$2$} entropic inequalities.
Not surprisingly, such inequalities find several applications in continuous variable
quantum information, in particular limiting the performances of quantum
protocols with Gaussian states.
For example, as demonstrated in~\cite{Adesso,Kor}, there is no Gaussian
state of a $(n_A+n_B+n_C)$-mode system $ABC$ that is simultaneously $A\rightarrow C$
and $B\rightarrow C$ steerable by Gaussian measurements when $n_C=1$. At the level of QCMs, this is a consequence
of the (non-balanced) inequality
\begin{equation}
M(V_{AC})+ M(V_{BC}) - M(V_A)-M(V_B)\geq 0,
\label{SSA purified}
\end{equation}
to be obeyed by all tripartite QCMs $V_{ABC}$. We stress that
\eqref{SSA purified} can not hold for all positive definite $V$ (that is, for all
classical covariance matrices), as it can be easily seen by
rescaling it via $V\mapsto kV$, for $k>0$. However, the new matrix
$V$ becomes unphysical for sufficiently small $k$, as it violates
the uncertainty principle~\eqref{Heisenberg}.


\subsection{Applications to SSA and entanglement quantification}

We are now ready to apply our results to strengthening the SSA inequality~\eqref{SSA} in the quantum case.
This subsection is thus devoted to finding
a sensible lower bound on the log-det conditional mutual information for
all QCMs. This bound will be given by a quantity called
\emph{\mbox{R\'enyi-$2$} Gaussian entanglement of formation}, already introduced
and studied in~\cite{Adesso12}. In general, for a bipartite quantum state $\rho_{AB}$, the \emph{R\'enyi-$\alpha$
entanglement of formation} is defined as the convex hull of the
R\'enyi-$\alpha$ entropy of entanglement defined on pure states
\cite{horodecki2009quantum}, i.e.
\begin{equation}
\begin{split}
  E_{F,\alpha}(A:B)_{\rho}
    &\coloneqq \inf \sum_i p_i \, S_{\alpha}\bigl(\psi_i^A\bigr) \\
           &\quad\,\text{ s.t. } \rho_{AB} = \sum_i p_i \psi_i^{AB} ,
\end{split}
  \label{EoF}
\end{equation}
where $\psi_{i}^{AB}$ are density matrices of pure states,
$\psi_i^A = \Tr_B \psi_i^{AB}$ is the reduced state (marginal), and $S_\alpha$
is defined in~\eqref{Renyi ent}.

For quantum Gaussian states, an upper bound to this quantity can be
derived by restricting the decompositions appearing in the above infimum
to be comprised of pure Gaussian states only. One obtains what is called
{\em Gaussian R\'enyi-$\alpha$ entanglement of formation}, a monotone under Gaussian local operations and classical communication, that in terms of
the QCM $V_{AB}$ of $\rho_{AB}$ is given by the simpler formula~\cite{Wolf03}
\begin{equation}
\begin{split}
  E^{\text{G}}_{F,\alpha}(A:B)_{V}
    &= \inf S_{\alpha}(\gamma_{A}) \\
    &\quad \text{ s.t. } \gamma_{AB} \text{ pure QCM and } \gamma_{AB}\leq V_{AB},
\end{split}
  \label{GEoF}
\end{equation}
where with a slight abuse of notation we denoted with $S_{\alpha}(W)$
the R\'enyi-$\alpha$ entropy of a Gaussian state with QCM $W$,
and $\gamma_{AB}$ stands for the QCM of a pure Gaussian state, i.e.~with $\det\gamma_{AB}=1$.
Incidentally, it has been proven~\cite{EoF symmetric G,Giovadd} that for some $2$-mode
Gaussian states the formula~\eqref{GEoF} reproduces exactly
\eqref{EoF}, i.e.~Gaussian decompositions in~\eqref{EoF} are globally optimal.

The most commonly used $E_{F,\alpha}$ is the one corresponding to
the von Neumann entropy, $\alpha=1$. However, as we already saw,
\mbox{R\'enyi-$2$} quantifiers arise quite naturally in the Gaussian setting, because by virtue of~\eqref{Renyi-2 G} they reproduce Shannon entropies of measurement outcomes, cf.~(\ref{G meas}). Thus, from now on
we will focus on the case $\alpha=2$. Under this assumption,
thanks to~\eqref{Renyi-2 G} we see that~\eqref{GEoF} becomes
\begin{equation}
\begin{split}
  E^{\text{G}}_{F,2}(A:B)_{V}
    &= \inf M(\gamma_{A}) \\[0.8ex]
    &\quad \text{ s.t. } \gamma_{AB} \text{ pure QCM and } \gamma_{AB}\leq V_{AB} .
\end{split}
  \label{G R2 EoF}
\end{equation}
We will find it convenient to rewrite the above equation in a slightly different form.
Using the well-known fact that $M(\gamma_A)=M(\gamma_B)=\frac12 I_M(A:B)_\gamma$
when $\gamma_{AB}$ is the QCM of a pure state~\cite{Adesso14}, we obtain
\begin{equation}
\begin{split}
  E^{\text{G}}_{F,2}(A:B)_{V}
    &= \inf \frac12 I_M(A:B)_{\gamma} \\[0.8ex]
    &\quad \text{ s.t. } \gamma_{AB} \text{ pure QCM and } \gamma_{AB}\leq V_{AB} .
\end{split}
  \label{G R2 EoF alt}
\end{equation}

The entanglement measure~\eqref{GEoF} is known to be faithful on quantum Gaussian states,
i.e.~it becomes zero if and only if the Gaussian state with QCM $V_{AB}$ is separable. Furthermore, in~\cite{Lami16} it was proven that
the Gaussian \mbox{R\'enyi-$2$} entanglement of formation obeys the notable inequality
\begin{equation}
  E^{\text{G}}_{F,2}(A:B)_V \leq \frac12 I_{M}(A:B)_V ,
  \label{noi}
\end{equation}
that in turn allows to prove useful {\em monogamy} properties of~\eqref{G R2 EoF}, captured by the inequality
\begin{equation}
  E^{\text{G}}_{F,2}(A:B_1\ldots B_n)_V \geq  \sum_{j=1}^n E^{\text{G}}_{F,2}(A:B_j)_V,
  \label{CKW}
  \end{equation}
  for any multipartite Gaussian state with QCM $V_{AB_1 \ldots B_n}$.




We are now in position to apply some of the tools we have been developing
so far to prove a generalisation of the inequality~\eqref{noi} that
is of interest to us since it constitutes also a strengthening of~\eqref{SSA}. Before coming to the main result of this subsection, we remind the reader of a useful result that extends~\cite[Lemma 13 (Supplemental material)]{Lami16}. Besides being a versatile tool to be employed throughout the rest of this section, the following lemma starts to show how fruitful the application of matrix analysis tools in quantum optics can be.

\begin{lemma} \label{lemma gamma sharp}
Let $K>0$ be a positive matrix. Then $\gamma_K^\# \equiv K\#(\Omega K^{-1}\Omega^\intercal)$ is a pure QCM. Furthermore, $K> i\Omega$ if and only if $K> \Omega K^{-1} \Omega^\intercal$, if and only if $K>\gamma_K^\#$.
\end{lemma}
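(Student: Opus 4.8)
The plan is to establish the two assertions separately. For the first, I would use the characterisation (a one-line consequence of Williamson's decomposition, Lemma~\ref{lemma Williamson}, or quotable from the Gaussian-states literature) that a positive definite $2n \times 2n$ matrix $V$ is a pure QCM precisely when it obeys the \emph{purity relation} $V^{-1} = \Omega V \Omega^\intercal$: taking determinants this forces $\det V = 1$ (recall $\det\Omega = 1$), and rewriting it as $\Omega = V\Omega V$ one gets $V + i\Omega = V^{1/2}\bigl(\mathds{1} + i\, V^{1/2}\Omega V^{1/2}\bigr)V^{1/2}\geq 0$, since one checks that $\bigl(V^{1/2}\Omega V^{1/2}\bigr)^2 = -\mathds{1}$. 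It then suffices to verify the purity relation for $\gamma_K^\#$. The tools are the inversion identity $(A\# B)^{-1} = A^{-1}\# B^{-1}$ (immediate from \eqref{geom expl}), the symmetry $A\# B = B\# A$ (e.g.\ from the geodesic-midpoint interpretation \eqref{geom geod}), and the congruence covariance \eqref{geom cov congr}. Applying \eqref{geom cov congr} with $S = \Omega$ and simplifying with $\Omega^2 = -\mathds{1}$ gives $\Omega\gamma_K^\#\Omega^\intercal = (\Omega K\Omega^\intercal)\# K^{-1}$; on the other hand $(\gamma_K^\#)^{-1} = K^{-1}\#(\Omega K\Omega^\intercal)$ by the inversion identity together with $(\Omega K^{-1}\Omega^\intercal)^{-1} = \Omega K\Omega^\intercal$. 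By symmetry of $\#$ these two are equal, i.e.\ $(\gamma_K^\#)^{-1} = \Omega\gamma_K^\#\Omega^\intercal$, which is the purity relation. (As a cross-check, $\det\gamma_K^\# = 1$ also follows at once from \eqref{det geom}.)

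For the equivalences I would reduce everything to symplectic eigenvalues. By Williamson, write $K = S\Delta S^\intercal$ with $\Delta = \lmatrix D & 0 \\ 0 & D \rmatrix$, $D > 0$ diagonal. Since the two diagonal blocks of $\Delta$ coincide, $\Omega\Delta^{-1}\Omega^\intercal = \Delta^{-1}$, whence (using the symplectic identities rearranged from $S\Omega S^\intercal = \Omega$) $\Omega K^{-1}\Omega^\intercal = S\Delta^{-1}S^\intercal$. As congruence by the invertible $S$ preserves the strict L\"owner order, $K > \Omega K^{-1}\Omega^\intercal \iff \Delta > \Delta^{-1} \iff d_j > 1$ for every symplectic eigenvalue $d_j \iff K > i\Omega$, the last step being the usual reading of the uncertainty bound in Williamson form. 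This gives the first ``iff''. For the second I would use the elementary fact that, for $A,B > 0$, $A > B \iff A > A\# B$: conjugating the right-hand inequality by $A^{-1/2}$ turns it into $\mathds{1} > \bigl(A^{-1/2}BA^{-1/2}\bigr)^{1/2}$, equivalent by strict monotonicity of the square root to $\mathds{1} > A^{-1/2}BA^{-1/2}$, i.e.\ $A > B$; applying this with $A = K$ and $B = \Omega K^{-1}\Omega^\intercal$ gives $K > \Omega K^{-1}\Omega^\intercal \iff K > \gamma_K^\#$, closing the chain.

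The only point that requires any care is having the purity relation $V^{-1} = \Omega V\Omega^\intercal$ on hand as a characterisation of pure QCMs — a short deduction from Williamson's decomposition, or a citation. Everything else is routine manipulation of the geometric mean and of Williamson normal forms, so I do not expect a genuine obstacle.
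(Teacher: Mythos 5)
Your proof is correct, but it reaches the first assertion by a genuinely different route. The paper computes $\gamma_K^\#$ \emph{explicitly}: inserting the Williamson form $K = S\Delta S^\intercal$ and using $\Omega S^\intercal = S^{-1}\Omega$, $[\Omega,\Delta]=0$, the congruence covariance~\eqref{geom cov congr} and $\Delta\#\Delta^{-1}=\mathds{1}$, it finds $\gamma_K^\# = SS^\intercal$, from which purity is immediate and both strict inequalities $K>\gamma_K^\#$ and $K>\Omega K^{-1}\Omega^\intercal$ drop out as $S\Delta S^\intercal > SS^\intercal$ and $S\Delta S^\intercal > S\Delta^{-1}S^\intercal$. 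You instead verify the purity relation $V^{-1}=\Omega V\Omega^\intercal$ abstractly, using only the algebra of the geometric mean (inversion $(A\#B)^{-1}=A^{-1}\#B^{-1}$, symmetry, congruence covariance with $S=\Omega$), never computing $\gamma_K^\#$; this is clean and coordinate-free, at the price of first having to establish (or quote) that the purity relation characterises pure QCMs — which itself is a short Williamson argument, so the decomposition is not really avoided, only relocated. For the chain of equivalences your reduction to symplectic eigenvalues coincides with the paper's; the one difference is the last link, where you invoke the general fact $A>B\iff A>A\#B$ rather than the identity $\gamma_K^\#=SS^\intercal$. What the paper's route buys is the explicit formula $\gamma_K^\#=SS^\intercal$ itself, which is more information than the lemma states; what your route buys is a reusable purity criterion and a geometric-mean argument that would survive in settings where one prefers not to diagonalise. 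Both are complete proofs.
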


\begin{proof}
We can follow the same steps as in the proof of~\cite[Lemma 13 (Supplemental material)]{Lami16}. Namely, we apply Lemma~\ref{lemma Williamson} to decompose $K=S \Delta S^T$, where $S$ is symplectic and $\Delta$ diagonal. Then, we deduce that
\begin{align*}
\gamma_K^\# &= (S\Delta S^\intercal) \# \left( \Omega S^{-\intercal} \Delta^{-1} S^{-1} \Omega^\intercal \right) \\
&\texteq{(i)} (S\Delta S^\intercal) \# \left( S \Omega \Delta^{-1} \Omega^\intercal S^\intercal \right) \\
&\texteq{(ii)} (S\Delta S^\intercal) \# \left( S \Delta^{-1} S^\intercal \right) \\
&\texteq{(iii)} S \left( \Delta \# \Delta^{-1} \right) S^\intercal \\
&\texteq{(iv)} SS^\intercal ,
\end{align*}
where we used, in order: (i) the identity $\Omega S^\intercal = S^{-1} \Omega$, valid for all symplectic $S$; (ii) the fact that $[\Omega,\Delta]=0$, which is a consequence of Lemma~\ref{lemma Williamson}; (iii) the congruence covariance of the geometric mean,~\eqref{geom cov congr}; and (iv) the elementary observation that $\Delta \# \Delta^{-1}=\mathds{1}$, as follows from the explicit formula~\eqref{geom expl}.
Then, it is easy to observe that $\gamma_{K}^\# $ is the QCM of a pure Gaussian state. The inequality $K>i\Omega$ translates to $\Delta > \mathds{1}$, and in turn to $K=S\Delta S^\intercal > SS^\intercal = \gamma_K^\#$, or alternatively to $\Delta>\Delta^{-1}$ and thus to $K = S \Delta S^\intercal > S \Delta^{-1} S^\intercal = \Omega K^{-1} \Omega^\intercal $. This latter condition can already be found in~\cite[Lemma 1]{sep 3-mode}.
\end{proof}

\begin{thm}
  \label{thm I cond G R2 EoF}
  For all tripartite QCMs $V_{ABC}\geq i\Omega_{ABC}$,
  it holds that
  \begin{equation}
    \frac12 I_{M}(A:B|C)_{V} \geq E^{\text{\emph{G}}}_{F,2}(A:B)_{V} .
    \label{ext I con}
  \end{equation}
\end{thm}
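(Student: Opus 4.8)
The plan is to reduce the statement to a fact about \emph{pure} tripartite QCMs, where everything can be computed in closed form, and then to use a purification/recovery argument to lift it to the general case. The key idea is that $E^{\text{G}}_{F,2}(A:B)_V$ is, by~\eqref{G R2 EoF alt}, an infimum of $\tfrac12 I_M(A:B)_\gamma$ over pure QCMs $\gamma_{AB}\leq V_{AB}$; so it suffices to exhibit \emph{one} admissible $\gamma_{AB}$ with $\tfrac12 I_M(A:B)_\gamma \leq \tfrac12 I_M(A:B|C)_V$, and the natural candidate is built from the $C$-system of $V_{ABC}$.

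First I would invoke Lemma~\ref{lemma pur} to purify $V_{ABC}$ to a pure QCM $\gamma_{ABCE}$ on $ABCE$. On a pure four-party QCM one has $I_M(A:B|C)_\gamma = I_M(A:BCE)_\gamma - I_M(A:CE)_\gamma$ via the elementary identities $M(\gamma_{ABC})=M(\gamma_E)$ etc.\ (using $\det\gamma=1$ and $\det\gamma_S=\det\gamma_{S^c}$ for any subset $S$), and since the global state is pure the reduced QCM $\gamma_{AC}$ equals $V_{AC}$, so $I_M(A:B|C)_V = I_M(A:B|C)_\gamma$ matches $I_M(A:BCE)_\gamma - I_M(A:CE)_\gamma$. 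The point of passing to the purification is that the ``environment'' $CE$ of the pure state, together with the structure of Lemma~\ref{lemma fact out} and Lemma~\ref{lemma gamma sharp}, lets me write down an explicit pure $\gamma'_{AB}$: take the reduced QCM of $A$ inside a suitable pure state of $AB'$ obtained by applying a Gaussian (Petz-type) recovery channel $C\to BC$ to the pure extension $\gamma_{ACE}$ of $V_{AC}$. Concretely, the map $N_{C\to BC}$ with Stinespring/Gaussian data $(H_R,K_R)$ as in~\eqref{gaus Petz eq2} applied to $V_{AC}$ produces $\tilde V_{ABC}$ of~\eqref{V tilde}; dilating this channel unitarily produces a pure QCM $\gamma^{\text{rec}}_{ABC\tilde E}$ whose $AB$-marginal is a valid seed $\gamma_{AB}$ with $\gamma_{AB}\leq V_{AB}$ (the inequality because the recovered state is a legitimate candidate dominated by $V_{AB}$, using that Gaussian channels on $C$ increase $V_{AC}/V_C$, Theorem~\ref{thm satur}(5)$\Rightarrow$(2)-style reasoning, together with the variational characterization~\eqref{Schur var}).

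Granting such a $\gamma_{AB}$, the remaining step is to bound $\tfrac12 I_M(A:B)_{\gamma}$ from above by $\tfrac12 I_M(A:B|C)_V$. Here I would use that for the pure recovered state $I_M(A:B)_\gamma = 2M(\gamma_A)$ and that $\gamma_A$ is, up to purity-fixing by Lemma~\ref{lemma gamma sharp}, controlled by the Schur complement $V_{AC}/V_C$: one expects $M(\gamma_A) \leq \tfrac12 I_M(A:C)$-type term plus a conditional correction that telescopes. A cleaner route is to combine the monotonicity of $I_M$ under the recovery channel with the relative-entropy-of-recovery identity~\eqref{SSA+}: the recovered QCM $\tilde V_{ABC}$ satisfies $I_M(A:B|C)_V = D(V_{ABC}\|\tilde V_{ABC})\geq 0$, and one shows $E^{\text{G}}_{F,2}(A:B)_{\tilde V} = \tfrac12 I_M(A:B)_{\tilde V}$ because $\tilde V$ is a ``recoverable'' (Markov) QCM for which the Gaussian EoF is attained; then apply monotonicity of $E^{\text{G}}_{F,2}$ under the local Gaussian channel that sends $V_{ABC}$-related data appropriately, or directly estimate $E^{\text{G}}_{F,2}(A:B)_V \leq E^{\text{G}}_{F,2}(A:B)_{\tilde V} + \big(\tfrac12 I_M(A:B)_V - \tfrac12 I_M(A:B)_{\tilde V}\big)$ and simplify using~\eqref{SSA+}.

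The main obstacle I anticipate is the admissibility of the candidate seed, i.e.\ proving rigorously that the pure QCM $\gamma_{AB}$ extracted from the Gaussian recovery/Stinespring dilation really satisfies $\gamma_{AB}\leq V_{AB}$ and not merely $\gamma_A \leq V_A$; this requires carefully combining the Schur-complement variational bound~\eqref{Schur var}, the monotonicity of $V_{AC}/V_C$ under channels on $C$ (from~\cite{Lami16}), and the purity constraint handled by Lemma~\ref{lemma gamma sharp} and Lemma~\ref{lemma fact out} (in particular the degenerate case $V_{A_1}>i\Omega_{A_1}$ vs.\ pure directions must be treated separately, since the recovery map is only well-defined after factoring out the pure part). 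A secondary technical point is matching symplectic dimensions of the dilating environment $\tilde E$ with $E$ so that the purification comparison is legitimate, which is exactly what Lemma~\ref{lemma fact out}(b) is designed to supply. Once admissibility is established, the final inequality $\tfrac12 I_M(A:B)_\gamma \leq \tfrac12 I_M(A:B|C)_V$ should follow from~\eqref{SSA+} together with $D(\cdot\|\cdot)\geq 0$ and the explicit determinantal identities for $\det\tilde V_{ABC}$ already displayed after~\eqref{SSA+ fid2}.
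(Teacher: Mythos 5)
You have identified the right first move---by \eqref{G R2 EoF alt} it suffices to exhibit a single pure QCM $\gamma_{AB}\leq V_{AB}$ with $\tfrac12 I_M(A:B)_\gamma\leq\tfrac12 I_M(A:B|C)_V$---and you correctly anticipate that admissibility of the seed is the crux. However, the candidate you construct does not deliver an admissible seed, and the concluding estimate is not established. First, the $AB$-marginal of a Stinespring dilation $\gamma^{\mathrm{rec}}_{ABC\tilde E}$ of the recovered state is in general a \emph{mixed} QCM (a marginal of a pure QCM is pure only when the state factorises, cf.\ Lemma~\ref{lemma pure reduction}), so it cannot be inserted into \eqref{G R2 EoF alt}. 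Second, even setting purity aside, the recovered matrix $\tilde V_{ABC}$ of \eqref{V tilde} has the same diagonal blocks $A,B$ as $V_{ABC}$ but off-diagonal block $YC^{-1}Z^\intercal$ in place of $X$, and there is no reason for $\tilde V_{AB}\leq V_{AB}$; the monotonicity of $V_{AC}/V_C$ under Gaussian channels on $C$ controls the $AC$ Schur complement, not the ordering of the $AB$ marginals. Third, the closing step rests on two unsupported claims: for the Markov QCM $\tilde V$ one has $I_M(A:B|C)_{\tilde V}=0$, so the very theorem you are proving forces $E^{\mathrm{G}}_{F,2}(A:B)_{\tilde V}=0$ rather than $\tfrac12 I_M(A:B)_{\tilde V}$; and the ``triangle'' estimate $E^{\mathrm{G}}_{F,2}(A:B)_V\leq E^{\mathrm{G}}_{F,2}(A:B)_{\tilde V}+\tfrac12 I_M(A:B)_V-\tfrac12 I_M(A:B)_{\tilde V}$ is nowhere justified. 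Moreover the difference $\tfrac12 I_M(A:B)_V-\tfrac12 I_M(A:B)_{\tilde V}=\tfrac12\bigl(M(\tilde V_{AB})-M(V_{AB})\bigr)$ does not match $\tfrac12 I_M(A:B|C)_V=\tfrac12\bigl(M(\tilde V_{ABC})-M(V_{ABC})\bigr)$, so the intended telescoping fails.

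For comparison, the paper's proof avoids recovery maps entirely: it takes as seed the explicit pure QCM $\gamma_{AB}=\gamma^{\#}_{V_{ABC}/V_C}=(V_{ABC}/V_C)\#\bigl(\Omega(V_{ABC}/V_C)^{-1}\Omega^\intercal\bigr)$ of Lemma~\ref{lemma gamma sharp}. Admissibility $\gamma_{AB}\leq V_{AB}$ follows from the two one-sided bounds $V_{ABC}/V_C\leq V_{AB}$ and $\Omega(V_{ABC}/V_C)^{-1}\Omega^\intercal\leq V_{AB}$ (the latter from \cite[Theorem 3]{Lami16}) together with monotonicity of the geometric mean, and the final bound $\tfrac12 I_M(A:B)_\gamma\leq\tfrac12 I_M(A:B|C)_V$ follows from the geodesic convexity \eqref{I conv geod} combined with the identities \eqref{I cond Schur} and \eqref{I cond inv}. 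Any attempt to salvage the recovery-based route would in effect have to reprove these matrix inequalities, so I would recommend adopting the geometric-mean construction directly.
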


\begin{proof}
We employ a similar trick to the one used in~\cite{Lami16}: for any QCM $V_{ABC}$, using the notation of Lemma~\ref{lemma gamma sharp} define
\begin{equation}
  \gamma_{AB} \coloneqq \gamma^\#_{V_{ABC}/V_{C}} .
\end{equation}
Since $V_{ABC}/V_{C}>0$ by the positivity conditions~\eqref{Schur pos}, we see that $\gamma_{AB}$ is a pure QCM. Now we proceed to show that $\gamma_{AB}\leq V_{AB}$. On the one hand, the very definition of Schur complement implies that $V_{ABC}/V_{C}\leq V_{AB}$,
while on the other hand a special case of~\cite[Theorem 3]{Lami16} gives us the
general inequality $V_{ABC}/V_{C}\geq \Omega V_{AB}^{-1}\Omega^{\intercal}$,
i.e.~$\Omega (V_{ABC}/V_{C})^{-1} \Omega^{\intercal}\leq V_{AB}$.
Since the geometric mean is well-known to be monotonic~\cite{Ando79},
we obtain $\gamma_{AB}\leq V_{AB}$. This shows that $\gamma_{AB}$ can be used as an ansatz in~\eqref{G R2 EoF alt}.
We can write
\begin{align*}
     &E^{\text{G}}_{F,2}(A:B)_{V} \\[0.8ex]
     &\quad\leq \frac12 I_M(A:B)_\gamma \\[0.8ex]
     &\quad = \frac12 I_M(A:B)_{(V_{ABC}/V_{C})\# (\Omega (V_{ABC}/V_{C})^{-1}\Omega^\intercal ) } \\[0.8ex]
     &\quad\textleq{(i)} \frac14 I_M(A:B)_{V_{ABC}/V_{C}} + \frac14 I_M(A:B)_{\Omega (V_{ABC}/V_{C})^{-1}\Omega^\intercal} \\[0.8ex]
     &\quad\texteq{(ii)} \frac14 I_M(A:B)_{V_{ABC}/V_{C}} + \frac14 I_M(A:B)_{(V_{ABC}/V_{C})^{-1}} \\[0.8ex]
     &\quad\texteq{(iii)} \frac14 I_M(A:B|C)_V + \frac14 I_M(A:B|C)_V \\[0.8ex]
     &\quad= \frac12 I_M(A:B|C)_V ,
\end{align*}
where we employed, in order: (i) the convexity of log-det mutual information on the trace metric geodesics~\eqref{I conv geod}, (ii) the obvious fact that since $\Omega_{AB}=\Omega_A \oplus \Omega_B$, the equality $I_M(A:B)_{\Omega W \Omega^\intercal }=I_M(A:B)_W$ holds true; and (iii) the identity~\eqref{I cond Schur} for the first term and~\eqref{I cond inv} followed again by~\eqref{I cond Schur} for the second.
\end{proof}

\subsection{Gaussian \mbox{R\'enyi-$2$} squashed entanglement} \label{subsec Gauss sq}

In finite-dimensional quantum mechanics, the positivity of conditional
mutual information allows to construct a powerful entanglement measure
called \emph{squashed entanglement}, defined for a bipartite state
$\rho_{AB}$ by~\cite{CW04}
\begin{equation}
  E_{\text{sq}}(A:B)_{\rho} \coloneqq \inf_{\rho_{ABC}} \frac12 I(A:B|C)_{\rho} ,
\end{equation}
where the infimum ranges over all possible ancillary quantum systems $C$
and over all the possible states $\rho_{ABC}$ having marginal $\rho_{AB}$.
We are now in position to discuss a similar quantity tailored to Gaussian states.
First, we can restrict the infimum by considering only Gaussian extensions,
which corresponds to the step leading from~\eqref{EoF} to~\eqref{GEoF}.
Secondly, as it was done to arrive at~\eqref{G R2 EoF}, we can substitute
von Neumann entropies with \mbox{R\'enyi-$2$} entropies. The result is
\begin{equation}
  E^{\text{G}}_{\text{sq},2}(A:B)_{V} \coloneqq \inf_{V_{ABC}} \frac12 I_{M}(A:B|C)_{V},
  \label{Gauss sq}
\end{equation}
where the infimum is on all extended QCMs $V_{ABC}$ satisfying the condition
$\Pi_{AB} V_{ABC}\Pi_{AB}^\intercal = V_{AB}$ on the $AB$ marginal (and~\eqref{Heisenberg}).
We dub the quantity in~\eqref{Gauss sq} {\em Gaussian \mbox{R\'enyi-$2$} squashed
entanglement}, stressing that it is a quantifier specifically tailored to Gaussian states
and different from the R\'enyi squashed entanglement defined
in~\cite{Wilde} for general states, where an alternative expression
for the conditional R\'enyi-$\alpha$ mutual information is adopted instead.

Despite the complicated appearance of the expression~\eqref{Gauss sq}, it turns out that \emph{the Gaussian \mbox{R\'enyi-$2$} squashed entanglement coincides with the Gaussian \mbox{R\'enyi-$2$} entanglement of formation for all bipartite QCMs}. This unexpected fact shows once more that \mbox{R\'enyi-$2$} quantifiers are particularly well behaved when employed to analyse Gaussian states, while at the same time it provides us with a novel, alternative expression of $E^{\text{G}}_{F,2}$ that can be used to understand its basic properties in a different, and sometimes more intuitive, way. Before stating the main result of this subsection, we need some preliminary results.

\begin{lemma} \label{lemma follia 0}
Let $\gamma_{AB}$ be a pure QCM of a bipartite system $AB$ such that $n_A=n_B=n$ and $\gamma_A>i\Omega_A$. Then
\begin{equation*}
\left(\gamma_{AB}+ i \Omega_{AB}\right) \big/ \left( \gamma_A + i\Omega_A \right) = 0_B .
\end{equation*}
\end{lemma}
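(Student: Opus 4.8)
The plan is to compute the Schur complement in the statement explicitly and to show it has rank zero, by combining the rank-additivity of Schur complements~\eqref{rank add} with the purity characterisation $\rk(\gamma + i\Omega)=(\text{number of modes})$ recalled after Lemma~\ref{lemma Williamson}. This will in particular make transparent why the hypothesis $n_A=n_B$ is exactly what is needed.

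First I would verify that the Schur complement is well-defined, i.e.\ that the pivot block $\gamma_A+i\Omega_A$ is invertible. Since $\gamma_A$ is real symmetric and $\Omega_A$ real antisymmetric, both $\gamma_A-i\Omega_A$ and $\gamma_A+i\Omega_A$ are Hermitian, and they are interchanged by entrywise complex conjugation; hence $\gamma_A>i\Omega_A$, i.e.\ $\gamma_A-i\Omega_A>0$, forces also $\gamma_A+i\Omega_A>0$, so the pivot is invertible with full rank $2n_A$. Since $\Omega_{AB}=\Omega_A\oplus\Omega_B$ is block diagonal, $\gamma_A+i\Omega_A=\Pi_A(\gamma_{AB}+i\Omega_{AB})\Pi_A^\intercal$ is genuinely the $A$-principal submatrix of $\gamma_{AB}+i\Omega_{AB}$, so the Schur complement $(\gamma_{AB}+i\Omega_{AB})/(\gamma_A+i\Omega_A)$ is meaningful, and it is a $2n_B\times 2n_B$ Hermitian matrix.

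The core of the argument is then a rank count. On the one hand, $\gamma_{AB}$ being a pure QCM on $n_A+n_B$ modes gives $\rk(\gamma_{AB}+i\Omega_{AB})=n_A+n_B$. On the other hand, the congruence $S^\dagger(\gamma_{AB}+i\Omega_{AB})S=(\gamma_A+i\Omega_A)\oplus\bigl((\gamma_{AB}+i\Omega_{AB})/(\gamma_A+i\Omega_A)\bigr)$ by the invertible matrix $S=\lmatrix \mathds{1} & -(\gamma_A+i\Omega_A)^{-1}L \\ 0 & \mathds{1}\rmatrix$, where $L$ is the $A$-$B$ off-diagonal block of $\gamma_{AB}$, preserves rank and hence yields the rank-additivity~\eqref{rank add} in this (merely positive semidefinite, still Hermitian) setting as well, since that identity only requires an invertible pivot. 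Combining the two, $\rk\bigl((\gamma_{AB}+i\Omega_{AB})/(\gamma_A+i\Omega_A)\bigr)=(n_A+n_B)-2n_A=n_B-n_A$, which vanishes precisely because $n_A=n_B$; a matrix of rank $0$ is the zero matrix, so the Schur complement equals $0_B$.

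There is no serious obstacle; the only two points deserving a line of justification are that the rank-additivity~\eqref{rank add}, stated for positive definite matrices, extends verbatim here (the congruence argument above uses only invertibility of the pivot block), and the elementary implication $\gamma_A>i\Omega_A\Rightarrow\gamma_A+i\Omega_A>0$. If one prefers to avoid invoking the purity characterisation, the equality $\rk(\gamma_{AB}+i\Omega_{AB})=n_A+n_B$ can instead be read off from purity written as $(\Omega_{AB}\gamma_{AB})^2=-\mathds{1}_{AB}$, which shows $\Omega_{AB}\gamma_{AB}$ has eigenvalues $\pm i$ each with multiplicity $n_A+n_B$; but citing the characterisation already recalled in the text is cleaner.
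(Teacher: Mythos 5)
Your proof is correct and follows essentially the same route as the paper's: both count ranks, using $\rk(\gamma_{AB}+i\Omega_{AB})=n_A+n_B$ for a pure QCM together with the full rank $2n_A$ of the pivot $\gamma_A+i\Omega_A$ and rank-additivity under Schur complementation. Your additional remarks (invertibility of the pivot via complex conjugation, and the validity of~\eqref{rank add} in the Hermitian setting with invertible pivot) merely make explicit details the paper leaves implicit.
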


\begin{proof}
From Williamson's decomposition, Lemma~\ref{lemma Williamson}, we see that whenever $\gamma_{AB}$ is pure one has $\rk(\gamma_{AB}+i\Omega_{AB}) = n_A + n_B = 2n$ (i.e. half the maximum). Since already $\rk(\gamma_A + i\Omega_A) = 2n$, the additivity of ranks under Schur complements~\eqref{rank add} tells us that $\rk \left( \left(\gamma_{AB}+ i \Omega_{AB}\right) \big/ \left( \gamma_A + i\Omega_A \right) \right) = 0$, concluding the proof.
\end{proof}

\begin{prop} \label{follia prop}
Let $V_{AB}$ be a QCM of a bipartite system, and let $\gamma_{ABC}$ be a fixed purification of $V_{AB}$ (see Lemma~\ref{lemma pur}). Then, for all pure QCMs $\tau_{AB} \leq V_{AB}$ there exists a one-parameter family of pure QCMs $\sigma_C (t)$ (where $0<t\leq 1$) on $C$ such that
\begin{equation}
\gamma'_{AB}(t)\coloneqq \left(\gamma_{ABC} + 0_{AB} \oplus \sigma_C(t) \right) \big/ \left( \gamma_C +\sigma_C(t) \right)
\label{follia prop eq}
\end{equation}
is a pure QCM for all $t>0$, and $\lim_{t\rightarrow 0^+} \gamma'_{AB}(t) = \tau_{AB}$. Equivalently, there is a sequence of Gaussian measurements on $C$, identified by pure seeds $\sigma_C(t)$, such that the QCM of the post-measurement state on $AB$ is pure and tends to $\tau_{AB}$ (see~\eqref{QCM after meas}). 
\end{prop}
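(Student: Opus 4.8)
The plan is to reduce, via the structural lemmas above, to a ``balanced and strictly mixed'' configuration in which the $AB$--$C$ off-diagonal block of $\gamma_{ABC}$ is a square invertible matrix; there the Schur complement in \eqref{follia prop eq} can be inverted and the defining requirement $\gamma'_{AB}(t)=\tau_{AB}(t)$ solved explicitly for the seed $\sigma_C(t)$, where $\tau_{AB}(t)$ is a family of pure QCMs approaching $\tau_{AB}$ from strictly inside $V_{AB}$.

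First I would carry out the reductions. If $V_{AB}\not> i\Omega_{AB}$, Lemma~\ref{lemma fact out} applied to $AB$ as a single system writes $V_{AB}=V_{(AB)_1}\oplus\eta_{(AB)_2}$ with $V_{(AB)_1}>i\Omega_{(AB)_1}$ and $\eta_{(AB)_2}$ pure. The purification $\gamma_{ABC}$ then has pure marginal $\eta_{(AB)_2}$ on $(AB)_2$, so Lemma~\ref{lemma pure reduction} gives $\gamma_{ABC}=\gamma_{(AB)_1C}\oplus\eta_{(AB)_2}$ with $\gamma_{(AB)_1C}$ a purification of $V_{(AB)_1}$; moreover $\tau_{AB}\leq V_{AB}$ forces its $(AB)_2$-marginal to have unit determinant and hence to equal $\eta_{(AB)_2}$, so Lemma~\ref{lemma pure reduction} again gives $\tau_{AB}=\tau_{(AB)_1}\oplus\eta_{(AB)_2}$ with $\tau_{(AB)_1}\leq V_{(AB)_1}$ pure. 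Since the $\eta_{(AB)_2}$ summand is inert under any measurement on $C$, we may assume $V_{AB}>i\Omega_{AB}$. Next, Lemma~\ref{lemma fact out} applied to the purification $\gamma_{ABC}$ of this $V_{AB}$ splits $C=C_1C_2$ with $\gamma_{ABC}=\gamma_{ABC_1}\oplus\tau_{C_2}$, $\tau_{C_2}$ pure, $\gamma_{C_1}>i\Omega_{C_1}$ and $n_{C_1}=n_{AB}$; measuring $C_2$ with a vacuum seed does nothing to $AB$, so we may further assume $n_C=n_{AB}$ and $\gamma_C>i\Omega_C$ (extending the constructed $\sigma_C(t)$ to $C_2$ by a pure seed at the very end).

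In this reduced setting write $\gamma_{ABC}=\lmatrix V_{AB} & L \\ L^\intercal & \gamma_C \rmatrix$. Lemma~\ref{lemma follia 0}, applied with the roles $A\mapsto AB$, $B\mapsto C$ (legitimate since $n_{AB}=n_C$ and $\gamma_{AB}=V_{AB}>i\Omega_{AB}$), gives $(\gamma_{ABC}+i\Omega_{ABC})/(V_{AB}+i\Omega_{AB})=0_C$, i.e. $\gamma_C+i\Omega_C=L^\intercal(V_{AB}+i\Omega_{AB})^{-1}L$. As $V_{AB}\pm i\Omega_{AB}>0$ and $\gamma_C+i\Omega_C$ has full rank, the square matrix $L$ is invertible, and by \eqref{Schur definition} the matrix in \eqref{follia prop eq} equals $\gamma'_{AB}(\sigma_C)=V_{AB}-L(\gamma_C+\sigma_C)^{-1}L^\intercal$ for any seed $\sigma_C$. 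I would then put $\tau_{AB}(t)\coloneqq\tau_{AB}\#_t\gamma^\#_{V_{AB}}$ for $t\in(0,1]$: by Lemma~\ref{lemma gamma sharp} one has $\gamma^\#_{V_{AB}}<V_{AB}$, the weighted geometric mean of two pure QCMs is again a pure QCM (unit determinant by \eqref{det geom}, and it still dominates $i\Omega_{AB}$), monotonicity of $\#_t$ gives $\tau_{AB}(t)<V_{AB}$ for $t>0$, and $\tau_{AB}(t)\to\tau_{AB}$ as $t\to0^+$. Finally define $\sigma_C(t)\coloneqq L^\intercal\bigl(V_{AB}-\tau_{AB}(t)\bigr)^{-1}L-\gamma_C$.

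It then remains to check three points. (ii) By construction $\gamma'_{AB}(t)=V_{AB}-L(\gamma_C+\sigma_C(t))^{-1}L^\intercal=V_{AB}-(V_{AB}-\tau_{AB}(t))=\tau_{AB}(t)$, a pure QCM for every $t>0$; and (iii) $\tau_{AB}(t)\to\tau_{AB}$. The delicate point is (i), that $\sigma_C(t)$ is itself a pure QCM: eliminating $\gamma_C$ via the identity above gives $\sigma_C(t)=L^\intercal\bigl[(V_{AB}-\tau_{AB}(t))^{-1}-(V_{AB}+i\Omega_{AB})^{-1}\bigr]L+i\Omega_C$, whence $\sigma_C(t)\geq i\Omega_C$ follows (anti-monotonicity of matrix inversion, plus invertibility of $L$) from $V_{AB}-\tau_{AB}(t)\leq V_{AB}+i\Omega_{AB}$, i.e. from $\tau_{AB}(t)\geq-i\Omega_{AB}$, valid for any real QCM; and using $X^{-1}-Y^{-1}=X^{-1}(Y-X)Y^{-1}$ one gets $\rk(\sigma_C(t)-i\Omega_C)=\rk(\tau_{AB}(t)+i\Omega_{AB})=n_{AB}=n_C$, so by the characterisation of pure QCMs through $\rk(V\pm i\Omega)=n$ the matrix $\sigma_C(t)$ is pure. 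Re-attaching the $\eta_{(AB)_2}$ and $\tau_{C_2}$ summands set aside in the reductions then yields the proposition in general. I expect (i) to be the main obstacle: everything hinges on Lemma~\ref{lemma follia 0}, which both makes $L$ invertible and, through the formula for $\gamma_C+i\Omega_C$, converts the purity of the forced seed $\sigma_C(t)$ into the already established purity of $\tau_{AB}(t)$; arranging the reductions so that $n_C=n_{AB}$ and $\gamma_C>i\Omega_C$ hold simultaneously also requires some care.
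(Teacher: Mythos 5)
Your proof is correct and follows essentially the same route as the paper's: the same reductions via Lemmas~\ref{lemma fact out} and~\ref{lemma pure reduction} to the case $V_{AB}>i\Omega_{AB}$, $n_C=n_{AB}$, $\gamma_C>i\Omega_C$, the same approximating family $\tau_{AB}(t)=\tau_{AB}\#_t\gamma^{\#}_{V_{AB}}$, and the same explicit seed $\sigma_C(t)=L^\intercal(V_{AB}-\tau_{AB}(t))^{-1}L-\gamma_C$, whose purity is verified through Lemma~\ref{lemma follia 0} exactly as in the paper. The only immaterial difference is that you obtain the invertibility of $L$ directly from the full rank of $\gamma_C+i\Omega_C$ in the Lemma~\ref{lemma follia 0} identity, whereas the paper derives it from $\Omega V_R^{-1}\Omega^\intercal=V_R-L\gamma_{C_1}^{-1}L^\intercal$ combined with Lemma~\ref{lemma gamma sharp}.
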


\begin{proof}
Let us start by applying Lemma~\ref{lemma fact out} to decompose the symplectic space of $AB$ as $\Sigma_{AB}=\Sigma_R \oplus \Sigma_S$ in such a way that $V_{AB}=V_R \oplus \eta_S$, where $V_R > i\Omega_R$ and $\eta_S$ is a pure QCM. According to Lemma~\ref{lemma fact out}, the purification $\gamma_{ABC}$ can be taken to be of the form $\gamma_{ABC}=\gamma_{RC_{1}} \oplus \eta_S\oplus \delta_{C_{2}}$, with $\gamma_{C_{1}}>i\Omega_{C_{1}}$, $n_{C_{1}}=n_R$, and $\delta_{C_{2}}$ pure. If $\tau\leq V$ is a pure QCM, a projection onto $\Sigma_S$ reveals that $\tau_S = \Pi_S \tau \Pi_S^\intercal \leq \eta_S$. Since $\tau_S$ must be a legitimate QCM, and pure states are minimal within the set of QCMs, we deduce that $\tau_S=\eta_S$. Then, an application of Lemma~\ref{lemma pure reduction} allows us to conclude that $\tau = \tau_R \oplus \eta_S$, and accordingly $\tau_R \leq V_R$.

We claim that for all pure $\tau_R < V_R$ there is a pure QCM $\sigma_{C_{1}}$ such that
\begin{equation}
\left(\gamma_{RC_{1}} + 0_{R} \oplus \sigma_{C_{1}} \right) \big/ \left( \gamma_{C_{1}} +\sigma_{C_{1}} \right) = \tau_R .
\label{follia prop eq2}
\end{equation}
Constructing the extension $\sigma_{C}\coloneqq \sigma_{C_{1}}\oplus \tilde{\sigma}_{C_{2}}$, where $\tilde{\sigma}_{C_{2}}$ is an arbitrary pure QCM, we see that~\eqref{follia prop eq2} can be rewritten as 
\begin{equation}
\left(\gamma_{ABC} + 0_{AB} \oplus \sigma_{C} \right) \big/ \left( \gamma_{C} +\sigma_{C} \right) = \tau_R \oplus \eta_{S} .
\label{follia prop eq2bis}
\end{equation}
In fact, adding the ancillary system $C_{2}$ does not produce any effect on the Schur complement, since there are no off-diagonal block linking $C_{2}$ with any other subsystem. Analogously, the $S$ component of the $AB$ system can be brought out of the Schur complement because it is in direct sum with the rest.

In light of~\eqref{follia prop eq2bis}, we know that once~\eqref{follia prop eq2} has been established, in~\eqref{follia prop eq} we can achieve all QCMs $\gamma'$ that can be written as $\tau_{R}\oplus \eta_{S}$, with $\tau_{R}<V_{R}$. It is not difficult to see that this would allow us to conclude. Before proving~\eqref{follia prop eq2}, let us see why. The main point here is that every pure QCM $\tau_R\leq V_R$ can be thought of as the limit of a sequence of pure QCMs $\tau_R(t)< V_R$. An explicit formula for such a sequence reads $\tau_R(t) = \tau_R \#_t \gamma_{V_R}^\#$, where $\gamma_{V_R}^\#$ is the pure QCM defined in Lemma~\ref{lemma gamma sharp}, and $\#_t$ denotes the weighted geometric mean~\eqref{geom geod}. Observe that: (i) $\tau_R(t)$ is a QCM since it is known that the set of QCMs is closed under weighted geometric mean~\cite[Corollary 8]{Bhatia symplectic eig}; (ii) $\tau_R(t)$ is in fact a pure QCM, because according to~\eqref{det geom} its determinant satisfies $\det \tau_R(t) = \left(\det \tau_R\right)^{1-t} \big(\det \gamma_{V_R}^\# \big)^t = 1$; (iii) $\lim_{t\rightarrow 0^+} \tau_R(t)=\tau_R$ as can be seen easily from~\eqref{geom geod}; and (iv) $\tau_R(t)< V_R$ for all $t>0$. This latter fact can be justified as follows. Since $V_R > i\Omega_R$, from Lemma~\ref{lemma gamma sharp} we deduce $\gamma_{V_R}^\# < V_R$. Taking into account that $\tau_R \leq V_R$, the claim follows from the strict monotonicity of the weighted geometric mean, in turn an easy consequence of~\eqref{geom geod}.

Now, let us prove~\eqref{follia prop eq2}. We start by writing
\begin{equation*}
\gamma_{RC_{1}} = \begin{pmatrix} V_R & L \\ L^\intercal & \gamma_{C_{1}} \end{pmatrix} ,
\end{equation*}
where $V_R>i\Omega_R$, $\gamma_{C_{1}}> i\Omega_{C_{1}}$, and the off-diagonal block $L$ is square. As a matter of fact, more is true, namely that $L$ is also invertible. The simplest way to see this involves two ingredients: (a) the identity $\Omega V_{R}^{-1} \Omega^\intercal = \gamma_{RC_{1}} / \gamma_{C_{1}} = V_{R} - L \gamma_{C_{1}}^{-1} L^\intercal$, easily seen to be a special case of~\cite[Equation (8)]{Lami16}; and (b) the fact that $V_R> \Omega V_R^{-1} \Omega^\intercal$ because of Lemma~\ref{lemma gamma sharp}. Combining these two ingredients we see that
\begin{equation*}
V_{R} > \Omega V_{R}^{-1} \Omega^\intercal = V_{R} - L \gamma_{C_{1}}^{-1} L^\intercal ,
\end{equation*}
which implies $L \gamma_{C_{1}}^{-1} L^\intercal>0$ and in turn the invertibility of $L$. Now, for a pure QCM $\tau_R< V_R$, take $\sigma_{C_{1}} = L^\intercal (V_R-\tau_R)^{-1} L - \gamma_{C_{1}}$. On the one hand, 
\begin{align*}
\big(\gamma_{RC_{1}} \!+ 0_R\!\oplus\! \sigma_{C_{1}}\big) \big/ \big(\gamma_{C_{1}} \!+ \sigma_{C_{1}}\big) &= V_R - L \left(\gamma_{C_{1}}\! + \sigma_{C_{1}}\right)^{-1}\! L^\intercal \\[0.8ex]
&= \tau_R
\end{align*}
by construction. On the other hand, write
\begin{align*}
\sigma_{C_{1}} \!- i\Omega_{C_{1}} &= L^\intercal (V_R - \tau_R)^{-1} L - (\gamma_{C_{1}} + i\Omega_{C_{1}}) \\[0.8ex]
&= L^\intercal (V_R - \tau_R)^{-1} L - L^\intercal (V_R + i \Omega_R)^{-1} L \\[0.8ex]
&= L^\intercal \left( (V_R - \tau_R)^{-1} - (V_R + i \Omega_R)^{-1} \right) L \\[0.8ex]
&= L^\intercal (V_R - \tau_R)^{-1} \\
&\quad \cdot \left( (V_R + i \Omega_R) - (V_R - \tau_R) \right) \\
&\quad \cdot (V_R + i \Omega_R)^{-1} L \\[0.8ex]
&= L^\intercal (V_R - \tau_R)^{-1} \left( \tau_R + i \Omega_R \right) (V_R + i \Omega_R)^{-1} L ,
\end{align*}
where we employed Lemma~\ref{lemma follia 0} in the form $\gamma_{C_{1}} + i \Omega_{C_{1}} = L^\intercal (V_R + i \Omega_R)^{-1} L$ and performed some elementary algebraic manipulations. Now, from the third line of the above calculation it is clear that $\sigma_{C_{1}} - i \Omega_{C_{1}}\geq 0$, since from $V_R - i \Omega_R \geq V_R - \tau_R > 0$ we immediately deduce $(V_R - \tau_R)^{-1} \geq (V_R + i \Omega_R)^{-1}$. This shows that $\sigma_{C_{1}}$ is a valid QCM. Moreover, observe that
\begin{align*}
&\rk \left(\sigma_{C_{1}} - i \Omega_{C_{1}}\right) \\[0.8ex]
&\quad = \rk \left( L^\intercal (V_R - \tau_R)^{-1} \left( \tau_R + i \Omega_R \right) (V_R + i \Omega_R)^{-1} L \right) \\[0.8ex]
&\quad= \rk \left( \tau_R + i \Omega_R \right) \\
&\quad= n_R \\[0.8ex]
&\quad= n_{C_{1}} ,
\end{align*}
which tells us that $\sigma_{C_{1}}$ is also a pure QCM.
\end{proof}

Now, we are ready to state the conclusive result of the present paper.

\begin{thm}
  \label{thm GSq=GEoF}
  For all bipartite QCMs $V_{AB}\geq i\Omega_{AB}$, the Gaussian \mbox{R\'enyi-$2$} squashed
  entanglement coincides with the Gaussian \mbox{R\'enyi-$2$} entanglement of formation, i.e.
  \begin{equation}
  E^{\text{\emph{G}}}_{\text{\emph{sq}},2}(A:B)_{V} = E_{F,2}^{\text{\emph{G}}}(A:B)_V .
  \label{GSq=GEoF}
  \end{equation}
\end{thm}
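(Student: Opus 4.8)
The plan is to establish the two inequalities between $E^{\text{G}}_{\text{sq},2}(A:B)_V$ and $E^{\text{G}}_{F,2}(A:B)_V$ separately. The bound $E^{\text{G}}_{\text{sq},2}(A:B)_V \ge E^{\text{G}}_{F,2}(A:B)_V$ is immediate from Theorem~\ref{thm I cond G R2 EoF}: for every QCM extension $V_{ABC}$ of $V_{AB}$ we have $\frac12 I_{M}(A:B|C)_V \ge E^{\text{G}}_{F,2}(A:B)_{V_{AB}}$, and since the right-hand side depends only on the marginal $V_{AB}$ and not on the chosen extension, taking the infimum over all $V_{ABC}$ in~\eqref{Gauss sq} gives the claim.

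For the reverse inequality I would produce, from an arbitrary feasible point of the variational problem~\eqref{G R2 EoF alt}, a sequence of extensions of $V_{AB}$ whose log-det conditional mutual information converges to the corresponding value. Fix a pure QCM $\tau_{AB}\le V_{AB}$, choose a purification $\gamma_{ABC}$ of $V_{AB}$ via Lemma~\ref{lemma pur}, and let $\sigma_C(t)$ (for $0<t\le1$) be the family of pure seeds supplied by Proposition~\ref{follia prop}, so that $\gamma'_{AB}(t)=\bigl(\gamma_{ABC}+0_{AB}\oplus\sigma_C(t)\bigr)\big/\bigl(\gamma_C+\sigma_C(t)\bigr)$ is a pure QCM for every $t>0$ and $\gamma'_{AB}(t)\to\tau_{AB}$ as $t\to 0^+$. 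The key observation is that $V^{(t)}_{ABC}\coloneqq \gamma_{ABC}+0_{AB}\oplus\sigma_C(t)$ is itself a legitimate QCM extension of $V_{AB}$: adding the positive semidefinite matrix $0_{AB}\oplus\sigma_C(t)$ preserves the uncertainty relation $\gamma_{ABC}\ge i\Omega_{ABC}$, and since it acts only on the $C$ block one has $\Pi_{AB}V^{(t)}_{ABC}\Pi_{AB}^\intercal=V_{AB}$. In effect, the Gaussian measurement on the conditioning system appearing in Proposition~\ref{follia prop} is absorbed into the extension simply by adding seed noise to $C$.

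Given this, the estimate is short: by construction $V^{(t)}_{ABC}\big/V^{(t)}_C=\gamma'_{AB}(t)$, so identity~\eqref{I cond Schur} gives $I_{M}(A:B|C)_{V^{(t)}}=I_{M}(A:B)_{\gamma'_{AB}(t)}$, and continuity of the log-det mutual information on positive definite matrices yields $\lim_{t\to 0^+}I_{M}(A:B|C)_{V^{(t)}}=I_{M}(A:B)_{\tau_{AB}}$. Hence $E^{\text{G}}_{\text{sq},2}(A:B)_V\le \inf_{t>0}\frac12 I_{M}(A:B|C)_{V^{(t)}}\le \frac12 I_{M}(A:B)_{\tau_{AB}}$, and taking the infimum over all pure QCMs $\tau_{AB}\le V_{AB}$ and comparing with~\eqref{G R2 EoF alt} gives $E^{\text{G}}_{\text{sq},2}(A:B)_V\le E^{\text{G}}_{F,2}(A:B)_V$. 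Together with the first part this proves~\eqref{GSq=GEoF}.

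The main obstacle is essentially already overcome by Proposition~\ref{follia prop} and the lemmas leading to it: the hard part is to realise, for any pure $\tau_{AB}\le V_{AB}$, pure QCMs on $AB$ arbitrarily close to $\tau_{AB}$ as Schur complements of bona fide tripartite QCMs sharing the marginal $V_{AB}$ — equivalently, as QCMs of post-measurement states of a fixed purification. Once that is in hand, what remains is bookkeeping: verifying that the noisy purification $\gamma_{ABC}+0_{AB}\oplus\sigma_C(t)$ is a valid QCM with $AB$-marginal $V_{AB}$, invoking~\eqref{I cond Schur}, and passing to the limit $t\to0^+$ — the last step being harmless, since each $\gamma'_{AB}(t)$ is positive definite (indeed a pure QCM) and converges to the positive definite matrix $\tau_{AB}$, on which $I_{M}(A:B)_{\cdot}$ is continuous.
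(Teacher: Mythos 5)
Your proposal is correct and follows essentially the same route as the paper: the lower bound via Theorem~\ref{thm I cond G R2 EoF}, and the upper bound by feeding the extensions $\gamma_{ABC}+0_{AB}\oplus\sigma_C(t)$ from Proposition~\ref{follia prop} into~\eqref{Gauss sq}, using identity~\eqref{I cond Schur} and continuity of $I_M$ to pass to the limit. Your explicit check that these extensions are legitimate QCMs with the correct $AB$-marginal is a point the paper only asserts as obvious, but the argument is the same.
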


\begin{proof}
The inequality $E^{\text{G}}_{\text{sq},2}(A:B)_V \geq E_{F,2}^{\text{G}}(A:B)_V$ is an easy consequence
of~\eqref{ext I con} together with~\eqref{Gauss sq}. To show the converse, we employ the expression~\eqref{G R2 EoF alt} for the Gaussian R\'enyi-2 entanglement of formation. Consider an arbitrary purification $\gamma_{ABC}$ of $V_{AB}$, and pick a pure state $\tau_{AB}\leq V_{AB}$. By construction, we have $\gamma_{AB}=V_{AB}$. Now, thanks to Proposition~\ref{follia prop} one can construct a sequence of measurements identified by $\sigma_C(t)$ such that~\eqref{follia prop eq2} holds. Then, we have
\begin{align*}
&\frac12 I_M(A:B)_\tau \\[0.8ex]
&\quad= \frac12 I_M(A:B)_{\lim_{t\rightarrow 0^+} \left(\gamma_{ABC} + 0_{AB} \oplus \sigma_C(t) \right) / \left( \gamma_C +\sigma_C(t) \right) } \\
&\quad\texteq{(i)} \lim_{t\rightarrow 0^+} \frac12 I_M(A:B)_{\left(\gamma_{ABC} + 0_{AB} \oplus \sigma_C(t) \right) / \left( \gamma_C +\sigma_C(t) \right) } \\
&\quad\texteq{(ii)} \lim_{t\rightarrow 0^+} \frac12 I_M(A:B|C)_{\gamma_{ABC} + 0_{AB} \oplus \sigma_C(t)} \\
&\quad\textgeq{(iii)} E^{\text{G}}_{\text{sq},2}(A:B)_V ,
\end{align*}
where we used, in order: (i) the continuity of the log-det mutual information; (ii) the identity~\eqref{I cond Schur}; and (iii) the fact that the QCMs $\gamma_{ABC} + 0_{AB} \oplus \sigma_C(t)$ constitute valid extensions of $V_{AB}$, thus being legitimate ansatzes in~\eqref{Gauss sq}.
\end{proof}

\begin{rem}
A by-product of the above proof of Theorem~\ref{thm GSq=GEoF} is that in~\eqref{Gauss sq} we can restrict ourselves to systems of bounded size $n_C \leq n_{AB} = n_A + n_B$. Moreover, up to limits the extension can be taken of the form $\gamma_{ABC} + 0_{AB} \oplus \sigma_C$, where $\gamma_{ABC}$ is a fixed purification of $V_{AB}$ and $\sigma_C$ is a pure QCM.
\end{rem}

This surprising identity between two seemingly very different entanglement measures, even though tailored to Gaussian states, is remarkable. On the one hand, it provides an interesting operational interpretation for the Gaussian \mbox{R\'enyi-$2$} entanglement of formation in terms of log-det conditional mutual information, via the recoverability framework. On the other hand, it simplifies the notoriously difficult evaluation of the squashed entanglement, in this case restricted to Gaussian extensions and log-det entropy, because it recasts it as an optimisation of the form~\eqref{G R2 EoF}, which thus involves matrices of bounded instead of unbounded size (more precisely, of the same size as the mixed QCM whose entanglement is being computed).
In general, Theorem~\ref{thm GSq=GEoF} allows us to export useful properties between the two frameworks it connects. For instance, it follows from the identity~\eqref{GSq=GEoF} that the Gaussian \mbox{R\'enyi-$2$} squashed entanglement is faithful on Gaussian states and a monotone under Gaussian local operations and classical communication; in contrast, proving the property of faithfulness for the standard squashed entanglement was a very difficult step to perform~\cite{Brandao11}. On the other hand, the arguments establishing many basic properties of the standard squashed entanglement can be imported from~\cite{CW04} and applied to~\eqref{Gauss sq}, providing new proofs of the same properties for the Gaussian \mbox{R\'enyi-$2$} entanglement of formation. Let us give an example of how effective is the interplay between the two frameworks by providing an alternative, one-line proof of the following result~\cite{Lami16}

\begin{lemma}~\cite[Corollary 7]{Lami16}
The Gaussian \mbox{R\'enyi-$2$} entanglement of formation is monogamous on arbitrary Gaussian states, i.e.
\begin{equation}
E_{F,2}^{\text{\emph{G}}}(A:BC) \geq E_{F,2}^{\text{\emph{G}}}(A:B) + E_{F,2}^{\text{\emph{G}}}(A:C) ,
\label{monogamy GEoF}
\end{equation}
and analogously for more than three parties.
\end{lemma}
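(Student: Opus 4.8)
\emph{Proof proposal.} The plan is to obtain monogamy directly from the operational identity of Theorem~\ref{thm GSq=GEoF}, reproducing in the Gaussian R\'enyi-$2$ setting the standard argument for the monogamy of squashed entanglement~\cite{CW04}. The first ingredient is the chain rule for the log-det conditional mutual information: for every $W = W_{ABCF} > 0$ one has
\begin{equation*}
  I_{M}(A:BC|F)_{W} = I_{M}(A:B|F)_{W} + I_{M}(A:C|BF)_{W} ,
\end{equation*}
which follows at once from the definition~\eqref{I_2-matrix}, since on the right-hand side all determinants except $\det W_{AF}$, $\det W_{BCF}$, $\det W_{F}$ and $\det W_{ABCF}$ telescope away.

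Next I would fix a tripartite QCM $V_{ABC}\geq i\Omega_{ABC}$ and let $V_{ABCF}$ be an arbitrary QCM extension of it, i.e.~one with $\Pi_{ABC} V_{ABCF} \Pi_{ABC}^{\intercal} = V_{ABC}$. Discarding the $C$ block shows that $V_{ABF}$ is a valid extension of $V_{AB}$ with ancilla $F$, while $V_{ABCF}$ itself is a valid extension of $V_{AC}$ with ancilla $BF$; in both cases the relevant matrices are principal submatrices of a QCM, hence again QCMs, so they are admissible in the infimum~\eqref{Gauss sq}. Therefore
\begin{equation*}
  \tfrac12 I_{M}(A:B|F)_{V} \geq E^{\text{G}}_{\text{sq},2}(A:B)_{V} , \qquad
  \tfrac12 I_{M}(A:C|BF)_{V} \geq E^{\text{G}}_{\text{sq},2}(A:C)_{V} .
\end{equation*}
Halving the chain rule, combining these two bounds, and taking the infimum over all extensions $V_{ABCF}$ of $V_{ABC}$ yields
\begin{equation*}
  E^{\text{G}}_{\text{sq},2}(A:BC)_{V} \geq E^{\text{G}}_{\text{sq},2}(A:B)_{V} + E^{\text{G}}_{\text{sq},2}(A:C)_{V} ,
\end{equation*}
and an application of Theorem~\ref{thm GSq=GEoF} to each of the three terms turns this into~\eqref{monogamy GEoF}. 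The version for more than three parties follows by applying the same chain-rule splitting iteratively, or equivalently by induction on the number of parties $B, C, \ldots$.

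There is essentially no obstacle here: once Theorem~\ref{thm GSq=GEoF} is in hand, the only verifications needed are the elementary telescoping identity for $I_M$ and the fact that a principal submatrix of a QCM is again a QCM, so that the candidate extensions used in the two lower bounds are legitimate. All the genuine difficulty has been absorbed into Theorem~\ref{thm GSq=GEoF}; monogamy is then inherited for free from the generic ``subadditivity of correlations'' structure of squashed-type quantities, exactly as in the finite-dimensional case~\cite{CW04}.
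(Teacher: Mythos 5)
Your proposal is correct and follows essentially the same route as the paper: transfer the statement to $E^{\text{G}}_{\text{sq},2}$ via Theorem~\ref{thm GSq=GEoF}, apply the telescoping chain rule for $I_M$, and observe that the restricted matrices are valid extensions admissible in the infimum~\eqref{Gauss sq}, exactly as in~\cite[Proposition 4]{CW04}. The only difference is notational (your ancilla $F$ versus the paper's $E$) and that you spell out the telescoping verification explicitly.
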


\begin{proof}
Thanks to Theorem~\ref{thm GSq=GEoF}, we can prove the monogamy relation~\eqref{monogamy GEoF} for the Gaussian \mbox{R\'enyi-$2$} squashed entanglement instead. We use basically the same argument as in~\cite[Proposition 4]{CW04}. Namely, call $V_{ABC}$ the QCM of the system $ABC$. Then for all extensions $V_{ABCE}$ of $V_{ABC}$ one has
\begin{align*}
I_M(A:BC|E)_V &= I_M(A:B|E)_V + I_M (A:C|BE) \\[0.8ex]
&\geq 2  E^{\text{G}}_{\text{sq},2}(A:B)_V + 2 E^{\text{G}}_{\text{sq},2}(A:C)_V ,
\end{align*}
where we applied the chain rule for the conditional mutual information together with the obvious facts that $V_{ABE}$ is a valid extension of $V_{AB}$ and $V_{ABCE}$ a valid extension of $V_{AC}$.
\end{proof}

A monogamy inequality is a powerful tool in dealing with entanglement measures. For instance, when combined with monotonicity under local operations, it leads to the additivity of the measure under examination.

\begin{cor}
The Gaussian entanglement measure $E_{F,2}^{\text{\emph{G}}} = E^{\text{G}}_{\text{sq},2}$ is additive under tensor products (equivalently, direct sum of covariance matrices). In formulae,
\begin{equation}
\begin{split}
E_{F,2}^{\text{\emph{G}}} (A_1 A_2: B_1 B_2)_{V_{A_1 B_1} \oplus W_{A_2 B_2}} &= E_{F,2}^{\text{\emph{G}}} (A_1: B_1)_{V} \\
&\quad + E_{F,2}^{\text{\emph{G}}} (A_2: B_2)_{W} .
\end{split}
\label{additivity GEoF}
\end{equation}
\end{cor}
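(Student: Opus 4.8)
The plan is to invoke Theorem~\ref{thm GSq=GEoF} to replace $E^{\text{G}}_{F,2}$ by the Gaussian R\'enyi-$2$ squashed entanglement $E^{\text{G}}_{\text{sq},2}$ throughout, and then to prove additivity of the latter by two short and essentially independent arguments, one for each inequality. Write $G \coloneqq V_{A_1 B_1}\oplus W_{A_2 B_2}$, a QCM of $A_1 A_2 B_1 B_2$, and recall that a submatrix or a direct sum of matrices satisfying~\eqref{Heisenberg} again satisfies~\eqref{Heisenberg}, so that the uncertainty constraint in~\eqref{Gauss sq} never requires separate attention below.

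For the inequality ``$\leq$'' I would take arbitrary QCM extensions $V_{A_1 B_1 C_1}$ of $V_{A_1 B_1}$ and $W_{A_2 B_2 C_2}$ of $W_{A_2 B_2}$; then $V_{A_1 B_1 C_1}\oplus W_{A_2 B_2 C_2}$ is a legitimate ansatz for $E^{\text{G}}_{\text{sq},2}(A_1 A_2:B_1 B_2)_{G}$, with ancilla $C_1 C_2$. Since $\ln\det$ is additive under direct sums, all four determinants entering~\eqref{I_2} factorise and one obtains $I_{M}(A_1 A_2:B_1 B_2|C_1 C_2) = I_{M}(A_1:B_1|C_1)_{V} + I_{M}(A_2:B_2|C_2)_{W}$; taking the infimum over the two extensions separately then gives $E^{\text{G}}_{\text{sq},2}(A_1 A_2:B_1 B_2)_{G}\leq E^{\text{G}}_{\text{sq},2}(A_1:B_1)_{V} + E^{\text{G}}_{\text{sq},2}(A_2:B_2)_{W}$.

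For the inequality ``$\geq$'' I would repeat the argument used for the monogamy lemma above. Let $\gamma_{A_1 A_2 B_1 B_2 E}$ be an arbitrary QCM extension of $G$. The chain rule for the log-det conditional mutual information (the algebraic identity in the determinants already invoked in the monogamy proof) gives $I_{M}(A_1 A_2:B_1 B_2|E) = I_{M}(A_1:B_1 B_2|E) + I_{M}(A_2:B_1 B_2|A_1 E)$, and log-det SSA~\eqref{SSA} applied in the form $I_{M}(A_1:B_1 B_2|E) = I_{M}(A_1:B_1|E) + I_{M}(A_1:B_2|B_1 E)\geq I_{M}(A_1:B_1|E)$, and similarly to the second term, yields $I_{M}(A_1 A_2:B_1 B_2|E)_{\gamma}\geq I_{M}(A_1:B_1|E)_{\gamma} + I_{M}(A_2:B_2|A_1 E)_{\gamma}$. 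Since $\gamma_{A_1 B_1 E}$ is a QCM extension of $V_{A_1 B_1}$ (ancilla $E$) and $\gamma_{A_2 B_2 A_1 E}$ is a QCM extension of $W_{A_2 B_2}$ (ancilla $A_1 E$), the right-hand side is at least $2 E^{\text{G}}_{\text{sq},2}(A_1:B_1)_{V} + 2 E^{\text{G}}_{\text{sq},2}(A_2:B_2)_{W}$; infimising over $\gamma$ concludes this direction.

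I do not expect a serious obstacle here: once Theorem~\ref{thm GSq=GEoF} is in hand, both directions rest only on the behaviour of $\ln\det$ under direct sums and on the \emph{unconditional} validity of log-det SSA~\eqref{SSA} for all positive matrices, which is what makes the monotonicity steps in the ``$\geq$'' direction legitimate. The one point worth stating explicitly is that every marginal used as an ansatz is a bona fide QCM; the conceptual point is that one must argue via $E^{\text{G}}_{\text{sq},2}$, since superadditivity of an entanglement-of-formation-type quantity is in general the hard (indeed typically false for the regularised version) direction, whereas for squashed-type quantities it is immediate. The extension to more than two tensor factors follows by the obvious induction.
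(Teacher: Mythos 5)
Your argument is correct, but it is organised differently from the paper's. The paper splits the two inequalities across the two sides of the identity of Theorem~\ref{thm GSq=GEoF}: the ``$\geq$'' direction is obtained by invoking the already-established monogamy inequality~\eqref{monogamy GEoF} with the grouping $A_1A_2:B_1B_2 \to (A_1A_2:B_1)+(A_1A_2:B_2)$ and then discarding $A_2$ (resp.\ $A_1$) via monotonicity of $E_{F,2}^{\text{G}}$ under local partial traces, while the ``$\leq$'' direction is obtained directly from the convex-roof formula~\eqref{GEoF} by inserting factorised pure ansatzes $\gamma_{A_1B_1}\oplus\tau_{A_2B_2}$, with no reference to squashed entanglement at all. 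You instead stay entirely inside the squashed formulation~\eqref{Gauss sq}: your ``$\geq$'' direction re-derives the monogamy step inline from the chain rule and SSA~\eqref{SSA}, with the slightly different grouping $I_M(A_1:B_1|E)+I_M(A_2:B_2|A_1E)$, which has the merit of dispensing with a separate appeal to monotonicity under discarding subsystems (marginals of extensions of $V\oplus W$ are automatically extensions of $V$ and of $W$, including the bookkeeping of the factor $2$); and your ``$\leq$'' direction uses product extensions together with additivity of $\ln\det$ under direct sums --- the standard subadditivity argument for squashed-type measures --- where the paper uses product pure states in the entanglement-of-formation picture. Both routes are sound; yours ultimately channels everything through Theorem~\ref{thm GSq=GEoF}, whereas the paper proves each inequality in the formulation where it is ``native'' (subadditivity for convex roofs, superadditivity for squashed quantities), and your closing remark about which direction is hard in which picture is exactly the point the paper's choice of formulations is implicitly exploiting.
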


\begin{proof}
Applying first~\eqref{monogamy GEoF} and then the monotonicity of $E_{F,2}^{\text{\emph{G}}}$ under the operation of discarding some local subsystems, we obtain
\begin{align*}
&E_{F,2}^{\text{\emph{G}}} (A_1 A_2: B_1 B_2)_{V_{A_1 B_1} \oplus W_{A_2 B_2}} \\[0.8ex]
&\quad \geq E_{F,2}^{\text{\emph{G}}} (A_1 A_2 : B_1)_{V_{A_1 B_1}\oplus W_{A_2}} \\
&\quad\quad + E_{F,2}^{\text{\emph{G}}} (A_1 A_2: B_2)_{V_{A_1} \oplus W_{A_2 B_2}} \\[0.8ex]
&\quad \geq E_{F,2}^{\text{\emph{G}}} (A_1: B_1)_{V} + E_{F,2}^{\text{\emph{G}}} (A_2: B_2)_{W} .
\end{align*}
The opposite inequality follows by inserting factorised ansatzes $\gamma_{A_1 B_1}\oplus \tau_{A_2 B_2}$ into~\eqref{GEoF}.
\end{proof}

As established in this section, the Gaussian \mbox{R\'enyi-$2$} entanglement of formation alias Gaussian \mbox{R\'enyi-$2$} squashed entanglement also emerges as a rare example of an additive entanglement monotone (within the Gaussian framework) which satisfies the general monogamy inequality (\ref{CKW}). We remark that the conventional (R\'enyi-$1$) entanglement of formation cannot fundamentally be monogamous~\cite{Lancien}, while the standard squashed entanglement is monogamous on arbitrary multipartite systems~\cite{Koashi}.

\section{Conclusions}
\label{sec:conchi}
In this paper, we analysed the SSA inequality for the log-det entropy from a matrix analysis viewpoint and explored some of its applications. We first derived new necessary and sufficient conditions for saturation of said inequality. In the context of classical recoverability, we then provided an explicit form for the  Gaussian Petz recovery map and further obtained a strengthening of SSA by constructing a faithful lower bound to a log-det entropy based conditional mutual information. We finally specialised to quantum Gaussian states, for which the log-det entropy reduces to the \mbox{R\'enyi-$2$} entropy, and defined a corresponding Gaussian version of the squashed entanglement measure. Surprisingly, we showed that the latter measure coincides with the \mbox{R\'enyi-$2$} entanglement of formation defined via a Gaussian convex roof construction~\cite{Adesso12}.
In turn, this allows us to build a bridge connecting the two frameworks, that can be used to establish new properties of a measure by looking at the other, or to provide simpler and more instructive proofs of known properties.

This manuscript, following a recent series of contributions~\cite{Adesso12,Adesso,Lami16}, casts further light on the connections between matrix analysis (in particular determinantal inequalities) and information theory in both classical and quantum settings. In future work, within the context of continuous variable quantum information with Gaussian states~\cite{Adesso14}, it could be interesting to establish whether the equivalence between the Gaussian \mbox{R\'enyi-$2$} squashed entanglement defined here and the Gaussian \mbox{R\'enyi-$2$} entanglement of formation defined in~\cite{Adesso12} further extends to a third measure of entanglement, namely the recently introduced Gaussian intrinsic entanglement~\cite{GIE}. It could also be worth exploring whether, for states where Gaussian decompositions attain the global convex roof optimisation for the entanglement of formation (such as symmetric $2$-mode Gaussian states), one could extend our techniques to show that even the standard squashed entanglement defined in terms of von Neumann conditional mutual information~\cite{CW04} may be optimised by Gaussian extensions and perhaps coincide with the conventional entanglement of formation; this would constitute a unique instance of computable squashed entanglement on states very relevant for applications in quantum optics. Finally, while we have studied the classical Gaussian Petz recovery map here, a very recent study has investigated the quantum Petz map for Gaussian states, showing it to be a Gaussian channel~\cite{LL17}. The results and techniques in~\cite{LL17} are however of somewhat different nature than those presented here, since they involve quantum states instead of classical random variables. In this context, it will be interesting to investigate Gaussian measures of more general quantum correlations~\cite{Adesso16} based on the fidelity of recovery after Gaussian entanglement-breaking channels, in analogy to the finite-dimensional case~\cite{Wilde2}.

\ifCLASSOPTIONcaptionsoff
  \newpage
\fi



%

%

\begin{IEEEbiographynophoto}{Ludovico Lami} received a B.Sc. degree in Physics in 2012 and a M.Sc. degree in Theoretical Physics in 2014, both from the Universit\`a di Pisa, Pisa, Italy. In 2015 he received also a Diploma in Physics at the Scuola Normale Superiore, Pisa, Italy. He is currently a Ph.D. student of Andreas Winter at the Universitat Aut\`onoma de Barcelona, Barcelona, Spain. His research interests lie primarily in quantum information, especially with continuous variable, but include also foundational aspects of quantum physics.
\end{IEEEbiographynophoto}

\begin{IEEEbiographynophoto}{Christoph Hirche} received a B.Sc. degree in Physics in 2013 and a M.Sc. degree in Physics in 2015, both from the Leibniz Universit\"{a}t Hannover, Hannover, Germany. He is currently a Ph.D. student at the Universitat Aut\`{o}noma de Barcelona, Barcelona, Spain. His research interests are mostly in mathematical aspects of quantum information theory.
\end{IEEEbiographynophoto}


\begin{IEEEbiographynophoto}{Gerardo Adesso} received a M.Sc. degree in Physics from the University of Salerno, Salerno, Italy, in 2003, and a Ph.D. degree in Physics from the same University in 2007, having spent one third of his Ph.D. in the Centre for Quantum Computation at the University of Cambridge, Cambridge, UK. After post-doctoral positions at Sapienza University of Rome, Rome, Italy, and Universitat Aut\`onoma de Barcelona, Barcelona, Spain, he joined the University of Nottingham, Nottingham, UK, where he was appointed Lecturer in 2009, Associate Professor in 2014, and Professor of Mathematical Physics in 2016. He was awarded an ERC Starting Grant in 2015 and was honoured as Young Scientist by the World Economic Forum in 2016. His research interests include continuous variable quantum information theory and the characterisation of nonclassical resources such as quantum coherence and correlations.\end{IEEEbiographynophoto}

\begin{IEEEbiographynophoto}{Andreas Winter} received a Diploma degree in Mathematics from the Freie
Universit\"at Berlin, Berlin, Germany, in 1997, and a Ph.D. degree from 
the Fakult\"at f\"ur Mathematik, Universit\"at Bielefeld, Bielefeld, Germany, 
in 1999. He was Research Associate at the University of Bielefeld until 
2001, and then with the Department of Computer Science at the University 
of Bristol, Bristol, UK. In 2003, still with the University of Bristol, 
he was appointed Lecturer in Mathematics, and in 2006 Professor of 
Physics of Information. Since 2012 he has been ICREA Research Professor 
with the Universitat Aut\`onoma de Barcelona, Barcelona, Spain. His research 
interests include quantum and classical Shannon theory, and discrete 
mathematics. He is recipient, along with Bennett, Devetak, Harrow and 
Shor, of the 2017 Information Theory Society Paper Award.
\end{IEEEbiographynophoto}




\end{document}